\newtheorem{theorem}{Theorem}
\newtheorem{lemma}[theorem]{Lemma}
\newtheorem{corollary}[theorem]{Corollary}
\newtheorem{definition}[theorem]{Definition}
\DeclareMathOperator{\Plc}{Pl(C)}
\tikzset{box/.style={draw, minimum size=0.5em, text width=0.5em, text centered}}
\title{DNA tile self-assembly for 3D-surfaces: Towards genus identification}
\author{Florent Becker and Shahrzad Heydarshahi}
\begin{document}
	
\maketitle
	
\begin{abstract}
We introduce a new DNA tile self-assembly model: the Surface Flexible Tile Assembly Model (SFTAM), where 2D tiles are placed on host 3D surfaces made of axis-parallel unit cubes glued together by their faces, called polycubes.
The bonds are flexible, so that the assembly can bind on the edges of the polycube. 
We are interested in the study of SFTAM self-assemblies on 3D surfaces which are not always embeddable in the Euclidean plane, in order to compare their different behaviors and to compute the topological properties of the host surfaces.

We focus on a family of polycubes called \emph{order-1 cuboids}.
\emph{Order-0 cuboids} are polycubes that have six rectangular faces, and order-1 cuboids are made from two order-0 cuboids by substracting one from the other. Thus, order-1 cuboids can be of genus~0 or of genus~1 (then they contain a tunnel). 
We are interested in the genus of these structures, and we present a SFTAM tile assembly system that determines the genus of a given order-1 cuboid. 
The SFTAM tile assembly system which we design, contains a specific set $Y$ of tile types with the following properties.
If the assembly is made on a host order-1 cuboid $C$ of genus~0, no tile of $Y$ appears in any producible assembly, but if $C$ has genus~1, every terminal assembly contains at least one tile of $Y$.

Thus, for order-1 cuboids our system is able to distinguish the host surfaces according to their genus, by the tiles used in the assembly. This system is specific to order-1 cuboids but we can expect that the techniques we use to be generalizable to other families of shapes.
\end{abstract}
	
\section{Introduction} 

In this paper, we introduce a new tile self-assembly model in order to perform self-assembly on 3-dimensional surfaces.
The field of tile self-assemby originates in the work of Wang~\cite{wang}, who introduced in 1961 \emph{Wang tiles}, that is, equally sized 2-dimensional unit squares with labels/colors on each edge (later called \emph{glues}) and designed a Turing universal computation model based on these tiles.
In 1998, inspired by Wang tiles and DNA complexes from Seeman's laboratory~\cite{seeman}, Winfree introduced in his PhD thesis~\cite{WinfreeThesis} the \emph{abstract Tile Assembly Model} (aTAM).
This model uses Wang tiling with an extra information: he associated a non-negative integer strength for each glue label. 
The power of DNA self-assembly enables to compute using this model. 
We refer to the survey~\cite{Survey} for more details on the literature, and to the online bibliography of Seeman's laboratory~\cite{SeemanBIB}.

Most of the early work in the DNA tile self-assembly literature deals with rigid assemblies in the Euclidean plane~\cite{Survey,2D} (since the assemblies are discrete, the Euclidean plane is usually seen as the lattice $\mathbb{Z}^2$), which is a natural and simple setting for this model.
However, it can be interesting to use self-assembly in richer settings. 
This has been investigated experimentally for instance in \cite{Hydrogel,Shape-assisted, Fabrication} where the assembly takes place on a preexisting surface and changes according to the surface. 
On the theoretical side, there have been some recent works on DNA tile self-assembly outside the Euclidean plane, such as tile self-assembly in mazes~\cite{maze}, where the tile placement is done on the walls of a certain maze. 
Other types of self-assembly  exist that also do not use the Euclidean plane, for example a model of 
cross-shaped origami tiles~\cite{origami}.
Another type of self-assembly not in the plane is 3D assemblies of complex molecules like crystals~\cite{crystals,crystals2}. Inspired by this, a recent model called \emph{Flexible Tile Assembly Model} (FTAM) was introduced by Durand-Lose et al. in 2020~\cite{FTAM}, as an extension of earlier work~\cite{3D2001}.
Here, we have Wang tiles but they self-assemble (without an input surface) in 3D space (modeled by the lattice $\mathbb{Z}^3$) as they can have, in addition to standard \emph{rigid} bonds, \emph{flexible} bonds that allow tiles to bind at any angle along the tile edges. The goal of the FTAM model is to construct complex 3D structures called \emph{polycubes} (3D shapes made of unit cubes)~\cite{polycubes}.

In 2010, Abel et al.~\cite{Abel} used a variant of the aTAM to implement shape replication, where tiles react to the shape of a preexisting pattern to reproduce it. However, this setting is on the 1 dimensional border of a 2D pattern instead of the 2D surface of 3D objects in our setting. 
In this setting (like in the current work), the main challenge is that the system must react to the shape of the space around, rather than to an external input it can read as it wants.

We are interested in studying what happens if we put the tiles on a given 3D surface that is not necessarily topologically equivalent to the Euclidean plane.
The intuition is that this could modify the computational behaviour of the tile self-assembly model, and we believe it will be interesting for practical systems, as in some practical settings, self-assembly could be performed on complex surfaces.
	
Inspired by the FTAM, we introduce a new model, called \emph{Surface Flexible Tile Assembly Model} (SFTAM).
In the SFTAM, we are given a 3D surface, on which the tiles of the self-assembly get placed.
The SFTAM is an intermediate between aTAM and FTAM. 
Unlike in the FTAM, our aim for introducing the SFTAM is \emph{not} for building 3D structures or surfaces: we assume that the host surface already exists. 
In the SFTAM, tile bonds are all flexible and the tiles can bind along the edges of the surface.

This model enables to use self-assembly on surfaces other than $\mathbb{Z}^2$. The aim of this article is to introduce the SFTAM model, and to demonstrate its usefulness by showing how it can be used on various types of surfaces. One of the most important properties of a surface is its \emph{genus}, which, intuitively, is the number of ``holes'' in the surface. The Euclidean plane has genus~0. We are interested in using the SFTAM on surfaces with different values of genus. For that, we study the problem of characterizing the surface of the assembly, according to its genus, using the SFTAM. It is quit easy to devise a system which can behave in some way only on the torus, but it is harder to make sure that it has always this behavior when it is in fact on a torus.

We focus on a family of 3D surfaces called \emph{cuboids}, which are special types of polycubes.
Polycubes can form complex surfaces, and their genus can be high. We focus on a simple family of polycubes that can have genus~0 or genus~1. 
More specifically, we define an \emph{order-0 cuboid} $C_0$ as a polycube which has only six faces. 
An \emph{order-1 cuboid} $C_1=C_0 \setminus C'_0$ is a polycube that is made from the difference of two order-0 cuboids $C_0$ and $C'_0$. Thus, an order-0 cuboid is a simple surface with genus~0, but an order-1 cuboid can either have genus~0 (potentially with a pit or concavities) or genus~1, if it has a tunnel.

In this paper, we will suppose that the SFTAM self-assembly is performed on the surface of an order-1 cuboid $C$. 
We design an SFTAM system whose assemblies differ when $C$ is of genus~0 and of genus~1 and thus, can be used to detect the genus of the surface $C$ of the assembly it is used on. 
The goal of this study is to show that performing self-assembly on surfaces of higher genus can be helpful.
We also demonstrate some techniques which may prove useful in characterizing the topological properties of a wide range of surfaces.

A \emph{tile assembly system} (TAS) in the SFTAM is defined in a natural way as an extension of the aTAM: tile types are made of four glue labels, each has a strength, there is a  seed assembly and a temperature (more formal definitions will be given later). An \emph{assembly} is a placement of tiles on facets of the surface of the cuboid $C$. Two tiles bind if they are adjacent (i.e. their placements on the surface share an edge) and their glue labels are the same. In particular, edges are flexible and as a result the tiles can be placed on the border of orthogonal faces of $C$. The assembly is \emph{producible} if it can be obtained by successfully binding tiles, starting from a  seed. It is \emph{terminal} if no additional tile can be bound to an existing tile.

 Let $C=C_0 \setminus C'_0$ be an order-1 cuboid with its three dimensions 
 at least 10 for $C'_0$. Our main result is to describe an SFTAM (TAS) $\mathcal{S_G}$ with a subset $Y$ of its tile types such that the following holds:

\begin{itemize}
\item if the order-1 cuboid $C$ has genus $0$, then no tile of $Y$ appears in any producible assembly of $\mathcal{S_G}$ on $C$, and
\item if $C$ has genus $1$, every terminal assembly of $\mathcal{S_G}$ on $C$ contains at least one tile of $Y$.
\end{itemize}

In other words, the genus of $C$ can be determined using the assemblies of $\mathcal{S_G}$ on $C$.  The assemblies of $\mathcal{S_G}$ consist of two phases:  a skeleton  forms on the cuboid and separates it into several regions, then the regions are partially filled by inner tiles.  For  a sketch of the skeleton and its inner filling for an order-0 cuboid see Fig.~\ref{fig:plan1}. 
The skeleton of the assembly forms in $3$ steps $R_X$ (in red), $R_Y$(in green) and $R_Z$(in blue). After the formation of the skeleton the tiles of type $t_{even}$ and $t_{odd}$ partially fill inside the skeleton.

When the cuboid has genus 1, we show that there must be some parts of the skeleton or the inner filling which intersect in a way that is not possible on a genus-0 cuboid. The tile types of $Y$ stick at the place where this happens. See Fig.~\ref{plan2}.

We start with basic definitions and notations in Section~\ref{def}, 
where we introduce and formalize our SFTAM model. 
Next, we introduce the family of order-1 cuboids and we show how SFTAM behaves on the family of order-1 cuboids as an assembly model in three dimensions.
In Section~\ref{sec-lemmas} we develop technical lemmas that will be necessary for the proof of our main result.
In Section~\ref{problem} we present our main result: a SFTAM tile assembly system that identifies the genus of  order-1 cuboids using specific tiles from that system. We conclude in Section~\ref{future}.

\begin{figure}
    \centering

\begin{tikzpicture}[scale=1.5]
  \draw[thick](2,2,0)--(0,2,0)--(0,2,2)--(2,2,2)--(2,2,0)--(2,0,0)--(2,0,2)--(0,0,2)--(0,2,2);
  \draw[thick](2,2,2)--(2,0,2);
  \draw[gray,->](0,0,0)--(3,0,0);
  \draw[gray,->](0,0,0)--(0,3,0);
  \draw[gray,->](0,0,0)--(0,0,3);

\node at (3.2,0,0) {$X$};
\node at (0,3.2,0) {$Y$};
\node at (0,0,3.3) {$Z$};

  \draw[red, dotted](1,0,2)--(1,0,0)--(1,2,0);
  
   \draw[green,thick](0,1,2)--(2,1,2)--(2,1,0);
  \draw[green, dotted](2,1,0)--(0,1,0)--(0,1,2);
  
    \draw[blue,thick](0,2,1)--(2,2,1)--(2,0,1);
  \draw[blue, dotted](2,0,1)--(0,0,1)--(0,2,1);

\node[line width=0.1mm,rectangle, minimum height=2mm,minimum width=2mm,fill=white!70,rounded corners=1mm,draw, label]  at (2,2,0) {$\textcolor{black}{t_{even}}$};
\node[line width=0.1mm,rectangle, minimum height=2mm,minimum width=2mm,fill=white!70,rounded corners=1mm,draw, label]  at (2,0,2) {$\textcolor{black}{t_{even}}$};
\node[line width=0.1mm,rectangle, minimum height=2mm,minimum width=2mm,fill=white!70,rounded corners=1mm,draw, label]  at (0,2,2) {$\textcolor{black}{t_{even}}$};
\node[line width=0.1mm,rectangle, minimum height=2mm,minimum width=2mm,fill=white!70,rounded corners=1mm,draw, label]  at (0,0,0) {$\textcolor{black}{t_{even}}$};

\node[line width=0.1mm,rectangle, minimum height=2mm,minimum width=2mm,fill=black,rounded corners=1mm,draw, label]  at (0,2, 0) {$\textcolor{white}{t_{odd}}$};
\node[line width=0.1mm,rectangle, minimum height=2mm,minimum width=2mm,fill=black,rounded corners=1mm,draw, label]  at (2,0, 0) {$\textcolor{white}{t_{odd}}$};
\node[line width=0.1mm,rectangle, minimum height=2mm,minimum width=2mm,fill=black,rounded corners=1mm,draw, label]  at (2,2, 2) {$\textcolor{white}{t_{odd}}$};
\node[line width=0.1mm,rectangle, minimum height=1mm,minimum width=2mm,fill=black,rounded corners=1mm,draw, label]  at (0,0, 2) {$\textcolor{white}{t_{odd}}$};

 \draw[red,thick](1,2,0)--(1,2,2)--(1,0,2);
   \node[inner ysep=10pt,rectangle,draw,fill=yellow,scale=0.15mm,label]  at (1,1,2) {$\textcolor{black}{Seed}$};

\end{tikzpicture}

    \caption{The skeleton of a $\mathcal{S_G}$ assembly on an order-0 cuboid is showed in color. It is started from a seed in yellow and after the formation of the skeleton, the regions partially fill by $t_{odd}$ and $t_{even}$ tile types. }
    \label{fig:plan1}
\end{figure}
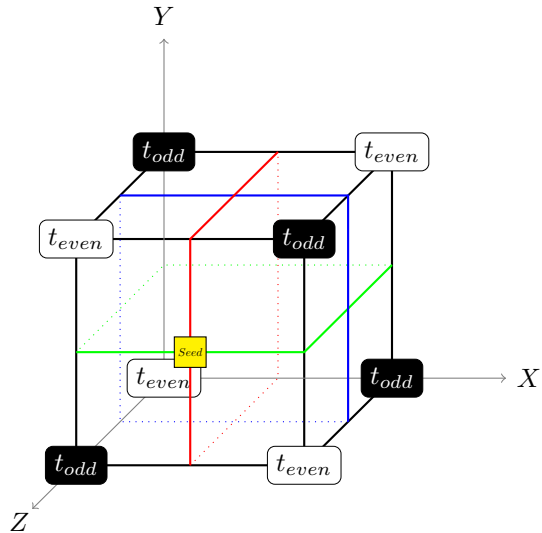

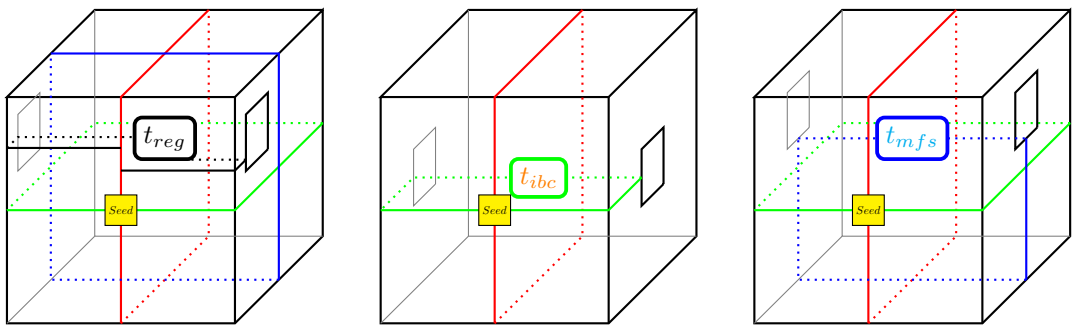
\begin{figure}[htpb!]
    \centering

\begin{subfigure}[t]{0.3\textwidth}
\begin{tikzpicture}[scale=1.5]
  \draw[thick](2,2,0)--(0,2,0)--(0,2,2)--(2,2,2)--(2,2,0)--(2,0,0)--(2,0,2)--(0,0,2)--(0,2,2);
  \draw[thick](2,2,2)--(2,0,2);
  \draw[gray](2,0,0)--(0,0,0)--(0,2,0);
  \draw[gray](0,0,0)--(0,0,2);

  \draw[red,thick](1,2,0)--(1,2,2)--(1,0,2);
  \draw[red, dotted,thick ](1,0,2)--(1,0,0)--(1,2,0);
  
   \draw[green,thick](0,1,2)--(2,1,2)--(2,1,0);
  \draw[green, dotted,thick ](2,1,0)--(0,1,0)--(0,1,2);

  \draw[thick](2,1.25,1.25)--(2,1.75, 1.25)--(2,1.75,1.75)--(2,1.25,1.75)--(2,1.25,1.25);
 \draw[gray](0,1.25,1.25)--(0,1.75, 1.25)--(0,1.75,1.75)--(0,1.25,1.75)--(0,1.25,1.25);
  
    \draw[blue,thick](0,2,1)--(2,2,1)--(2,0,1);
  \draw[blue,dotted,thick](2,0,1)--(0,0,1)--(0,2,1);
  
   \draw[black,thick](0,1.55,2)--(1,1.55,2);
  \draw[black,thick](1,1.35,2)--(2,1.35,2)--(2,1.35,1.75);
  \draw[black, thick, dotted](0,1.55,2)--(0,1.55,1.75)--(1,1.55,1.75);
    \draw[black, thick, dotted](2,1.35,1.75)--(1.5,1.35,1.75);

   \node[line width=0.5mm,rectangle, minimum height=0.5cm,minimum width=0.5cm,fill=white!70,rounded corners=1mm,draw=black, label]  at (1,1.25,1) {$\textcolor{black}{t_{reg} }$};

   \node[inner ysep=10pt,rectangle,draw,fill=yellow,scale=0.15mm,label]  at (1,1,2) {$\textcolor{black}{Seed}$};
 
\end{tikzpicture}
\caption{
The case where the skeleton does not meet the tunnel. In this case, the tile types $t_{odd}$ and $t_{even}$ of regions where  the entrances of the tunnel is located, pass inside the tunnel. In their collision a tile of type $t_{reg}$ appears in the assembly.}
\end{subfigure}
\begin{subfigure}[t]{0.3\textwidth}
\begin{tikzpicture}[scale=1.5]
  \draw[thick](2,2,0)--(0,2,0)--(0,2,2)--(2,2,2)--(2,2,0)--(2,0,0)--(2,0,2)--(0,0,2)--(0,2,2);
  \draw[thick](2,2,2)--(2,0,2);
  \draw[gray](2,0,0)--(0,0,0)--(0,2,0);
  \draw[gray](0,0,0)--(0,0,2);

  \draw[thick](2,0.75,0.75)--(2,1.25, 0.75)--(2,1.25,1.25)--(2,0.75,1.25)--(2,0.75,0.75);
 \draw[gray](0,0.75,0.75)--(0,1.25, 0.75)--(0,1.25,1.25)--(0,0.75,1.25)--(0,0.75,0.75);

  \draw[red,thick](1,2,0)--(1,2,2)--(1,0,2);
  \draw[red, dotted,thick](1,0,2)--(1,0,0)--(1,2,0);
  
   \draw[green,thick](0,1,2)--(2,1,2)--(2,1,1.25);
  \draw[green, thick, dotted](2,1,1.25)--(0,1,1.25)--(0,1,2);

   \node[line width=0.5mm,rectangle, minimum height=0.5cm,minimum width=0.5cm,fill=white!70,rounded corners=1mm,draw=green, label]  at (1,0.90,1) {$\textcolor{orange}{t_{ibc} }$};

   \node[inner ysep=10pt,rectangle,draw,fill=yellow,scale=0.15mm,label]  at (1,1,2) {$\textcolor{black}{Seed}$};

\end{tikzpicture}
\caption{The tunnel intersects along the width of plane $P_X$ and length of plane $P_Y$. The $t_{ibc}$ is a tile type from $T_{ibc}$}
\end{subfigure}
\begin{subfigure}[t]{0.3\textwidth} \begin{tikzpicture}[scale=1.5]
  \draw[thick](2,2,0)--(0,2,0)--(0,2,2)--(2,2,2)--(2,2,0)--(2,0,0)--(2,0,2)--(0,0,2)--(0,2,2);
  \draw[thick](2,2,2)--(2,0,2);
  \draw[gray](2,0,0)--(0,0,0)--(0,2,0);
  \draw[gray](0,0,0)--(0,0,2);

  \draw[red,thick](1,2,0)--(1,2,2)--(1,0,2);
  \draw[red, dotted,thick](1,0,2)--(1,0,0)--(1,2,0);
  
   \draw[green,thick](0,1,2)--(2,1,2)--(2,1,0);
  \draw[green,thick, dotted](2,1,0)--(0,1,0)--(0,1,2);

  \draw[thick](2,1.25,0.75)--(2,1.75, 0.75)--(2,1.75,1.25)--(2,1.25,1.25)--(2,1.25,0.75);
 \draw[gray](0,1.25,0.75)--(0,1.75, 0.75)--(0,1.75,1.25)--(0,1.25,1.25)--(0,1.25,0.75);
  
    \draw[blue,thick](2,1.25,1)--(2,0,1);
     \draw[blue,thick,dotted](0,1.25,1)--(2,1.25,1);
  \draw[blue,thick, dotted](2,0,1)--(0,0,1)--(0,1.25,1);

       \node[line width=0.5mm,rectangle, minimum height=0.5,minimum width=0.5cm,fill=white!70,rounded corners=1mm,draw=blue, label]  at (1,1.25,1){$\textcolor{cyan}{t_{mfs} }$};

   \node[inner ysep=10pt,rectangle,draw,fill=yellow,scale=0.15mm,label]  at (1,1,2) {$\textcolor{black}{Seed}$};
  \end{tikzpicture}
  \caption{The case where the tunnel of an order-1 cuboid is shown by a tile of type $t_{mfs}$, located at the intersection of the skeleton.}
  \end{subfigure}

    \caption{
   According to the relative position of the seed  and the tunnel, the detection of the tunnel is done by different tile types of $\mathcal{S_G}$ in the assembly. The seed in indicated in yellow and the skeleton is in color.
    }
    \label{plan2}
\end{figure}

\section{Definitions and notations} \label{def}
	
In this section, we present the notions and definitions that are used throughout this paper. First, we explain what we mean by a tile assembly model for 3D surfaces in $\mathbb{Z}^3$ and we present our model: the Surface Flexible Tile assembly model or \emph{SFTAM}. Then we present the family of surfaces that we work on: order-1 cuboids.

\subsection{The tile assembly model on surfaces in $\mathbb{Z}^3$: SFTAM}\label{SFTAM}
	
We now define the Surface Flexible Tile assembly Model, SFTAM.  We work in 3-dimensional space, on the integer lattice $\mathbb{Z}^3$. 
	
\begin{definition}[Tile type in SFTAM] 
Let $\Sigma$ be a finite label alphabet and  $\epsilon$ represent the null label.
A \emph{tile type} $t$ is a 4-tuple $ t = (t_1, t_2, t_3, t_4)$ with $t_i \in \Sigma \cup \{ \epsilon \}$  for each $i=\{ 1, 2, 3, 4 \}$. Each copy of a tile type is a \emph{tile} and $t_1, t_2, t_3, t_4$ are the glues of $t$. 
\end{definition}

Tiles are $2D$ unit squares whose sides are assigned the labels of the tile type. These squares are allowed to translate and rotate (unlike in aTAM), but they can not be mirrored 
(unlike in FTAM). In fact,  since the tiles stick to a given surface, we can assume that they have an inner face and an outer face and that they always attach with the inner face in contact with the surface.
In the definition of tile types, we show labels by numbers rather than cardinal directions. 
However, often, the orientation of a tile dictates the orientation of the tiles around it. Then, we use the expression ``northern label'' to refer to the label which will end up on the northern side (and similarly for east, west, south).	
	
\begin{definition}[Facet] A \emph{facet} is a face of the lattice $\mathbb{Z}^3$, i.e. a unit square whose vertices have integer coordinates.
\end{definition}

\begin{definition}[Polycube]
A \emph{polycube} is a 3D structure that is a subset of $\mathbb{Z}^3$ and is formed by the union of unit cubes that are attached by their faces.
\end{definition}

For a  facet of a polycube, there are four possibilities for placing a  tile. We formalize this as follows.
	
\begin{definition}[Placement]
		Let $C$ be a polycube. A \emph{placement} $p = (f , o)$ on $C$ consists of a facet $f$ on the surface of $C$, and a side $o$ of $f$, called its \emph{orientation}.

		We denote the set of all placements in $C$ by $Pl(C)$. 
		
		Given a tile type $t=(t_1,t_2,t_3,t_4)$ and a placement $p=(f,o)$, \emph{placing} $t$ at the placement $p$ defines a mapping from the edges of $f$ to the label alphabet $\Sigma$.
	 The $i$-th side of $f$ (starting from the orientation $o$ and going in clockwise direction, looking from the exterior of the surface of $C$) is associated with $t_i$.
	\end{definition}
	
Notice that the normal vector $n$ of the placement in the FTAM is not needed in the SFTAM. Indeed, in the SFTAM, the tiles do not reflect. The reason is that we fix that the normal vector (as used in the FTAM) always starts inside of the polycube and  points to the outside of the structure. So, the order of the tiles' sides is uniquely determined by the orientation of its placement.

Now we can define a \emph{tile assembly system} in the SFTAM.

\begin{definition}[Tile assembly system (TAS) on a polycube in SFTAM]
		
		A \emph{tile assembly system}, or TAS,  over the surface of a given polycube $C$ is a quintuple $\mathcal S = (\Sigma, T, \sigma, str, \tau)$, where :
		
		\begin{itemize}
			\item $\Sigma$ is a finite label alphabet, 
			\item $T$ is a finite set of tile types on $\Sigma$,
			\item $\sigma$ is called the \emph{seed} and can be a single tile or several tiles 
			\item $ str$ is a function from $ \Sigma\cup \{ \epsilon \}$ to non-negative integers called \emph{strength function} such that $str( \epsilon) = 0$, and
			\item $\tau \in \mathbb{N}$ is called the \emph{temperature}.
		\end{itemize}
	\end{definition}
While the System SFTAM is a theoretical system, its components have an analogy with elements of practical DNA settings. The labels are the single strands of DNA, the function $str$ show the strength of their connections and  the $\tau$  is the temperature. 
	
We present the definitions and notations of SFTAM assemblies  that we will use throughout the article. They define similar to the ones for the aTAM~\cite{Survey}.

	\begin{definition}
		An \emph{assembly} $\alpha$ of a SFTAM TAS $\mathcal S$ on a polycube $C$ is a partial function $\alpha:\Plc\dashrightarrow T$ defined on at least one placement such that for each facet $f$ of $C$, there is at most one placement $(f,o)$ where $\alpha$ is defined.

		For placements $p=(f,o)$, $p'=(f',o')$ of $\Plc$ with $\alpha(p)=t$ and $\alpha(p')=t'$ such that $f$ and $f'$ are distinct but have a common side $s$, we say that $t$ and $t'$ \emph{bind} together with the strength $st$ if the glues of $t$ and $t'$ placed on $s$ are equal and have the strength $st$. 
		
		The \emph{assembly graph} $G_{\alpha}$ associated to $\alpha$ has as its vertices, the placements of $\Plc$ that have an image by $\alpha$, and two placements $p$ and $p'$ are adjacent in $G_{\alpha}$ if the tiles $\alpha(p)$ and $\alpha(p')$ bind.
		
		An assembly $\alpha$ is \emph{$\tau$-stable} if for breaking $G_\alpha$ to any smaller assemblies, the sum of the strengths of disconnected edges of $G_\alpha$ needs to be at least $\tau$. 
	\end{definition}
	
 We start with a seed and then we add tiles one by one. This is formally described as follows.
	
\begin{definition}
Let $C$ be a polycube and
$\mathcal S = (\Sigma, T, \sigma, str, \tau)$  a SFTAM TAS with $\sigma$ positioned on a placement of $C$. 
An assembly $\alpha$ of $\mathcal S$ is \emph{producible} on $C$ if either $dom(\alpha)=\{(p\}$ and $\alpha(p)=\sigma$   where $p\in \Plc$, or if $\alpha$ can be obtained from a producible assembly $\beta$ by adding a single tile from $T\setminus\sigma$ on $C$, such that $\alpha$ is $\tau-stable$.	
We denote the set of producible assemblies of $\mathcal S$ by $A^{C}[\mathcal S]$.
An assembly is \emph{terminal} if no tile can be $\tau$-stably attached on $C$. The set of producible, terminal assemblies of $\mathcal S$ is denoted by $ A^{C}_\square [\mathcal S]$.
	
\end{definition}

	\subsection{Cuboids}
	
	In this part, we introduce the structures that we work on: \emph{order-1 cuboids}, which are sepcial types of polycubes. 
	We work in 3-dimensional space, on the integer lattice $\mathbb{Z}^3$.  
	We start with some definitions.
	
		\begin{definition}[Order-$0$ cuboid]
An \emph{order-$0$ cuboid} $C = (s_C, x_C, y_C, z_C)$ where $s_C=(s_x,s_y,s_z)\in\mathbb{Z}^3$ is the point of $C$ with smallest coordinates and $x_C, y_C, z_C$ are integers representing the length, width and height of $C$ is a 3D structure containing all points $(x,y,z)$ of $\mathbb{Z}^3$ such that $s_x\leq x\leq s_x + x_C$, $s_y\leq y\leq s_y + y_C$ and $s_z\leq z\leq s_z + z_C$. We denote the set of all cuboids by $O_0$.
	\end{definition}
	
Note that we work on the boundary surface of cuboids.

We are interested in 3D structures that are more complicated than order-0 cuboids, in particular 3D structures that can have \emph{tunnels}, that is, ``holes''. 
To this aim, we  introduce a family of polycubes called \emph{order-1 cuboids}.
	
\begin{definition}[Order-1 cuboid] 
		An \emph{order-1 cuboid} $C_1$ is the difference between two elements of $O_0$. 
		Given $C_0=(s_{C_0},x_{C_0},y_{C_0},z_{C_0})$ and $ C'_0=(s_{C'_0},x_{C'_0},y_{C'_0},z_{C'_0}) $ in $O_0$.  $ C_1= C_0 \setminus C'_0$  is an order-1 cuboid if  there is a $i\in\{x,y,z\}$ such that  $i_{C_0}\leq i_{C'_0} $. 
		We note $O_1$ the set of all order-1 cuboids.
		
	\end{definition}

	Note that an order-0 cuboid is a classic ``cuboid", i.e. it has six rectangular faces. 
		An order-1 cuboid can have an asymmetric surface, including a hole or a concavity.
		Moreover, the condition on the dimension of $C_0$ and $C'_0$ ensure that the surface of $C_1$ is connected.

	The genus of an order-1 cuboid is at most $1$. Note that the set of order-0 cuboids is a subset of the set of order-1 cuboids, that is, $O_0 \subseteq  O_1$. An order-1 cuboid $C_1=C_0\setminus C'_0$ can be of three different types, depending on how $C_0$ and $C'_0$ interact: (i) $C_0$ and $C'_0$ have no intersection, and $C_1$ is an order-0 cuboid, (ii) $C'_0$ cuts a hole in $C_0$, and $C_1$ is named as an order-1 cuboid with a \emph{tunnel} and has genus 1; and (iii) $C_0$ and $C'_0$ intersect but $C'_0$ does not cut a hole in $C_0$. If the cut is in the inner face of $C_0$, $C_1$ is an order-1 cuboid with a \emph{pit} and if the cut is in the side of a face of $C_0$, $C_1$ is an order-1 cuboid with a \emph{concavity}. In both cases of order-1 cuboid with a pit or with a concavity, the genus is $0$.
	
	We denote by $O^*_1$ the set of order-1 cuboids that are not order-0 cuboids, that is, the ones from items (ii) and (iii) above.
	
	The set of cuboids of $O^*_1$ with a pit are denoted by $O_1^p$, the ones with a tunnel, by $O_1^t$, and the ones with a concavity, by $O_1^c$.

\section{Technical definitions and lemmas : counters, U-turns and middle finding system}\label{lemma}\label{sec-lemmas}

Our main results uses some arithmetic and geometric computations, which are defined in $\mathbb{Z}^2$. A transfer lemma guarantees that they also work on any surface, if it is regular enough.

Given an assembly on $\mathbb{Z}^2$, we call the smallest axis-parallel rectangle containing the assembly, its \emph{underlying rectangle}.
If an SFTAM assembly is on a 3D surface, it is permitted to fold along the tiles' edges. 
The underlying rectangle is then the smallest subset of the surface which contains the assembly and is isomorphic to a rectangle of $\mathbb{Z}^2$, if it exists. 
See Fig.~\ref{fig:rectangle} for an illustration.
In this section, we present some lemmas about the SFTAM, but for simplicity we assume to be on $\mathbb{Z}^2$. 
However, due to the following lemma, the results of this section are applicable to assemblies on polycube surfaces as well.

\begin{lemma}
Let $\alpha$ be a producible assembly of an SFTAM TAS $S$ on $\mathbb{Z}^2$ with underlying rectangle $R$, and let $C$ be a polycube. If there exists a function $i:\mathbb{Z}^2\to C$ such that the restriction of $i$ to $R$ is a graph isomorphism, then the image of $\alpha$ under $i$ is producible on $C$.
\end{lemma}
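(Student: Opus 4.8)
The plan is to transport both the combinatorial structure of the assembly and, crucially, the \emph{dynamics} of its production from $\mathbb{Z}^2$ to $C$ via the map $i$. The statement hypothesizes that the restriction of $i$ to the underlying rectangle $R$ is a graph isomorphism; here the relevant graphs are the adjacency structures on placements, so $i$ identifies the facets and orientations of $R$ with facets and orientations on the surface of $C$ in a way that preserves which placements share a common edge. The key observation to establish first is that because $i$ is a graph isomorphism on $R$, the notion of two tiles \emph{binding} is preserved: two placements $p,p'$ in $R$ share a side with matching glues of some strength $st$ if and only if $i(p),i(p')$ share a side on $C$ with the same matching glues and strength. This is essentially immediate from the definition of the assembly graph, since binding is defined purely in terms of shared edges and equality of glue labels, both of which are respected by a graph isomorphism that also carries the tile-type labels along unchanged.

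Granting this, the main argument is an induction on the production sequence of $\alpha$ on $\mathbb{Z}^2$. First I would set $\beta := \alpha \circ i^{-1}$, the pushforward of $\alpha$ along $i$, which is a well-defined partial function from $\Plc$ to $T$ since $\alpha$ is supported on $R$ where $i$ is invertible; I must check that $\beta$ is a legitimate assembly, i.e. that it places at most one tile per facet of $C$, which follows from injectivity of $i$ and the corresponding property of $\alpha$. For the base case, the seed placement of $\alpha$ maps under $i$ to a single placement carrying $\sigma$, which is producible on $C$ by definition. For the inductive step, suppose $\alpha$ is obtained from a producible assembly $\alpha'$ on $\mathbb{Z}^2$ by $\tau$-stably attaching one tile at placement $q$. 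By induction $i(\alpha') = \alpha'\circ i^{-1}$ is producible on $C$, and I would argue that attaching the same tile type at $i(q)$ is a legal $\tau$-stable addition on $C$: the incoming bonds of the new tile at $q$ are in bijection (by the binding-preservation observation) with the incoming bonds at $i(q)$, with identical strengths, so their sum is at least $\tau$ exactly when it was on $\mathbb{Z}^2$. Hence $i(\alpha)$ is producible on $C$, completing the induction.

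The step I expect to require the most care is verifying that $\tau$-stability is genuinely preserved, not merely the single-step attachment condition. Producibility as defined only requires each newly added tile to attach $\tau$-stably, so if one reads the definition literally the inductive step reduces to the local binding-strength comparison above, which is clean. But I would want to be careful that the graph isomorphism hypothesis is strong enough to guarantee that no \emph{spurious} additional bonds appear on $C$ that were absent on $\mathbb{Z}^2$ (or vice versa): since $i$ restricted to $R$ is an isomorphism of the adjacency structure, a placement in $i(R)$ cannot acquire a neighbor outside $i(R)$ that would alter the binding count, \emph{provided} the image $i(R)$ is an induced subgraph on $C$. This is the one point where the precise meaning of ``graph isomorphism'' in the hypothesis matters, and I would make explicit that $i(R)$ inherits exactly the edges predicted by $R$, so that the bond structure of $\beta$ on $C$ is forced to coincide with that of $\alpha$ on $\mathbb{Z}^2$ and no extra stability constraints intrude. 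With that settled, the induction goes through and establishes that the image of $\alpha$ under $i$ is producible on $C$.
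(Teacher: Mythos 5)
Your proposal is correct and follows essentially the same route as the paper: the paper's (much terser) proof simply asserts that since bonds fold along edges of $C$, the assembly on $C$ proceeds step-for-step as on $\mathbb{Z}^2$ with each tile at $p$ mapped to $i(p)$, which is exactly the induction on the production sequence you carry out explicitly. Your additional care about binding preservation and spurious bonds is a reasonable formalization of what the paper leaves implicit, not a different argument.
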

\begin{proof}
If the seed of $S$ is placed at $p_s$ in $\mathbb{Z}^2$, it is placed at $i(p_s)$ on $C$. Since the tile bonds can fold along edges of $C$, the assembly on $C$ proceeds exactly as it proceeds on $\mathbb{Z}^2$, and each tile placed at a point $p$ in $\mathbb{Z}^2$ is placed at point $i(p)$ on $C$.
\end{proof}

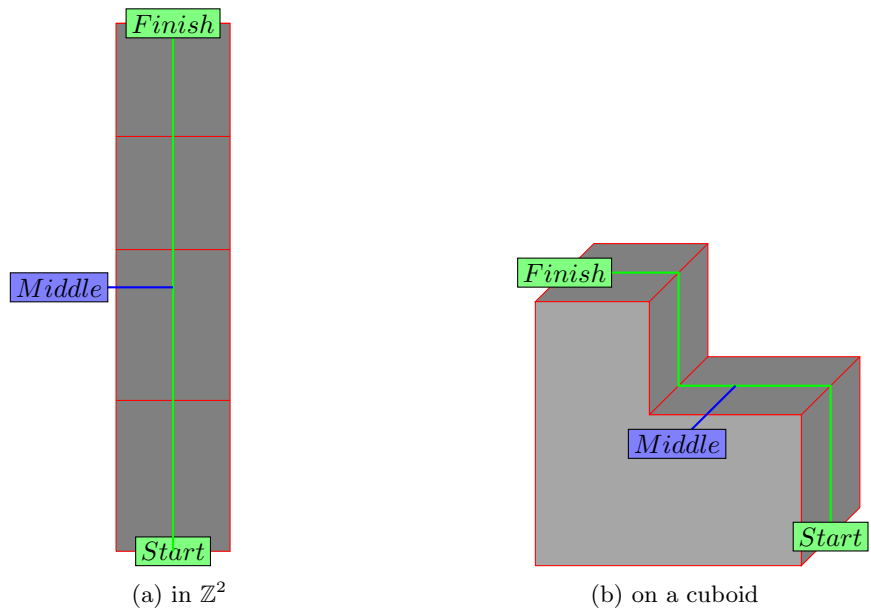
\begin{figure}[h!]
    \centering
\begin{subfigure}
[t]{0.4 \textwidth}
\begin{tikzpicture}[scale=0.5]
\filldraw[fill=gray, draw=red]
(0,0,0)--(3,0,0)--(3,14,0)--(0,14,0)--(0,0,0);

\draw[draw=red](0,4,0)--(3,4,0);
\draw[draw=red](0,8,0)--(3,8,0);
\draw[draw=red](0,11,0)--(3,11,0);
\draw[draw=red](0,14,0)--(3,14,0);

\node[inner sep=2pt,draw,fill=green!50, label]  at (1.5,0,0) 
{$\textcolor{black}{ Start}$};
\draw[draw=green, thick](1.5,0,0)--(1.5,14,0);
\draw[blue,thick, ->](1.5,7,0)--(-1.5,7, 0);
\node[inner sep=2pt,draw,fill=blue!50, label]  at (-1.5,7, 0) 
{$\textcolor{black}{ Middle}$};

\node at (-4.5,7, 0) {};

\node[inner sep=2pt,draw,fill=green!50, label]  at (1.5,14,0)
{$\textcolor{black}{ Finish}$};

\end{tikzpicture}\qquad
\caption{in $\mathbb{Z}^2$}
\end{subfigure}
\begin{subfigure}[t]{0.4 \textwidth}
\begin{tikzpicture}[scale=0.5]

\filldraw[fill=gray, draw=red] (3,2,-4)--(3,6,-4)--(-1,6,-4)--(-1,9,-4)--(-4,9,-4)--(-4,9,0)--(-1,9,0)--(-1,6,0)--(3,6,0)--(3,2,0)--(3,2,-4);

\draw[fill=gray!70,draw=red](3,2,0)--(-4,2,0)--(-4,9,0)--(-1,9,0)--(-1,6,0)--(-1,6,0)--(3,6,0)--(3,2,0);

\draw[draw=red](3,6,-4)--(3,6,0);
\draw[draw=red]((-1,6,-4)--(-1,6,0);
\draw[draw=red](-1,9,-4)--(-1,9,0);

\draw[draw=green, thick](3,2,-2)--(3,6,-2)--(-1,6,-2)--(-1,9,-2)-- (-4,9,-2);

\draw[blue,thick, ->](0.5,6,-2)--(0.5,6, 2);
\node[inner sep=2pt,draw,fill=blue!50, label]  at (0.5,6, 2) {$\textcolor{black}{ Middle}$};


\node[inner sep=2pt,draw,fill=green!50, label]  at (-4,9,-2)
{$\textcolor{black}{ Finish}$};
\node[inner sep=2pt,draw,fill=green!50, label]  at (3,2,-2) 
{$\textcolor{black}{ Start}$};

\node at (-6.5,7, 0) {};

\end{tikzpicture}
\caption{on a cuboid}
\end{subfigure}
    \caption{Finding the middle of a surface. The underlying rectangle is in dark gray.}
    \label{fig:rectangle}
\end{figure}

We introduce two increasing and decreasing counter TASs.
Then, by combining a TAS called \emph{U-turn system} with the counter systems, we establish a lemma called \emph{middle finding system lemma} to find the middle length of a given surface. 
This lemma paves the way to the proof of our main result.

There is another essential point before starting: we will represent a number via tiles, in a way that is classic in the literature of tile self assembly, see for example~\cite{BinaryCounter}.
To this aim, we use the binary representation of numbers and we associate a number to a row of tiles such that every bit of the number is assigned to a tile with the appropriate label. 

\begin{definition}[Row tile number]\label{rowtilenumber}
Let $T_0$ and $T_1$ be two sets of tiles with labels $0$ and $1$, respectively.
Let $N$ be an integer and $a_1 ... a_2 a_n$ be its binary representation with $n=\lceil log_2 N\rceil$. 
A row of tiles with labels $a_{ 1}^*, a_2, ..., a_n$ is the \emph{row tile number} representation of $N$ such that the distinct tile $a_{ 1}^*$, represents the most significant bit of the number. 
\end{definition}

We see an example of the row tile number of the number $12$ in Fig.~\ref{fig:12}. 

\begin{figure}[h!]
\centering
\includegraphics[scale=0.5]{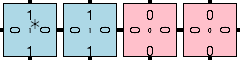}
\caption{Representation of the number $12$ by its row tile number.}
\label{fig:12}
\end{figure}

\subsection{The increasing binary counter system}\label{sec:counters}

In this section, we introduce the \emph{Increasing Binary Counter system} (IBC).
First, we explain what the IBC system is, then we present its tile types.
Afterwards, we show how it works by explaining the assembly process. 	
This system is a base for the middle finding system.

A different binary counter system was introduced in~\cite{BinaryCounter}. The advantage of our IBC system is that we do not need the support tiles at the bottom of the production. Especially, there is no infinite (or arbitrarily large) line of $0$ tiles at the left of the binary number of each row. Thus, the counter does not fill the whole plane and instead we will have a counter ribbon that is able to \emph{measure} the length of a given support on the side of a surface. 

In the production of the IBC system, the tiles of each row (excluding the tiles $\sigma_+$ and $t_{++}$) form a row tile number representing the index of the row in which they are located. 
Consequently, whenever the assembly stops by getting to the end of the given surface, the length of the assembly corresponds to the last row tile number. This number is a measure of the length of the underlying rectangle of the assembly.

\begin{lemma}[The increasing binary counter system, IBC]\label{IBC}

The \emph{increasing binary counter system} or \emph{IBC system}, is a SFTAM tile assembly system over $\mathbb Z^{2}$ such that if the bottom of the seed is placed at the origin, each row (excluding the tiles of type $\sigma_+$ and $t_{++}$) denotes its position in binary on the y axis and the leftmost tiles of all rows whose number is $2^k$ for some $k\geq 1$ is of type $t_{O}$.

The system is described by a quintuple $\mathcal S = (\Sigma, T, \sigma_{I}, str, \tau)$, where:
	
\begin{itemize}
		\item$\Sigma= \{0 , 1, s , ++ , over , \epsilon\}$ 
		\item $T=\{\sigma_{+}, t_{++} , t_{0+0},t_{0+1}, t_{1+0},t_{1+1},t_{S}, t_{O}, t_{OS} \} $
		\item $\sigma_{I}$ is a block of tiles made of a tile of type $\sigma_+=(1'' , \epsilon , \epsilon , 1'')$ (the $\epsilon$ labels are arbitrary) together with a column of \emph{support tiles} placed on the north side of $\sigma_+$, all of whose west labels are $1$ (the other labels can be arbitrary).
		\item $ str: \Sigma \cup \{ \epsilon \} \longrightarrow \{0,1,2\} $  such that $str( \epsilon) = 0$. The values of the str function for $\Sigma$ are $str( 0) = str( 1) =  str( ++) = str( s) = 1$ and $str(+)=str(over)=2$.
		\item $\tau=2$
\end{itemize}

\end{lemma}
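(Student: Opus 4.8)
The statement is really the specification-and-verification of a self-assembling binary counter, so the plan has three parts: (i) pin down the glue labels that the statement leaves implicit so that each interior tile performs one step of ripple-carry ``add~$1$'', (ii) prove by induction on the row index $n$ that the row at height $y=n$ carries the binary expansion of $n$, and (iii) read off from the induction exactly when the distinguished overflow tile $t_O$ is created. First I would fix the four interior tile types $t_{b+c}$ to encode, on their south and east sides respectively, the bit $b$ of the row below and the incoming carry $c$, and to emit on their north and west sides the new bit $b\oplus c$ and the outgoing carry $b\wedge c$. This produces exactly the four cases $t_{0+0},t_{0+1},t_{1+0},t_{1+1}$ and nothing more. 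The support column of the seed $\sigma_I$, whose west glues are all $1$, then plays the single crucial role of injecting the constant carry-in~$1$ into the least significant bit of every row, which is precisely what turns ``copy the row below'' into ``increment the row below''.

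Next I would set up the induction. The invariant is: once the tiles of the row at height $n-1$ are in place and spell the row tile number of $n-1$, the tiles that can $\tau$-stably attach on the north side form, reading right-to-left, the row tile number of $n$. The base case is the seed together with $\sigma_+$, $t_{++}$ and $t_{OS}$, which I would check spells row~$1$. For the inductive step, the least significant tile receives carry-in~$1$ from the support column and one bit from the south; thereafter each successive tile to the left receives its south bit from the row below and its east carry from its right-hand neighbour. At temperature~$2$ these two strength-$1$ bonds sum to exactly $\tau$ and match a unique $t_{b+c}$, so the row is completed deterministically and equals $(n-1)+1=n$ by the correctness of ripple-carry addition.

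The third ingredient is the identification of $t_O$. When the row at height $n-1$ has all its bits equal to $1$ — equivalently $n-1=2^k-1$, i.e.\ $n=2^k$ — the carry escapes past the current most significant column, where there is no tile below to cooperate with. I would arrange the escaping carry to be the strength-$2$ glue $over$, so that the new most significant tile attaches by this single bond alone; this tile is $t_O$, and it opens one fresh column to the left. For every other $n$ the carry is absorbed within the existing width, no new column is opened, and no $t_O$ is placed. Hence the leftmost tile of the row at height $n$ is $t_O$ if and only if $n$ is a power of~$2$, as claimed; the combined tile $t_{OS}$ handles only the degenerate first row $n=1$, where the newborn most significant bit is simultaneously the least significant one.

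The delicate point — and the main obstacle — is proving determinism and the absence of spurious growth \emph{simultaneously}, which is exactly what the choice $str(over)=2$ against $str(0)=str(1)=1$ is engineered to secure. I would need to verify that (a) every interior position is completed by exactly one tile, since a single strength-$1$ bond never reaches $\tau$; (b) a new left column is opened if and only if a genuine strength-$2$ $over$ carry is present, so the counter width grows by one precisely at powers of~$2$ and never leaks an unbounded strip of $0$-tiles to the left — this being the stated advantage over the counter of~\cite{BinaryCounter}; and (c) the set of producible assemblies is directed, so that ``the row'' and ``its leftmost tile'' are well defined. The crux of the argument is ruling out the attachment of a left tile when the outgoing carry is $0$, which follows because in that case the only available bond is a strength-$1$ carry glue and no south neighbour exists, so $\tau=2$ is not met.
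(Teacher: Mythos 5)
Your proposal is correct and follows essentially the same route as the paper's own proof: the four interior tiles implement a ripple-carry increment whose carry-in $1$ is injected by the west labels of the support column, each row assembles deterministically right-to-left because two strength-$1$ bonds are needed to reach $\tau=2$, and a new leftmost column opens exactly at the all-ones overflow, where the escaping carry is the strength-$2$ glue ``over'' that lets $t_O$ attach by a single bond (your explicit induction and the no-spurious-growth check are only stated informally in the paper, so your version is if anything slightly more rigorous). One small correction of detail: in the paper $t_{OS}$ is not reserved for the first row --- it is the tile placed above the old most significant bit in \emph{every} row $2^k$, and it is precisely the tile whose west label ``over'' recruits the new $t_O$; row $1$ instead obtains its $t_O$ directly from the ``over'' label on the west side of the seed tile $\sigma_+$.
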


The IBC tile types are shown in Fig.~\ref{ fig:Increment tiles }.

\begin{figure}[h!]
	\centering
	\begin{subfigure}[t]{0.4 \textwidth}
	\centering
	\includegraphics[scale=0.5]{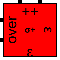}

	\caption{The initial seed tile $\sigma_+ $}
		\end{subfigure}
		\begin{subfigure}[t]{0.4 \textwidth}
	\centering
	\includegraphics[scale=0.5]{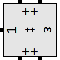}
	\caption{The support incrementor tile $t_{++}$}
		\end{subfigure}
		\begin{subfigure}[t]{0.4\textwidth}
		\centering
	\includegraphics[scale=0.5]{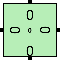}	\includegraphics[scale=0.5]{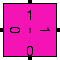}
	\includegraphics[scale=0.5]{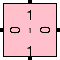}
	\includegraphics[scale=0.5]{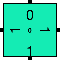}
\caption{The rule tile types $t_{0+0}$, $t_{0+1}$, $t_{1+0}$ and $t_{1+1}$}
		\end{subfigure}
	\begin{subfigure}[t]{0.4\textwidth}
	\centering
	\includegraphics[scale=0.5]{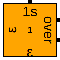}
	\includegraphics[scale=0.5]{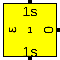}
	\includegraphics[scale=0.5]{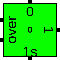}
	\caption{The tile types  $t_{O}$, $t_{S}$ and $t_{OS}$ related to the most significant bit}
		\end{subfigure}
	\caption{The IBC tile types.}
	\label{ fig:Increment tiles }
\end{figure}

\begin{proof}
We start by describing the IBC tiles.
\medskip

\noindent\emph{The seed of the IBC.}

There is one tile type $\sigma_+ = (++, \epsilon, \epsilon, over)$, and one incrementor tile type $t_{++}=(++, \epsilon , ++ , 1)$ (the gray tile in Fig.~\ref{ fig:Increment tiles }) called the support tiles.  
Together, one tile of type $\sigma_+$ with a column of tiles of type $t_{++}$ form the block seed of the IBC. Note that the tile type of the support is allowed to be different from $t_{++}$, as long as its western labels are $1$.

\medskip

\noindent\emph{The rule tile types of IBC.}

There are also seven sum rule tile types: four "standard" rule tile types and three rule tile types related to the most significant bit. However, from now on when we mention rule tiles, we refer to standard rule tiles.

\medskip

First, we explain how the binary sum is done by these tiles.
In each rule tile type, the inputs are the labels at the east and south of the tile, for example $a$ and $b$. 
The two outputs are: the classic binary sum $a+b$ in the northern label, and the carry bit, which is displayed on the western label of the tile. 
The rule tile types of the IBC are: 
$ t_{0+0}=(0,0,0,0)
	, t_{0+1}=(1,0,1,1)
	, t_{1+0}=(1,1,0,0)
	$
	and $t_{1+1}=(0,1,1,0)
	$ 
with the side labels as mentioned, and where the inner label is the value of the sum, the same as the northern label.

\medskip

\noindent\emph{The most significant bit tile types of the IBC.}

There are three tile types for the most significant bit. 
They use the same logic as the rule tile types.
The tile type $t_{S}=(1_S, 0, 1, \epsilon)$ represents the most significant bit of the row number when that number is not a power of $2$. 
In case $a = b = 1$ in the currently most significant bit, there is an overflow after the sum operation $a+b$. 
In this case, the two rule tiles  $t_{OS}, t_{O}$ are used to change the most significant bit.
The tile $t_{OS}=(0, 1 , 1_s, over)$ is the connecting tile between $t_{O}$ and $t_{S}$ in case of the overflow and $t_{O}=(1_s, over, \epsilon, \epsilon)$ is the new most significant bit when the row number is a power of 2.

\medskip

\noindent\emph{The process of the assembly in the IBC TAS.}

The process starts with the seed.
The tiles of this column start the process of incrementing the row below.
A tile of type $t_{O}$ binds to the west of the seed tile by label "over" with strength~$2$ in the east. 
Note that the label of $t_{O}$ is $1$, which corresponds to row number 1, where it is located. 
Now the incrementation process starts. Each tile is bound by its eastern and southern neighbors, i.e. for having a tile $t_*$ with the label $a+b$, there is a tile at the south with north label $a$ and there is a tile at the east with west label $b$. The first tile to bind in each row is the easternmost, it binds with the support. 
The carry bit of the sum appears in the west side of the tile $t_*$. 
The northern label of $t_*$ is $a+b$ too. 
See an example of an assembly for the number $16$ in Fig.~\ref{fig:16}.	

\begin{figure}[h!]
	\centering
	\includegraphics[scale=0.5]{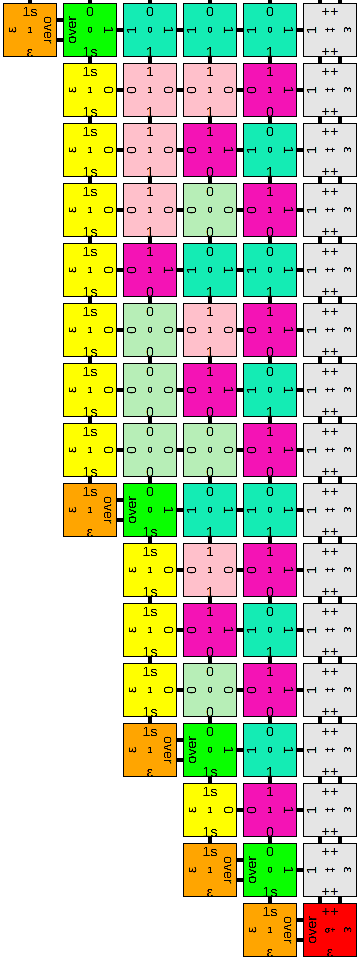}
	
	\caption{Assembly of 16 in the IBC.}
	
	\label{fig:16}
\end{figure}

Note again that each row represents the row number in which it is located along the y axis. This process can continue for the whole height of the support $\sigma_I$.

In the terminal assembly, the top row is the row tile number $N$, where $N$ is the length of the support. Moreover, the tiles of type $t_O$ only appear at the left of a row when the row is a power of $2$, and the highest one is the row number $\max\{c~|~2^c\leq N\}$.
\end{proof}

It can seen that  the IBC  counts up to a number, while distinguishing rows which are a power of two.

\subsection{The Decreasing Binary Counter System}

The \emph{decreasing binary counter system} is a tile assembly system that implements a reverse counter. 
This process is within the framework of \emph{Decreasing binary counter system} or \emph{DBC system} for short, formally defined as follows.

\begin{lemma}[Decreasing Binary Counter system, DBC system]\label{DBC}
The \emph{Decreasing binary counter system} or \emph{DBC system} for short, is a tile assembly system over $\mathbb Z^{2}$. It is defined by a quintuple $\mathcal S = (\Sigma, T, \sigma_{D}, str, \tau)$, where:
\begin{itemize}
		\item$\Sigma= \{0, 0',  1,1',  -- ,\epsilon \}$ 
		\item $T=\{ \sigma_{-}, t_{--}, t_{0-0},t_{0-1}, t_{1-0},t_{1-1}, t_{0-0}^*,t_{0-1}^*, t_{1-0}^*,t_{1-1}^*\}$, see Fig.~\ref{fig:DecrementTiles}.
		\item $\sigma_D$ is a block made of a binary row tile number $N$, a tile of type $\sigma_-=(--,\epsilon,\epsilon,0)$ (where the $\epsilon$ labels are arbitrary) at the east of it, and a column of support tiles that have western label $1$ (for example of type $t_{--}$) placed on the north of the $\sigma_-$ tile. The column has height at least $N$.
		\item $ str: \Sigma \cup \{ \epsilon \} \longrightarrow \{0,1,2\} $  such that $str( \epsilon) = 0$, $str(--)=2$ and all other values are 1.
		\item $\tau=2$
\end{itemize}

The terminal assembly of the DBC system is a rectangle with exactly one tile of type $t_{1-0}^*$ located at the left of the $(N+1)$-st row.

\end{lemma}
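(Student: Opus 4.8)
The plan is to follow the blueprint of the increasing counter (Lemma~\ref{IBC}), replacing addition‑with‑carry by subtraction‑with‑borrow. First I would fix the intended meaning of the glues of the rule tiles: for a rule tile, the south glue carries the bit stored in the row immediately below, the east glue carries the incoming borrow, the north glue carries the resulting bit of the single‑digit subtraction, and the west glue carries the outgoing borrow. The four unstarred tiles $t_{0-0},t_{0-1},t_{1-0},t_{1-1}$ realize the four combinations of (borrow in, stored bit), and the four starred tiles do the same job in the distinguished most‑significant column (cf.\ Definition~\ref{rowtilenumber}). The role of the support column is to inject an incoming borrow $1$ at the east of the least significant tile of every row (its west glue is $1$), which is exactly the ``$-1$'' of a decrement; the seed tile $\sigma_-$ anchors this column and fixes the east glue seen by the least significant bit of the initial row tile number.

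Then I would prove the core invariant by induction on the row index $k$: the $k$-th row above the seed is a valid row tile number representing $N-k$, and its placement is forced. The inductive step is a standard ripple‑borrow computation: given that row $k-1$ encodes $N-k+1$, the easternmost tile of row $k$ is forced by its south neighbour (the stored bit) and by the support (incoming borrow $1$); every tile further west is then forced by its south neighbour and by the outgoing borrow (west glue) of the tile already placed to its east; and the most significant tile is forced among the starred tiles. Determinism here follows because for each pair (south glue, east glue) exactly one unstarred (respectively starred) rule tile has the matching south and east glues, and because $\tau=2$ together with $str=1$ on the bit glues forces every non‑seed tile to bind simultaneously to its south and its east neighbour, a single bond of strength $1$ never sufficing.

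The heart of the argument is the ``exactly once'' statement. I would show that the borrow propagates past the most significant position -- i.e.\ the most significant tile is the borrow‑into‑zero tile $t_{1-0}^*$ -- if and only if the row being decremented is the all‑zero row, equivalently the value $0$. Indeed a borrow reaches a column only when all bits strictly to its right are $0$, so the borrow reaches the most significant column while the stored bit there is $0$ exactly when every bit of that row is $0$. By the invariant this occurs for a unique row, namely the one decrementing the value‑$0$ row (row $N$), which is the $(N+1)$-st row; there the most significant tile is forced to be $t_{1-0}^*$. Since the west (outgoing borrow) glue of $t_{1-0}^*$ has strength $1<\tau$ and no south neighbour is available west of it, no tile can attach to its left, and the support column is exhausted at this height, so nothing more attaches. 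Hence the terminal assembly is the claimed rectangle -- every row has the common width $n=\lceil\log_2 N\rceil$ of the row tile number of $N$, the most significant column always being occupied by a starred tile -- and it contains $t_{1-0}^*$ exactly once, at the left of the $(N+1)$-st row.

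The hard part will be the determinism/forcing argument together with the termination bookkeeping: I must verify that at every stage exactly one tile is $\tau$-stably addable (so the producible terminal assembly is unique and equals the computed rectangle), and that the borrow can neither escape the most significant column nor trigger an extra row above that would let the counter keep running ``below zero''. This rests on a careful accounting of glue strengths and on the height of the support column being just sufficient to reach the row carrying $t_{1-0}^*$. The remaining ingredients -- the single‑digit decrement semantics of each tile and the rectangular shape -- are routine once the invariant and the borrow‑propagation lemma are established.
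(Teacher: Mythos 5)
Your setup (glue semantics, the row-$k$-encodes-$N-k$ invariant, the ripple-borrow forcing argument, and the observation that the borrow passes the most significant column exactly when the all-zero row is being decremented) matches the paper's intent. The genuine gap is in your termination and ``exactly once'' argument. You close the proof by asserting that after $t_{1-0}^*$ attaches ``the support column is exhausted at this height, so nothing more attaches,'' and you explicitly lean on ``the height of the support column being just sufficient to reach the row carrying $t_{1-0}^*$.'' But the lemma only guarantees that the support has height \emph{at least} $N$; it may be arbitrarily longer. When it is longer, the next support tile still presents a western label $1$ to the least significant column, the easternmost tile of row $N+1$ still presents a bit on its north side, and a new row attaches with two strength-$1$ bonds at $\tau=2$. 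The assembly therefore does not stop at row $N+1$.

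This matters because it is precisely the continuation past zero that threatens uniqueness: the counter wraps around modulo $2^{n}$ (the paper states each row $i$ shows $N-i$ modulo $\min\{2^k \mid N\le 2^k\}$, and its Fig.~\ref{ fig:12Decrement } runs $12$ down to $-15$), so the value $0$ is decremented again at row $N+2^{n}+1$, and you must explain why $t_{1-0}^*$ does not reattach there. The paper's proof handles this by arguing that once $t_{1-0}^*$ has appeared, the starred most-significant-bit tile types $t_{0-0}^*, t_{0-1}^*, t_{1-0}^*, t_{1-1}^*$ never attach again and the process continues for the whole length of the support using only the unstarred rule tiles; that is what makes $t_{1-0}^*$ occur exactly once. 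Your claim that ``the borrow can neither \dots trigger an extra row above that would let the counter keep running below zero'' is the opposite of the actual behaviour of the system, and without replacing it by an argument that the starred tiles are locked out after the first wrap-around, the ``exactly one tile of type $t_{1-0}^*$'' conclusion is not established.
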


\begin{proof}
The DBC tiles are shown in Fig.~\ref{fig:DecrementTiles}. 

\begin{figure}[h!]
	\centering
	\begin{subfigure}[t]{0.5 \textwidth}
	\centering
	\includegraphics[scale=0.5]{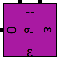}
	\caption{The seed tile $\sigma_-$}
		\end{subfigure}
\begin{subfigure}[t]{0.5 \textwidth}
\centering
	\includegraphics[scale=0.5]{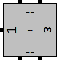}
		\caption{The support decrementor tile $t_{--}$}
		\end{subfigure}
\begin{subfigure}[t]{0.5 \textwidth}
\centering
	\includegraphics[scale=0.5]{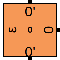}
	\includegraphics[scale=0.5]{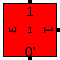}
	\includegraphics[scale=0.5]{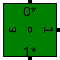}
	\includegraphics[scale=0.5]{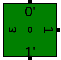}
		\caption{The tiles $t_{0-0}^*$, $t_{0-1}^*$, $t_{1-0}^*$, and $t_{1-1}^*$}
		\end{subfigure}
\begin{subfigure}[t]{0.5 \textwidth}
\centering
	\includegraphics[scale=0.5]{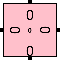}
	\includegraphics[scale=0.5]{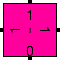}
	\includegraphics[scale=0.5]{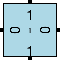}
	\includegraphics[scale=0.5]{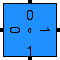}
			\caption{The tiles $t_{0-0}$, $t_{0-1}$, $t_{1-0}$ and $t_{1-1}$}
\end{subfigure}
	
\caption{The DBC tile types.}
	
	\label{fig:DecrementTiles}
\end{figure}

\begin{figure}[h!]
\centering

\includegraphics[scale=0.4]{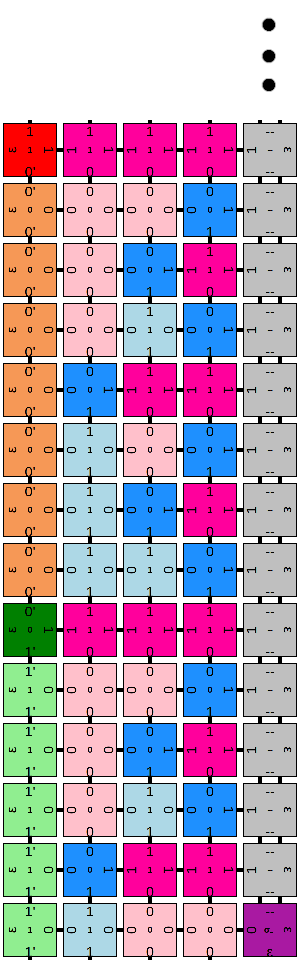}
\qquad\includegraphics[scale=0.25]{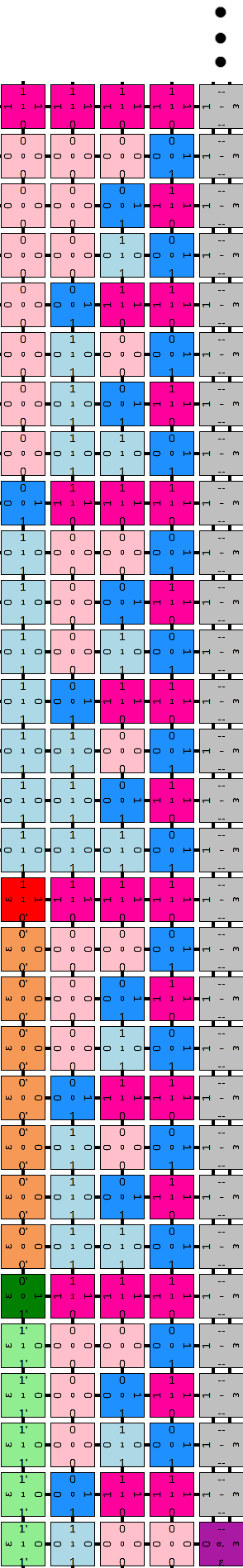}

\caption{The assembly of the DBC System for number 12 to negative 15}
	
	\label{ fig:12Decrement }
\end{figure}
\medskip

\noindent\emph{The seed of the DBC.}

An assembly of the DBC TAS starts when a binary row number binds to the decrement operation tile of type $\sigma_{D}=(--, \epsilon, \epsilon, 0)$ to form a DBC seed for the decrementation operation.

There is a decrementor tile of type $t_{--}=(-1, \epsilon , -1 , 1)$ which triggers the decrementation of the row below it into a new row to the left of the next tile of type $t_{--}$.
\medskip

\noindent\emph{The rule tile types of the DBC.}

There are eight rule tiles, four for the most significant bit and four for the others. The inputs are the labels of the east and south of the tile, for example $a$ and $b$, and there are two outputs.
The first one, that is located in the northern label, is the result of the binary subtraction $a-b$. 
The second output is the western label of the tile such that its value is needed to borrow for the next operation.
The label of the new tile itself is also $a-b$.
With this description, the rule tiles are $t_{a-b}=(a-b , b , a ,   borrow)$, i.e. $t_{0-0}=(0,0,0,0)$, $t_{0-1}=(0,1,1,1)$, $t_{1-0}=(1,0,1,0)$, $t_{1-1}=(1,1,0,0)$,
$t_{0-0}^*=(0',0,0',\epsilon)$, $t_{0-1}^*=(0',1,1',\epsilon)$, $t_{1-0}^*=(1',0,1', \epsilon)$, $t_{1-1}^*=(1',1,0',\epsilon)$.
See Fig.~\ref{fig:DBCrule} for an illustration.

\medskip

\noindent\emph{The process of the assembly in the DBC TAS.}

The counter starts at an arbitrary value $N$ and grows along a vertical column of support tiles. 
A decrementation process starts by relying on the support tiles that trigger the decrementation of the row below it into a new row to the left of the support tile. Each row $i$ shows the number $N-i$ modulo $min \{2^k~\vert~N \leq 2^{k}\}$.
Indeed, when the row's value counts down to $0$, the value of the next row will be the bits' maximum possible capacity of the number that the process was started with. 
Moreover, note that the most significant bit tiles are the tiles of types $t_{0-0}^*$, $t_{0-1}^*$, $t_{1-1}^*$ and
the row of the number $0$ for the first time is exposed  whenever a tile $t_{1-0}^*$ appears in the most significant bit in the row after the one of $0$, i.e. the row of $-1$. An example of a DBC assembly for $N=12$ is presented in Fig.~\ref{ fig:12Decrement }, where the tile of type $t_{1-0}^*$ is presented in red. After the appearance of $t_{1-0}^*$, the tiles of types $t_{0-0}^*$, $t_{0-1}^*$, $t_{1-0}^*$, $t_{1-1}^*$ do not appear in the assembly anymore and the process continues with tiles of types $t_{0-0}$, $t_{0-1}$, $t_{1-0}$ and $t_{1-1}$.
After that, the process repeats for the whole length of the support. Since the length of the support is at least $N$, the terminal assembly is a rectangle with exactly one tile of type $t_{1-0}^*$ located at the left of the $(N+1)$-st row.
\end{proof}
Hence the DBC  counts down from a number, while distinguishing the row where 0 is reached. 

	\begin{figure}[h!]
		\centering
		\scalebox{0.7}{\begin{tikzpicture}[node distance=7mm]
			
			\node[line width=0.5mm,rectangle, minimum height=2cm,minimum width=2cm,fill=white!70,rounded corners=1mm,draw] (Seed) at (0,0) {$a - b$};
			\draw[line width=0.5mm] ($(Seed.east) + (0,0.1)$) -- ($(Seed.east) + (0.2,0.1)$);
			\node (b) at ($(Seed.east) + (-0.3,0)$) {$b $};
			
			\draw[line width=0.5mm] ($(Seed.south) + (0,0)$) -- ($(Seed.south) + (0,-0.2)$);
			\node (a) at ($(Seed.south) + (0,0.3)$) {$a$};
			
			\draw[line width=0.5mm] ($(Seed.west) + (-0.3,0)$) -- ($(Seed.west) + (0,0)$);
			\node[rotate=90] (c) at ($(Seed.west) + (0.3,0)$) {$borrow$};
			
			\draw[line width=0.5mm] ($(Seed.north) + (0,0.2)$) -- ($(Seed.north) + (0,0)$);
			\node (d) at ($(Seed.north) + (0,-0.3)$) {$a - b$};

			\end{tikzpicture}}
		
		\caption{DBC rule tiles}
		\label{fig:DBCrule}
	\end{figure}
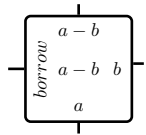

\begin{figure}[h!]
\centering
\includegraphics[scale=0.3]{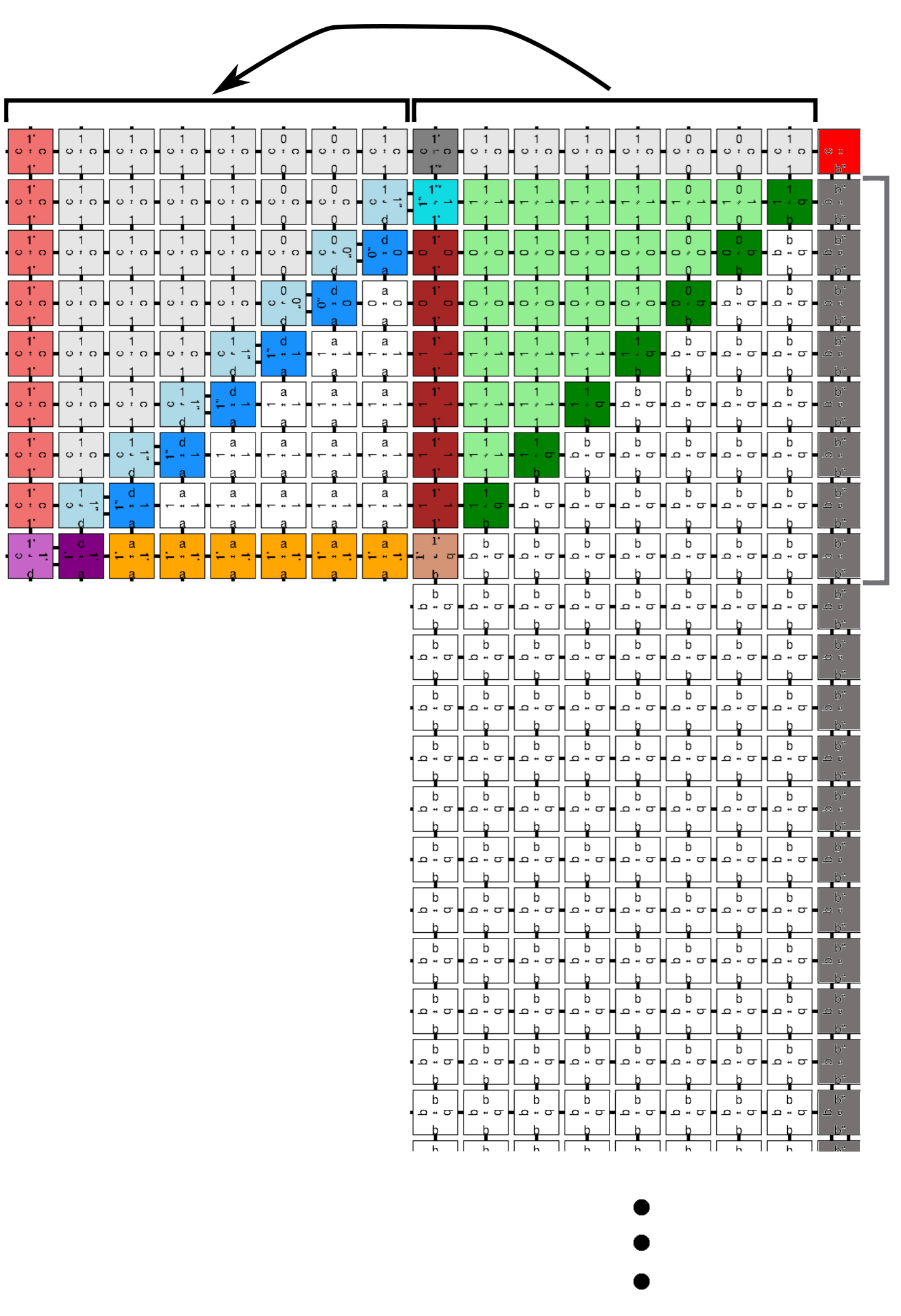}
\caption{Copying 11111001 in the U-turn system. The gray bracket on the right shows the minimum number of support tiles that are necessary for this assembly.}
\label{fig:11111001}
\end{figure}

\subsection{The U-turn system}

We now introduce a TAS which copies a row tile number to its left by doing a U-shape rotation of each of its tiles. The seed is analogous to that of IBC and DBC. The support must be at least as long as the width of the input.

\begin{lemma}[U-turn system lemma]\label{UTURN}
There is a TAS $\mathcal S_U =(\Sigma, T_U, \sigma_U, str, \tau)$ such that for a given row tile number $N$, $S_U$ makes a copy of $N$ to the left of its most significant bit. 
More precisely, if $N$ was in positions $[(x,y), ..., (x+k, y)]$, in the terminal assembly, there is a copy of $N$ in positions $[(x-k-1,y),\ldots, (x-1, y)]$.

The system is defined as follows.

\begin{itemize}
		\item$\Sigma= \{0 , 0'' , 1 , 1'' , 1' , 1'^* , a , b , b'' , c , d , \epsilon \}$ 
		\item $T_U=\{\sigma_u, t_{U}, t_{b},t_{\leftarrow},t^{0}_{\lrcorner}
		, t^{1}_{\lrcorner},t_{\swarrow},t_{\uparrow}, t_{\leftleftarrows}, t_{\llcorner} \} $. 
		\item The  seed $\sigma_U$ is formed by a tile of type $\sigma_u=(\epsilon,\epsilon,b'',c)$ (where $\epsilon$ can be arbitrary labels) which is bound to the right of a row tile number, and to the north of the $\sigma_u$ tile, a vertical column of support tiles (with at least as many tiles as in the row tile number) with western label $b$.
		\item $ str: \Sigma \cup \{ \epsilon \} \longrightarrow \{0,1,2\} $  such that $str( \epsilon) = 0$.
		The values of the str function in $\Sigma$ are  $str( 0'') =str( 1'') =str( b'') =  2$, and $1$ for the others.
		\item $\tau=2$
\end{itemize}
\end{lemma}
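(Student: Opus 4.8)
The plan is to prove the lemma by following the growth of the assembly from the seed $\sigma_U$, showing it is essentially deterministic, and organizing the analysis into a few phases while maintaining, for each phase, an invariant that records the word of glues exposed on the active frontier. The key simplification comes from the strength function: since $\tau=2$ and only $0''$, $1''$ and $b''$ have strength $2$, every tile of $T_U$ attaches in exactly one of two regimes --- either forced by a single strength-$2$ glue, or cooperatively, using two strength-$1$ glues and therefore determined uniquely by the pair of already-placed neighbors. I would begin by tabulating, for each of the ten tile types, the regime and the neighbor configuration in which it can attach; this bookkeeping reduces the rest of the proof to a routine induction.

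Next I would trace the phases. Initialization is forced: the strength-$2$ south glue $b''$ of $\sigma_u$ fixes the first non-seed tile and, together with the support column (whose tiles all expose the west glue $b$), starts a propagating track. The reading phase transcribes the input: each tile on the track binds cooperatively, combining the horizontal propagation glue of its predecessor with the bit glue it reads off the adjacent input tile, so that every bit value $a_i$ is copied into the track, the primed symbols $1'$ and $1'^*$ being used to carry the identity of the most significant bit. The changes of direction of the ``U'' are realized by the corner tiles $t^{0}_{\lrcorner}, t^{1}_{\lrcorner}, t_{\llcorner}$ and the turn tile $t_U$, together with the propagation tiles $t_{\uparrow}$, $t_{\swarrow}$, $t_{\leftarrow}$, $t_{\leftleftarrows}$ and the support-interaction tile $t_b$: the transcribed bits travel along the column provided by the support and are then written back out into the target row. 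This is where the hypothesis that the support is at least as long as the width of $N$ enters, as it guarantees enough room for the turn to complete for every bit.

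The core of the argument is the inductive claim that, after the first $j$ bits have been processed, the frontier is exactly the glue-word predicted by the construction and that at each active position exactly one tile type can attach $\tau$-stably. Granting this claim, the set of producible assemblies is confluent and the terminal assembly is unique up to the irrelevant order of independent placements; reading off the row at height $y$ then exhibits a copy of $N$ in positions $[(x-k-1,y),\dots,(x-1,y)]$, with the distinguished most-significant-bit tile occupying the leftmost position $(x-k-1,y)$. Termination coincides with completion of the copy, because once the last bit has been written no exposed glue admits a $\tau$-stable attachment.

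The step I expect to be hardest is establishing confluence together with \emph{faithfulness} of the turn. On the one hand I must rule out spurious growth --- no tile may attach to the exposed sides of the input, the support, or the track except as described by the invariant --- which is a finite check but one that depends delicately on the strength-$2$/strength-$1$ pattern. On the other hand I must verify that the changes of direction compose to an \emph{order-preserving} copy rather than a reversal, and in particular that the most significant bit (tracked through $1'$ and $1'^*$) is routed to the correct leftmost position. Once this finite case analysis over the ten tile types is complete, the geometric statement about the final positions and the termination follow immediately; the clearest presentation is to annotate a run such as the one in Fig.~\ref{fig:11111001} with the frontier invariant at the boundary between phases.
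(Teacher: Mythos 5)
Your proposal is correct and follows essentially the same route as the paper: a phase-by-phase trace of the assembly from the seed, following each bit along its down--left--up path, with the support column's length guaranteeing room for the turn and the primed labels $1'$, $1'^*$ tracking the most significant bit to the leftmost target position. The paper's proof is exactly this walkthrough (stated more informally, without your explicit frontier invariant or confluence check, which it leaves implicit), so your added bookkeeping is a refinement of the same argument rather than a different one.
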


\begin{figure}[h!]
    \begin{subfigure}[t]{0.3 \textwidth}
    \centering
   
		\includegraphics[scale=0.5]{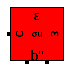}
       \caption{The initial seed tile $\sigma_u$}  
    \end{subfigure}
\begin{subfigure}[t]{0.3 \textwidth}
\centering
   
		\includegraphics[scale=0.5]{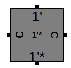}
		\includegraphics[scale=0.5]{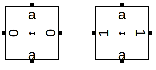}
          \caption{The row tile number tile types: $t^{1*}_{\uparrow}$, $t^0_{\uparrow}$ and $t^1_{\uparrow}$}   
    \end{subfigure}
\begin{subfigure}[t]{0.3 \textwidth}
\centering
   		\includegraphics[scale=0.5]{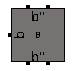}
           \caption{The support tile $t_u$}  
    \end{subfigure}
\begin{subfigure}[t]{0.3 \textwidth}
\centering
   		\includegraphics[scale=0.5]{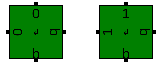}
	          \caption{The tile types $t^{0}_{\lrcorner}$ and 
		$t^{1}_{\lrcorner}$}   
    \end{subfigure}
\begin{subfigure}[t]{0.3 \textwidth}
\centering
   
		\includegraphics[scale=0.5]{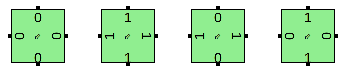}
           \caption{The tile types $t^{0,0}_{\swarrow}$, $t^{1,1}_{\swarrow}$, $t^{0,1}_{\swarrow}$ and $t^{1,0}_{\swarrow}$}  
    \end{subfigure}
\begin{subfigure}[t]{0.3 \textwidth}
	\centering
   
		\includegraphics[scale=0.5]{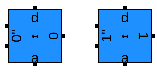}
           \caption{The tile types $t^{0''}_{\leftarrow}$ and $t^{1''}_{\leftarrow}$}  
    \end{subfigure}	
\begin{subfigure}[t]{0.3 \textwidth}
\centering
		\includegraphics[scale=0.5]{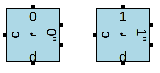}
           \caption{The tile types $t^{0''}_{\llcorner}$ and $t^{1''}_{\llcorner}$}  
    \end{subfigure}	
\begin{subfigure}[t]{0.3 \textwidth}
\centering
		\includegraphics[scale=0.5]{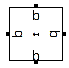}
         \caption{The tile type $t^{b}_{\leftarrow}$}    
    \end{subfigure}	
\begin{subfigure}[t]{0.3 \textwidth}
\centering
   
		\includegraphics[scale=0.5]{uturn-a.png}
          \caption{The tile types $t^{a0}_{\leftarrow}$ and $t^{a1}_{\leftarrow}$}   
    \end{subfigure}\\
\begin{subfigure}[t]{\textwidth}
\centering
   
		\includegraphics[scale=0.5]{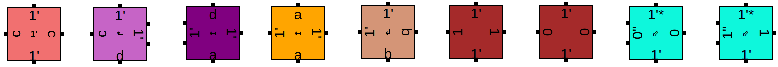}
         \caption{The marked tile types for copying the most significant bit tile (which is of type $t^{1*}_{\uparrow}$)}
    \end{subfigure}
    \caption{The U-turn system and support tile types.}
    \label{fig:UTURN_Tiles}
    \end{figure}

\begin{proof} Let $T_U$ be the set of tiles shown in Fig.~\ref{fig:UTURN_Tiles}, its temperature is $2$.
\medskip

\noindent\emph{The seed and the block tiles in the U-turn system.}

The U-turn system is a TAS that starts with a multiple seed such that the number of support tiles is at least the number of tiles of the row tile number.
Let the tile $\sigma_u=(\epsilon,\epsilon,b'',c)$ bind to a row tile number $N$  with $n=\lceil log_2 N\rceil$. We assume that the assembly is oriented as in Fig.~\ref{fig:11111001}, where the initial row tile number is at the top right.
From the south of $\sigma_u$, support tiles of type $t_u=(b'',\epsilon, b'', b)$ with northern and southern labels $b''$ whose strength are $2$, are placed one after another. These tiles form a ribbon of tiles of type $t_u$ that is perpendicular with respect to the seed. Once the number of support tiles becomes $n$, all these tiles together form the seed of the U-turn system, and the assembly begins.

After this step, the tiles below the seed start to grow and the assembly continues row by row. The overview of the assembly's process of the U-turn system is as below.

\medskip

\noindent\emph{Overview of the assembly's process in the U-turn system.}

The word "U-turn" refers to the process of the assembly in this system: each bit tile is copied along a U-shaped path to be copied to the left of the initial row tile number.
More precisely, each tile $a_k$ that is placed at the $k$-th bit of the row tile number, is copied $k$ times to the south, then it is copied $n$ times to the west, and at the end $k$ times to the north.
Furthermore, the assembly progresses row by row under the seed. 
Indeed, the last tile of each row has strength $2$ at the west and increments the number of row tiles. Thus, the next tile is placed at the $(n+k)$-th column (starting from the rightmost column of Fig.~\ref{fig:11111001}).
This tile makes a base, and using the support of the last tiles of each row, a vertical column is constructed that sends the value of $a_k$ to the north.

\medskip

\noindent\emph{The first stage of the assembly in the U-turn system.}

\begin{figure}[h!]
    \centering
    \includegraphics[scale=0.5]{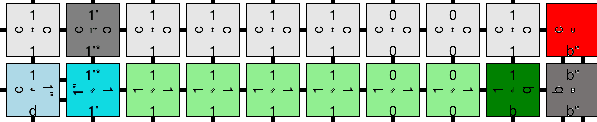}
    \caption{In the first stage of the assembly of the U-turn system, the value of the least significant bit is transferred $n$ times to the left (here $n=8$).}
    \label{fig:leastbittransfering}
\end{figure}

First, we explain the tiles of the very first row under the seed. For an illustration during this stage see Fig.~\ref{fig:leastbittransfering}.
This row is made of the tiles that transfer the value of the rightmost tile i.e. the least significant bit $a_1$ of the row tile number, to the column $n+1$.
At first, a tile of type $t^{0}_{\lrcorner}=(0,b,b,0)$ or $t^{1}_{\lrcorner}= (1,b,b,1)$ (depending on the value of the least significant bit, for short we use $t_{\lrcorner}$) appears in the assembly.
The tiles of type $t_{\lrcorner}$
transfer the value of the north label to the west label. These tiles are colored in dark green in Fig.~\ref{fig:11111001}.

Then, tiles of type $t^{1,1}_{\swarrow}=(1,1,1,1)$, $t^{1,0}_{\swarrow}=(1,0,1,0)$, $t^{0,1}_{\swarrow}=(0,1,0,1)$ and $t^{0,0}_{\swarrow}=(0,0,0,0)$ (for short $t_{\swarrow}$, light green in Fig.~\ref{fig:11111001}), that copy the north label to the south label, and the east label to the west label, appear under the row tile number $N$, at the west of the tile of type $t_{\lrcorner}$ (dark green in Fig.~\ref{fig:11111001}) except the leftmost one, which corresponds to the most significant bit.
Notice that in the U-turn system, all the tile types that transfer the value of the most significant bit tile are distinct from the other tiles.

The tile below the most significant bit is a special tile of type $t^{1*}_{\longleftarrow}=(1'^*, 1 , 1' , 1'')$ or $t^{0*}_{\longleftarrow}=(1'^*, 0 , 1',0'')$ (cyan in Fig.~\ref{fig:11111001}) that copies the north label to the south label and the east label to the west label with strength $2$.
Thus, this tile of type $t^{1*}_{\longleftarrow}$ or $t^{0*}_{\longleftarrow}$ allows to open a new column. The tile after it is of type
$t^{0''}_{\llcorner}=(0'', d, c,0)$ or $t^{1''}_{\llcorner}=(1'', d,c,1)$ (light blue in Fig.~\ref{fig:11111001})
with strength $2$ at the east label and brings its value to the north.
Now, the first copied tile, which corresponds to the least significant bit tile, appears at the left of the most significant bit tile of $N$.

Note that the tile of type $t_{\lrcorner}$ 
also copies the east label, i.e. block label $b$ from the support tiles $t_u$, to the south label.
Thus, at the south of the tile of type $t_{\lrcorner}$, the tiles of types $t^{b}_{\longleftarrow}=(b,b,b,b)$ (named block tiles) appear vertically with the support of $t_u$. 
\medskip

\noindent\emph{The overall transfer of the value in the U-turn assembly.}

Now, we explain the general process of the assembly for transferring the value of the $k$-th least significant bit of the row tile number. In Fig.~\ref{fig:kthbittransferig} this stage is marked with yellow-filled rectangles on the left side of the assembly. In this part, each new tile can be placed if and only if their northern and eastern labels correspond to the previous tiles.

\begin{figure}[htpb!]
    \centering
    \includegraphics[scale=0.6]{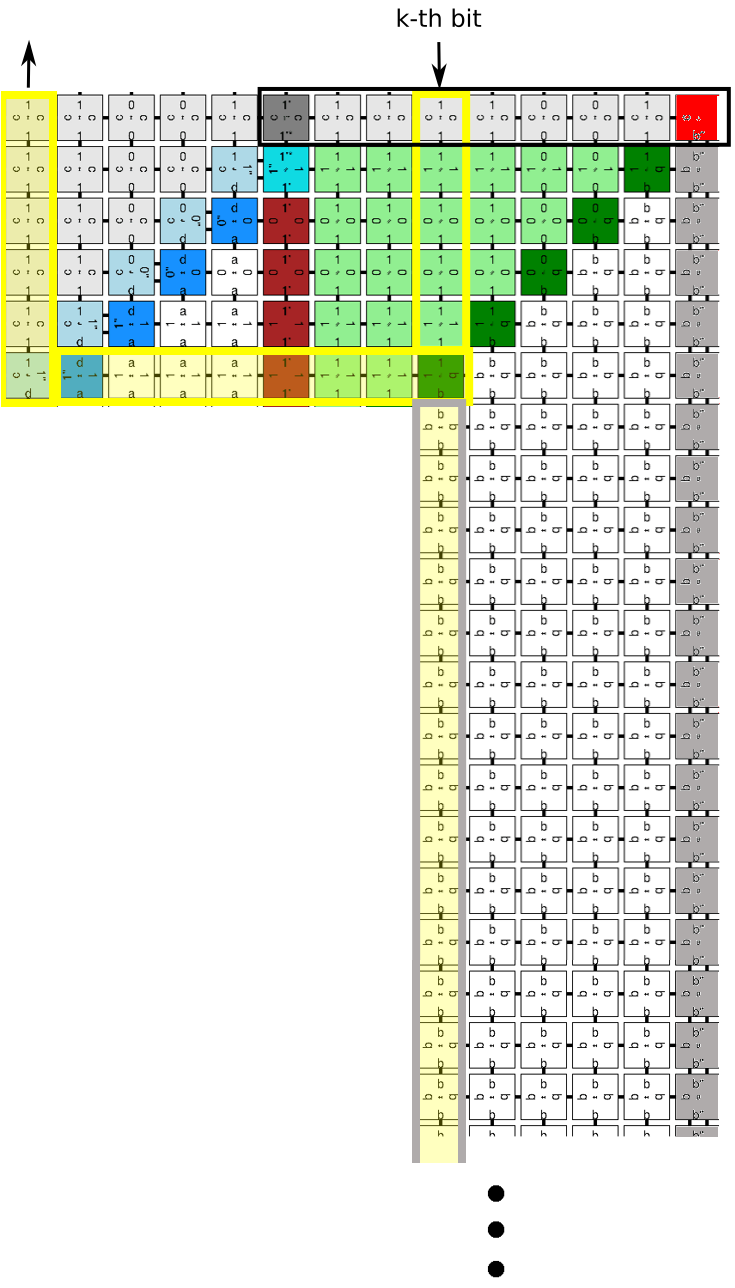}
    \caption{The $k$-th stage of the assembly in the U-turn system is shown by yellow-filled rectangles. The value of the $k$-th significant bit is copied down by $k-1$ rows during the previous stages. The $k$-th stage copies the value one time to the south and $n$ times to the left and finally $k$ times to the top. Here, $k=5$ and $n=8$. In addition, in the $k$-th stage, the tiles of type $t^{b}_{\longleftarrow}$ in the gray rectangle appear below the tile of type $t_{\lrcorner}$, and they will be the supports for the $(k+1)$-th stage. The seed is highlighted in the black rectange.}
    \label{fig:kthbittransferig}
\end{figure}

Assume that we are in the $k$-th stage i.e. the value of the $k$-th least significant bit is being transferred.

In the previous stages before $k$, the block tiles of types $t^{b}_{\longleftarrow}$ grow vertically from the south of the tile of type $t_{\lrcorner}$.
Thus, the tiles of type $t^{b}_{\longleftarrow}$ are placed between the support  tiles of type $t_u$ and $t_{\lrcorner}$ in each row and  copy the block label $b$ from the east to the west.
Therefore, the tile $t_{\lrcorner}$ appears in the $k$-th row and the $k$-th column with help of the tile of types $t_{\swarrow}$ in the north and $t^{b}_{\longleftarrow}$ at its east.
That being said, the tile that is placed at the $k$-th bit of the row number is copied $k-1$ times to the south before it meets the block tile, the place where the tile of type $t_{\lrcorner}$ (that transfers the north label to the west label) appears.

Then, the value of the $k$-th bit tile is copied one time to the south, and $n$ times to the west. 
Before arriving at the $(k+n)$-th column, the value passes through the $n$-th column, i.e. the column of the most significant bit tiles. 

As mentioned before, the path that the most significant bit goes through is created with marked tiles to keep it distinguished from the other bit tiles.
From the column of the most significant tile types, the tiles of type $t^{a}_{\longleftarrow}$ transfer their eastern value to the western value until arriving to the $(n+k)$-th column at the left. 
Note that, at the $(n+k-1)$-th bit towards the left, a tile of type $t_{\leftarrow}^{1''}=(d,0, a, 0'')$ or $t_{\leftarrow}^{0''}=(d,1, a, 1'')$ (dark blue in Fig.~\ref{fig:11111001}) opens a new column that starts with $t^{0''}_{\llcorner}=(0'', d, c,0)$ or $t^{1''}_{\llcorner}=(1'', d,c,1)$ (light blue in Fig.~\ref{fig:11111001}) that transfers the value from the west to the north.
Observe in Fig.~\ref{fig:11111001} that  these tiles form a light blue diagonal bisector between the column of the most significant bit below it, and the copied row tile number of $N$. 
From the time that we pass these tiles, the values of each tile are copied to the north using $t_{\uparrow}$-type tiles. 
Note that the tile types of the $(n+k)$-th column are placed by matching with their southern and eastern tiles. 

This process continues until the value of each tile bit is placed in a new tile bit in the same row as $N$ but with a shift of $n$ columns.
In the terminal assembly, there is a copy of of $N$ (which was in positions $[(x,y), ..., (x+k, y)]$) in positions $[(x-k-1,y),\ldots, (x-1, y)]$.
\end{proof}

Therefore,  the U-turn System makes a copy of a number from position $[(x,y), ..., (x+k, y)]$ to position $[(x-k-1,y) , ..., (x-1, y)]$.

\subsection{The middle finding system}
Let $R$ be an \emph{explicitly bounded rectangle} on $\mathbb{Z}^2$ i.e. a rectangle with horizontal sides that are bounded by specially marked tiles. The following lemma uses the IBC, DBC and U-turn systems to present a SFTAM TAS in order to find the middle of the height of $R$. See Fig.~\ref{fig:mfs}.

\begin{lemma}[Middle finding system lemma]\label{mfs}
Let $R$ be an explicitly bounded rectangle of height $N$ and width at least $3\log (N)$. There is a TAS $S_{1/2}=(\Sigma, T_{1/2}, \sigma_{+}, str, \tau)$ such that for all assemblies with a seed located at the start, a tile of type $t_m$ appears at coordinate $(x,\lfloor\frac{N}{2}\rfloor)$ with no other tiles to its left, and $t_m$ does not appear anywhere else.
Formally the system is:
\begin{itemize}
		\item$\Sigma= \Sigma_{IBC1} \cup \Sigma_{IBC2} \cup \Sigma_{U} \cup \Sigma_{DBC}  \cup  \{  org , c\}$ 
		where the labels in $\Sigma$ are the same as in each system, except when a modification is explicitly mentioned. However, any two labels from two distinct systems are distinct from each other.
		\item $T_U= T_{IBC1} \cup  T_{IBC2} \cup T_{U} \cup T_{DBC}   \cup  \{  t_v, t_{copy}  \} $
		\item The seed is $\sigma_{+}$ from the IBC System.
		\item $ str: \Sigma \cup \{ \epsilon \} \longrightarrow \{0,1,2\} $  such that $str( \epsilon) = 0$ and $str(org)=str(c)=1$.
		The values of the str function in $\Sigma$ are the same as each system.
		\item $\tau=2$
\end{itemize}
\end{lemma}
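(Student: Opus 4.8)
The plan is to run the three gadgets established above in sequence, so that a unique ``coincidence'' between an increasing and a decreasing counter marks the row $\lfloor N/2\rfloor$. Because the component alphabets are pairwise disjoint by hypothesis, the only interaction between the four subsystems is through the interface tiles $t_v,t_{copy}$ and the labels $org,c$; this lets me analyze each phase in isolation with its own lemma and then argue separately that the phases fire in the intended order.

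First I would measure the height and then reflect the measured value back to the start. Starting from $\sigma_+$ at the bottom of $R$, IBC1 counts upward; since $R$ is explicitly bounded, the marked tiles at the top halt the counter, and by Lemma~\ref{IBC} the top row is the row tile number of $N$, occupying at most $\log(N)$ columns at the top right. Applying the U-turn system (Lemma~\ref{UTURN}) at the top then copies $N$ to the left of its most significant bit, after which a descending copy column (built from $t_{copy}$, with $org$ marking the origin row) carries this reflected value straight down the left border to the bottom, where it re-seeds a decreasing counter. The assumed width $3\log(N)$ is used precisely here: it guarantees room for the original count, its U-turn copy, and the descending copy column to coexist without overlap.

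In the final upward pass, IBC2 and the DBC climb together from the bottom: by Lemma~\ref{IBC} the increasing counter shows $i$ in row $i$, while by Lemma~\ref{DBC} the decreasing counter, seeded with $N$, shows $N-i$ in row $i$. These two encodings agree (or cross) exactly once, near $i=\lfloor N/2\rfloor$, and the verification tile $t_v$ is designed to attach precisely at that unique row; I would let $t_v$ be, or immediately spawn, the tile $t_m$. Placing it at the left end of the row makes the ``no tile to its left'' condition automatic, and the uniqueness of the crossing yields that $t_m$ appears nowhere else.

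Correctness is then the concatenation of the three cited lemmas together with bookkeeping. The step I expect to be the genuine obstacle is not any individual counter, since those are handled by the cited lemmas, but the geometric choreography between phases: showing that $t_{copy}$ and $t_v$ can fire only after the preceding phase has terminated in its intended configuration, that the (up to) three numbers never collide inside the $3\log(N)$ width, and that no spurious copy of $t_m$ can nucleate elsewhere on $R$. Closely tied to this is the parity accounting, since for odd $N$ the increasing and decreasing values never coincide exactly and the comparator must be tuned so that the crossing is reported at $\lfloor N/2\rfloor$ rather than $\lceil N/2\rceil$; getting this off-by-one right, uniformly in $N$, is the delicate part of the argument.
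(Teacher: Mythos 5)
Your plan takes a genuinely different route from the paper, and it has a gap at exactly the point where the two routes diverge. The paper never compares two counters. It writes $N=2^n+k$ and computes the middle as $2^{n-1}+\lfloor k/2\rfloor$ directly: IBC1 counts up to $N$ while structurally marking the rows that are powers of two (the $t_O$ tiles and the $t_v$ columns hanging below each new most significant bit), a second IBC runs back down from row $N$ to row $2^n$ and thereby outputs $k$, the value $k$ is copied down to row $2^{n-1}$ along the topmost $t_v$ column, halved by discarding its least significant bit, shifted left by the U-turn system, and finally a single DBC seeded with $\lfloor k/2\rfloor$ runs until its built-in zero detector fires, which happens precisely at row $2^{n-1}+\lfloor k/2\rfloor=\lfloor N/2\rfloor$. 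The only ``detection'' ever needed is the DBC's passage through zero, which Lemma~\ref{DBC} already provides via the unique tile $t_{1-0}^*$.

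Your construction instead transports $N$ back to the bottom and runs an increasing counter (showing $i$ in row $i$) next to a decreasing counter (showing $N-i$ in row $i$), marking the row where they ``agree or cross.'' Detecting that coincidence requires testing equality (or sign of difference) of two $\lceil\log N\rceil$-bit row tile numbers lying side by side, at every row. None of the three cited gadgets does this: IBC counts up, DBC counts down and detects zero, and the U-turn copies. The tile $t_v=(org,1,org,1)$ in the paper is merely a support-column tile, not a comparator, so you cannot ``let $t_v$ attach precisely at that unique row'' without designing an entirely new bit-by-bit comparison subsystem, proving it fires exactly once, and budgeting its width inside the $3\log N$ bound alongside the two counters. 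You correctly flag the parity tuning as delicate, but the comparator itself is the missing idea; the paper's decomposition $\lfloor N/2\rfloor=2^{n-1}+\lfloor k/2\rfloor$ exists precisely to reduce the problem to a floor-halving (drop one bit) plus a zero test, both of which the existing gadgets supply.
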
	

\begin{proof}
Let $R$ be an explicitly bounded rectangle and $N=2^n+k$ with $k < 2^n$ be the  height of $R$. Without loss of generality, we assume that the specially marked horizontal sides of $R$ being "Start" at the bottom and the other, "Finish", at the top, as in Fig.~\ref{fig:mfs}.

We use the IBC, DBC and U-turn systems from the previous sections to define our middle finding SFTAM TAS that finds $\frac{N}{2}= 2^{(n-1)}+\frac{k}{2}$. There are two copies of the IBC, named $IBC1$ and $IBC2$.
To avoid any confusion regarding tile types of IBC, DBC and U-turn systems each label is marked with its system name.
For example, we use $1_{IBC}$ for the IBC system, and $1_{DBC}$ for the DBC system. 
However, whenever there is no ambiguity,  we only use label $1$ for simplicity.
See an overview of the assembly in  Fig.~\ref{fig:mfs} to follow the structure of the proof. 
Now we describe the steps for the middle finding system.

  \medskip

\noindent\emph{0. Growing  a column of tiles of type $t_{++}$ until "Finish".}

The assembly starts to grow when $\sigma_{+} $ from the $IBC1$ System is placed besides the starting tiles. The tiles of type $t_{++}$ pave the way until becoming blocked by "Finish" tiles, creating a column of tiles. These tiles are the support for the $IBC1$ system.

  \medskip

\noindent\emph{1. Using the IBC System $IBC1$ to find the height $N=2^n+k$ of $R$.}

Recall from Definition~\ref{IBC} of IBC systems, that each row tile number represents the height of the assembly along the $y$ axis.
At the end, the assembly is blocked at the endpoint tiles of $R$ in the row tile number $N$. 

Here we modify the labels of the most significant bit tiles.
We set $t_{S}=(1_S, 0, 1, 1)$ instead of $t_{S}=(1_S, 0, 1, \epsilon)$,  i.e the last label is changed to $1$. This change will be used for starting  the growth of the second IBC system.
In addition, instead of $t_{O}=(1_s, over, \epsilon, \epsilon)$ we have $t_{O}=(1_s, over, org, org)$ ("org" stands for "orange"). The reason for this modification is that it will be useful in the $3$rd step where we want to copy the value of $k$.

\medskip

Here by this new modifications,  immediately two types of tiles will grow. These tiles  will be used later.
Firstly, each time that in the IBC system a new most significant bit tile is added, a vertical tile bond grows in its south by $t_v=(org,1, org,1)$, where org is the new southern label of $t_O$, and the label 1 is $1_{copy}$. (They will be used in the copy system.)
Secondly, when $t_v$ tiles collide with previous $t_v$ tiles, a double column of tiles of type $t_b$ appears. (They will be used as block tiles in the U-turn system.) 

 \medskip

\noindent\emph{2. Returning to row number $2^n$ using the second IBC system $IBC2$, which then outputs the value of $k$ }

Afterwards, at the collision of $IBC1$ with the finish block tiles, $IBC2$ appears next to the most significant bit tile of $IBC1$. 
This is where our modification of $IBC1$ becomes useful. In fact, changing the fourth label of the most significant bit tile of the $IBC1$ system to $1$, enables this new IBC system $IBC2$ to start growing, using the most significant column of $IBC1$ as its support.

Note that due to the SFTAM rotating tiles, the two IBC systems are able to grow in opposite directions. 
Thus, the assembly returns towards the row tile number of $2^n$ using the $IBC2$ system.

consequently, the value of the last row tile number of $IBC2$ is $k$.
 \medskip

\noindent\emph{3. Copying $k$ until $2^{n-1}$ (the middle of $2^n$).}

Now, the topmost column of tiles of type $t_v$ from the first step will act as the support for the third step.
Thanks to the vertical tile bond that is located below the most significant bit tile of the $2^n$ row tile number, the $k$-row tile number is copied $2^{n-1}$ times until the $2^{n-1}$-row tile number of $IBC1$  (this is half of $2^n$).
To this aim, we use tiles of types $t_{copy}=(x, copy,x,copy)$ and $t_{copy}^*=(x, copy,x,1_{DBC})$ (for $x\in\{0,1\}$) such that $x$ is the label of the IBC system, and $t_{copy}^*$ is used for copying the most significant bit tiles of the IBC system.

Since the column of tiles of type $t_v$ is the support of this step, the copy process stops after $2^{n-1}$ copies.

 \medskip

\noindent\emph{4. Halving $k$ by eliminating its least significant bit.}

Now, the copied row tile number with value $k$ is halved by getting rid of its least significant bit. The block tiles that already appeared when two vertical tile bonds of type $t_v$ met each other, discard the least significant bit tile of the $k$-row tile number and copy the $\frac{k}{2}$-row tile number to the left of the first block tiles. The tiles here are a tile of type $t_{b'}=(org,c,\epsilon,u)$ such that $u$ is a label with strength~2, and a seed $\sigma_u$ of the U-turn system that binds to it. These are block tiles. Also, from the south of $\sigma_u$, there are tiles of type $t_u$, the support tiles of the U-turn system. They are shown in black in Fig.~\ref{fig:mfs}.
    
  \medskip

\noindent\emph{5. Shifting $\frac{k}{2}$ to the left by a U-turn system.}

The column of tiles of type $t_u$ from the previous step form the support for this step. The binding of the block tiles to the $\frac{k}{2}$-row tile number form a U-turn block seed and the $\frac{k}{2}$-row tile number shifts to its left.
Remember that the U-turn system needs space for the $\frac{k}{2}$ tiles in the south of the U-turn seed; this condition is true at the row number $2^{n-1}$ along the y axis, since $\frac{k}{2} < 2^{n-1}$.

 \medskip

\noindent\emph{6. Going up by $\frac{k}{2}$ using the DBC system.}

After shifting $\frac{k}{2}$ to the left at the row $\frac{2^n}{2}$, a DBC system starts from $\frac{k}{2}$ along the side of the Copy system. Note that here the label $1_{DBC}$ of $t^*_{copy}$ (from the Copy system) plays the role of the support tile type in this DBC system. Moreover, the first row of the DBC system i.e. the seed, is the copied $\frac{k}{2}$ row tile number from the U-turn system. 
We add two final modifications to the DBC. We change the tile type $t_{0-1}^*$ to $t_m=(1', 0, 1', --)$ by changing the fourth label.

After passing the zero with the DBC system, a tile of type $t_m$ appears in row $\frac{N}{2}=2^{n-1}+\frac{k}{2}$.
The process of finding the middle finishes here by the appearance of $t_m$ in the assembly. Note that $t_m$ only appears once because the support of the DBC system is the left side of the copy system.

For an illustration, see Fig.~\ref{fig:mfs} (where $t_m$ is in red).

\end{proof}

	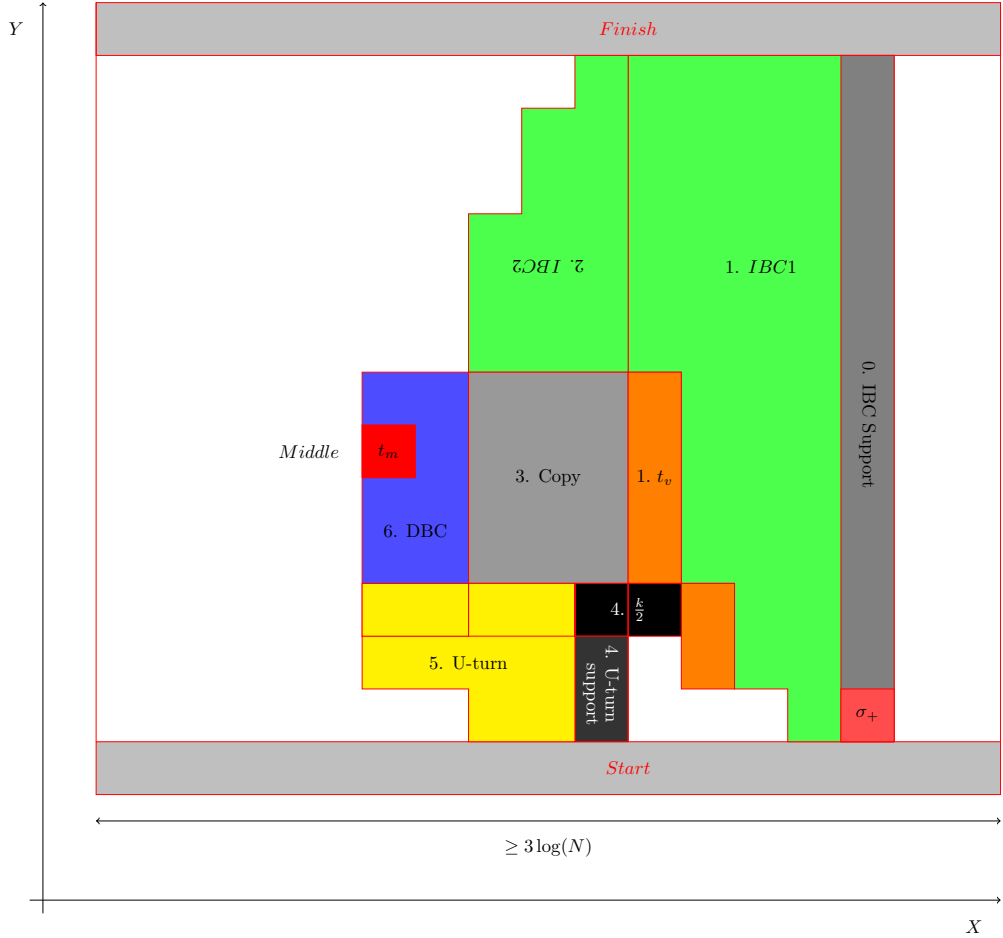
\begin{figure}[h!]
		\centering
		\scalebox{0.7}{\begin{tikzpicture}[node distance=7mm]
		
		\filldraw[fill=gray!50, draw=red]	(2,-1)--(2,0)--(-15,0)--(-15,-1)--(2, -1);
		\node[inner sep=2pt, label]  at (-5,-0.5){$\textcolor{red}{Start}$};
		
			\filldraw[fill=gray!50, draw=red]	(2,13)--(2,14)--(-15,14)--(-15,13)--(2, 13);
			
		\node[inner sep=2pt, label]  at (-5,13.5){$\textcolor{red}{Finish}$};
		
			\draw[red] (-15,0)--(-15,14);
			\draw[red] (2,0)--(2,14);
			
			\draw[->] (-16,-3.25)--(-16,14);
		\node[inner sep=2pt, label]  at (-16.5,13.5){$Y$};

			\draw[->] (-16.25,-3)--(2,-3);
			\node[inner sep=2pt, label]  at (1.5,-3.5){$X$};

\filldraw[fill=green!70, draw=red](0,0) --(0,13) --(-5,13) --(-5,7)--(-4,7)--(-4,3)--(-3,3)--(-3,1)--(-2,1)--(-2, 0)--(0,0);

	\filldraw[fill=gray, draw=red](0,0)--(-1,0)--(-1,13)--(0,13)--(0,0);
	\node[rotate=270,inner sep=2pt, label]  at (-0.5,6){0. IBC Support};

	\filldraw[fill=red!70, draw=red](0,0)--(-1,0)--(-1,1)--(0,1)--(0,0);

	\node[inner sep=2pt, label]  at (-0.5,0.5) {$\textcolor{black}{\sigma_{+}}$};

\node[inner sep=2pt, label]  at (-2.5,9) {\textcolor{black}{1. $IBC1$}};

\filldraw[fill=green!70, draw=red](-5,13)--(-6,13)--(-6,12)--(-7,12)--(-7,10)--(-8,10)--(-8,7)--(-5,7)--(-5,13);
\node[inner sep=2pt, label, rotate=180]  at (-6.5,9) {\textcolor{black}{2. $IBC2$}};

\filldraw[fill=gray!80, draw=red](-5,3)--(-8,3)--(-8,7)--(-5,7)--(-5,3);
\node[inner sep=2pt, label]  at (-6.5,5) {\textcolor{black}{3. Copy}};

\filldraw[fill=orange, draw=red, ](-5,7)--(-4,7)--(-4,3)--(-5,3)--(-5,7);

\node[inner sep=2pt, label]  at (-4.5,5) {$\textcolor{black}{1.~ t_v}$};

\filldraw[fill=orange, draw=red](-4,3)--(-3,3)--(-3,1)--(-4,1)--(-4, 3);

\filldraw[fill=black, draw=red](-4,3)--(-4,2)--(-5,2)--(-5,3)--(-6,3)--(-6,2)--(-5,2)--(-5,3)--(-4,3);
\node[inner sep=2pt, label]  at (-5,2.5) {\textcolor{white}{4. $\frac{k}{2}$}};

\filldraw[fill=black!80, draw=red](-6,2)--(-5,2)--(-5,0)--(-6,0)--(-6,3);
\node[rotate=270, inner sep=2pt, label]  at (-5.3,1.1) {\textcolor{white}{4. U-turn}};
\node[rotate=270, inner sep=2pt, label]  at (-5.7,0.9) {\textcolor{white}{support}};

\filldraw[draw=red](-5,0)--(-5,2);

\filldraw[fill=yellow, draw=red](-6,3)--(-6,2)--(-8,2)--(-8,3)--(-10,3)--(-10,2)--(-8,2)--(-8,3)--(-6,3)--(-6,0)--(-8,0)--(-8,1)--(-10,1)--(-10,2);
\filldraw[fill=yellow, draw=red](-10,3)--(-10,2)--(-8,2)--(-8,3)--(-10,3);
\node[inner sep=2pt, label]  at (-8,1.5) {\textcolor{black}{5. U-turn}};

\filldraw[fill=blue!70, draw=red](-10,3)--(-10,7)--(-8,7)--(-8,3)--(-10,3);
\node[inner sep=2pt, label]  at (-9,4) {\textcolor{black}{6. DBC}};

\filldraw[fill=red, draw=red](-10,5)--(-9,5)--(-9,6)--(-10,6)--(-10,5);

 \node[inner sep=2pt, label]  at (-9.5,5.5) {$\textcolor{black}{t_m}$};

 \node[inner sep=2pt, label]  at (-11,5.5) {$\textcolor{black}{Middle}$};

            \draw[<->] (-15,-1.5)--(2,-1.5);
            	        \node (log) at (-6.5,-2) {$\geq 3\log(N)$};
			\end{tikzpicture}}
		
		\caption{The steps of the middle finding system process, starting from the seed that is in red on the right. The tile $t_m$ (the  red tile on the left) appears in the middle of two rows of tiles that are shown by start and finish.}
		\label{fig:mfs}
	\end{figure}

\section{Distinguishing order-1 cuboids by their genus using SFTAM}\label{problem}

We now introduce a SFTAM tile assembly system named $\mathcal{S_G}$ which can distinguish the order-1 cuboids $C \in O_1^t$ (of genus 1) from the others in $O_1$ using the definitions and lemmas of Sections~\ref{def} and~\ref{lemma}.

Our main result is stated in the following theorem.

\begin{theorem}\label{maintheorem}
There is a SFTAM tile self-assembly system $\mathcal {S_G} = (\Sigma, T, \sigma, str, \tau)$ and a subset of tile-types $Y = \{t_{reg} , t_{mfs} \}\cup T_{ibc} \subseteq T$  such that for any order-1 cuboid $C=C_0 \setminus C'_0$ with the dimensions 
at least 10 for $C'_0$, if $\mathcal{S_G}$ assembles on $C$ starting from a seed which is placed in a normal placement, the following holds:

\begin{itemize}

\item if $C$ has genus $1$, every terminal assembly of $\mathcal{S}$ on $C$ contains at least one tile of $Y$, and

\item if $C$ has genus $0$, then no tile of $Y$ appears in any producible assembly of $\mathcal{S}$ on $C$.

\end{itemize}
\end{theorem}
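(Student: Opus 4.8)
The plan is to build $\mathcal{S_G}$ in two phases, exactly as sketched in Figures~\ref{fig:plan1} and~\ref{plan2}: first a \emph{skeleton} consisting of three closed curves $R_X$, $R_Y$, $R_Z$ that cut the surface of $C$ along its three ``middle planes'' $P_X$, $P_Y$, $P_Z$, and then an \emph{inner filling} of the regions carved out by the skeleton, using the parity tiles $t_{even}$ and $t_{odd}$. To lay down each loop I would invoke the middle finding system of Lemma~\ref{mfs} three times, once per axis: starting from the seed placed in a normal placement on a face, the IBC subsystem measures the length of the face in a given direction, and the composite system deposits a marked tile $t_m$ at the middle coordinate, which then seeds the next arc of the loop as it wraps around $C$. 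The hypothesis that all three dimensions of $C'_0$ are at least $10$ is exactly what guarantees each measured side is long enough (width at least $3\log N$, as required in Lemma~\ref{mfs}) for the middle finding to run to completion, and wide enough that opposing filling fronts actually meet.

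For the genus-$0$ direction I would argue topologically. On a surface of genus~$0$ each midplane $P_i$ meets the boundary of $C$ in a single simple closed curve, so the skeleton is a well-defined arrangement of three loops that partitions the surface into simply connected regions; as for three mutually orthogonal great circles on a sphere, the dual adjacency graph of these regions is the cube graph $Q_3$, which is bipartite. The filling rule assigns $t_{even}$/$t_{odd}$ according to the side of each loop, and bipartiteness makes this parity globally consistent: whenever two filling fronts meet they meet across a skeleton edge and carry opposite parities, so the collision tile $t_{reg}\in Y$ is never triggered. Likewise, since each midplane yields an honest simple loop, the measuring step runs normally and neither $t_{mfs}$ nor any tile of $T_{ibc}$ is ever produced. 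The subtle point to nail down here is that this must hold for \emph{every} producible assembly, not merely the terminal ones; I would handle this by showing that each tile type of $Y$ requires as input a pair of glues that can coexist only when two fronts of equal parity abut (or when a measuring ribbon fails to close), a configuration that genus-$0$ bipartiteness forbids regardless of the order of attachment.

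For the genus-$1$ direction I would do a case analysis on the position of the tunnel relative to the skeleton, matching the three subfigures of Figure~\ref{plan2}. If the tunnel is disjoint from the skeleton (Fig.~\ref{plan2}(a)), then the two filling fronts that enter the tunnel from its two mouths belong to regions whose parities would have to coincide for a genus-$0$ colouring but cannot, since the tunnel supports a non-separating cycle; when these fronts collide inside the tunnel the parity clash produces $t_{reg}$. If the tunnel crosses one or two midplanes (Fig.~\ref{plan2}(b)), the corresponding midplane no longer meets the surface in a single simple loop, so the measuring IBC runs into the handle and fails to close up as it would on genus~$0$, forcing a tile of $T_{ibc}$. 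Finally, if the tunnel sits at the mutual intersection of the skeleton (Fig.~\ref{plan2}(c)), the middle finding system is disrupted at the crossing and $t_{mfs}$ appears. The crux is to show these cases are exhaustive: every tunnel must avoid, cross, or coincide with the skeleton, and in each situation the corresponding obstruction is unavoidable in every terminal assembly.

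The step I expect to be the main obstacle is precisely this completeness-and-unavoidability claim in the genus-$1$ direction: one must prove that no matter where the seed lies or in what order tiles attach, the handle genuinely forces one of the three obstructions, i.e.\ that the tunnel cannot be simultaneously ``hidden'' from both the skeleton and the filling. This is essentially a topological invariance statement---that the parity obstruction created by a non-separating cycle cannot be cancelled by any choice of filling---combined with the combinatorial bookkeeping of checking that the glue labels of $t_{reg}$, $t_{mfs}$, and $T_{ibc}$ fire in exactly the claimed collisions and nowhere else. I would therefore isolate the topological content in a single lemma asserting that a genus-$1$ order-1 cuboid admits no consistent $2$-colouring of the regions induced by its three midplanes, and then reduce both directions of Theorem~\ref{maintheorem} to this lemma together with the correctness of the IBC, DBC, U-turn, and middle finding subsystems established in Section~\ref{sec-lemmas}.
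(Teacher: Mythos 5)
Your plan follows essentially the same route as the paper: a skeleton of ribbons built from the middle-finding/IBC machinery, a parity filling governed by the $2$-colorable region graph for genus~$0$, and a three-way case analysis on how the tunnel meets the planes (producing $t_{reg}$, $t_{mfs}$, or a tile of $T_{ibc}$) for genus~$1$, with the same identification of exhaustiveness/unavoidability as the crux. The only notable deviations are that the paper runs the middle-finding system only for $R_Y$ (with $R_X$ a plain ribbon through the seed and $R_Z$ seeded by the found middle, rather than invoking the MFS once per axis) and organizes the genus-$1$ cases by the number of surviving regions ($7$, $5$, or $1$) instead of a single no-$2$-colouring lemma; these are presentational differences, not a different argument.
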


There is a crucial point to consider: the system presented here works for the case where the assemblies' seed is placed on a \emph{normal placement} $p$, i.e. the place $p$ is ``far enough" from the borders on the surface of the cuboid.
Therefore first we define ``normal placement". 
Then, we explain a sketch of the construction of the assemblies before diving into the presentation of the system $\mathcal{S_G}$. 

\begin{definition}[Normal placement]
Let $C$ be an order-1 cuboid such that $(x_1, y_2, z_2)$,  $(x_2, y_2, z_2)$, ... ,$(x_n, y_n, z_n)$ are its vertices. A placement $p \in Pl(C)$ with position $(x, y, z)$ is a \emph{normal placement} of $C$ if and only if for all $i \in \mathbb{N}$, two of the following inequalities hold:
$\mid x_i - x \mid \geq 3\log (N)+6$, $\mid y_i - y \mid \geq 3\log (N)+6$ and $ \mid z_i - z \mid \geq 3\log (N)+6$, where $N$ is the largest of the three dimensions of the cuboid.
The set of all normal placements is denoted by $Pl_N(C)$.
\end{definition}

The simplest example to demonstrate the concept of normal placement is on a cuboid $C \in O_0$. 
In this case, normal placements consist of the cuboid's surface minus its ``frame" i.e. the border of the cuboid's edges with a thick margin. Hence there are $6$ disconnected areas on $C$'s faces where the normal placements are.
The normal placements on order-1 cuboids can be described similarly.
It should be noted that in this case there are more than $6$ disconnected areas. 
Recall the order-1 cuboid's type: the order-0 cuboid $(C\in O_0)$, the order-1 cuboid with a pit $(C\in O_1^p)$ or with a concavity $(C\in O_1^c)$ whose genus are $0$; and the order-1 cuboid with tunnel $(C\in O_1^t)$ whose genus is $1$. 
There are up to $9$ disconnected areas with normal placements for $C \in C_1^O$, $10$ for $C \in O_1^t$ and $11$ for $C \in O_1^p$, depending on the new faces that are generated by the nature of the order-1 cuboid's construction.
We note that, by definition, when the smallest dimension is large enough with respect to the others, almost all the placements on order-1 cuboids are normal placements. From now on, assume that the seed of assemblies in $\mathcal{S_G}$ is placed on a normal placement of order-1 cuboids.

\subsection{Region partition  on order-1 cuboids }\label{reg_part}

Let $C = C_0 \setminus C'_0$ be an order-1 cuboid.  In order to detect a potential tunnel whose entrances are on parallel faces, the construction separates these faces. For this purpose we use three planes, one for each pair of parallel faces of $C$, located between them. Let $P_X$, $P_Y$, $P_Z$ be three planes in this way: take $p\in Pl_N(C)$. The plane $P_X$ is passing on $p$ and is parallel to the plane formed by the $Y$-axis and $Z$-axis. The plane $P_Y$ is parallel to the plane formed by the $X$-axis and $Z$-axis  and is passing  on $p$. The plane $P_Z$ is parallel to the plane formed by the $X$-axis and $Y$-axis and contains the center of $C_0$. 
In Fig.~\ref{planes} the seed in yellow is in the point $p$  and the plane $P_X$, $P_Y$ and $P_Z$ are framed respectively by the ribbons $R_X$ (in red), $R_Y$ (in green) and $R_Z$ (in blue) on $C$. 
For $i \in \{ X, Y \}$,  $R_i $ is the connected component of $ \partial C \cap P_i $  that contains $p$.
If $R_X$ and $R_Y$ intersect in one point, $R_Z$ in the empty set. If they intersect in two points, $R'_Z = P_Z \cap \partial C$ and $R_Z$ is the connected component of $R'_Z$ that has an intersection with $R_Y$.
The difference $C \setminus \{ R_X, R_Y, R_Z \}$
consists up to $8$ connected components is called \textit{regions}. 
They are noted by $R_{XYZ}$ such that $X , Y , Z\in\{0, 1\}$ where $0$ represents the left, down and back sides, and $1$ represents the right, up and front sides. 
For example, $R_{101}$ refers to the region at the right, down and front side of $C$.
This way of partitioning $C$ helps to define the graph $G_C$, the \emph{region
graph} of $C$:
\begin{definition}[Region graph]
Let $ C = C_0 \setminus C_{0}^{'}$  be an order-1 cuboid with $p$ as a position it, the planes $P_X$, $P_Y$ two  perpendicular planes passing on $p$, and $P_Z$  a plane perpendicular on both planes passing through the middle of the $P_Y$. Also, let and  $R_Z$ for $R_i = \partial C_0 \cap P_i $ for $i \in \{ X, Y, Z \}$.  There is a graph assigned to $C$ named the  \emph{region
graph} $G_C(p)$ whose vertices are the regions separated by $R_X$, $R_Y$ and an edge is added between two regions if and only if they share  $P_X$, $P_Y$ or $P_Z$. 
\end{definition}
For an order-0 cuboid $C$, $G_C(p)$ is a bipartite graph and therefore it is $2$-colorable.
The region graph for an order-0 cuboid is presented in Fig.~\ref{fig:RegionGraph}. 

If $C$ be an order-1 cuboid with a tunnel, the number of disconnected regions can be less than~8 depending on intersection of the tunnel with
 three planes.  The \emph{axis} of the tunnel is the direction orthogonal to its entrances. The three planes can intersect the tunnel in two ways: along the width of the tunnel when the plane is perpendicular to the axis of the tunnel, or along the length of the tunnel when the plane is parallel to the axis of the tunnel. Thus, a tunnel may have an intersection with up to three perpendicular planes, one along the width, and up to two other planes along the length.
Based on this, three types of partitions into regions are possible and the possible numbers of regions are: 7 regions when one plane intersects along the width of the tunnel, 5 regions when one plane intersects along the length of the tunnel and one along the width (See Fig~\ref{fig:5region}), and 1 region when  three perpendicular planes intersect along the tunnel, one along the width and the others along the length.

\begin{figure}[h!]
    \centering

    \begin{subfigure}[t]{.45\textwidth}
      \begin{tikzpicture}[scale=2]
  \draw[thick](2,2,0)--(0,2,0)--(0,2,2)--(2,2,2)--(2,2,0)--(2,0,0)--(2,0,2)--(0,0,2)--(0,2,2);
  \draw[thick](2,2,2)--(2,0,2);
  \draw[gray](2,0,0)--(0,0,0)--(0,2,0);
  \draw[gray](0,0,0)--(0,0,2);
  \node[inner sep=2pt,circle,draw,fill=white, label]  at (0,0,0) {$\textcolor{black}{R_{000}}$};
   \node[inner sep=2pt,circle,draw,fill=white,label]  at (2,0,2) {$\textcolor{black}{R_{101}}$};
  \node[inner sep=2pt,circle,draw,fill=white,label]  at (0,2,2) {$\textcolor{black}{R_{011}}$};
   \node[inner sep=2pt,circle,draw,fill=white,label]  at (2,2,0) {$\textcolor{black}{R_{110}}$};
  
\node[inner sep=2pt,circle,draw,fill=white, label]  at (0,2,0){$\textcolor{black}{R_{010}}$};
\node[inner sep=2pt,circle,draw,fill=white, label]  at (0,0,2){$\textcolor{black}{R_{001}}$};
\node[inner sep=2pt,circle,draw,fill=white, label]  at (2,0,0){$\textcolor{black}{R_{100}}$};
\node[inner sep=2pt,circle,draw,fill=white, label]  at (2,2,2){$\textcolor{black}{R_{111}}$};
\end{tikzpicture}
      \caption{The region graph}
    \end{subfigure}\qquad
    \begin{subfigure}[t]{.45\textwidth}
      \begin{tikzpicture}[scale=2]
  \draw[thick](2,2,0)--(0,2,0)--(0,2,2)--(2,2,2)--(2,2,0)--(2,0,0)--(2,0,2)--(0,0,2)--(0,2,2);
  \draw[thick](2,2,2)--(2,0,2);
  \draw[gray](2,0,0)--(0,0,0)--(0,2,0);
  \draw[gray](0,0,0)--(0,0,2);
  \node[inner sep=2pt,circle,draw,fill=white, label]  at (0,0,0) {$\textcolor{black}{R_{000}}$};
   \node[inner sep=2pt,circle,draw,fill=white,label]  at (2,0,2) {$\textcolor{black}{R_{101}}$};
  \node[inner sep=2pt,circle,draw,fill=white,label]  at (0,2,2) {$\textcolor{black}{R_{011}}$};
   \node[inner sep=2pt,circle,draw,fill=white,label]  at (2,2,0) {$\textcolor{black}{R_{110}}$};
  
\node[inner sep=2pt,circle,draw,fill=black, label]  at (0,2,0){$\textcolor{white}{R_{010}}$};
\node[inner sep=2pt,circle,draw,fill=black, label]  at (0,0,2){$\textcolor{white}{R_{001}}$};
\node[inner sep=2pt,circle,draw,fill=black, label]  at (2,0,0){$\textcolor{white}{R_{100}}$};
\node[inner sep=2pt,circle,draw,fill=black, label]  at (2,2,2){$\textcolor{white}{R_{111}}$};

\node[line width=0.1mm,rectangle, minimum height=2mm,minimum width=2mm,fill=white!70,rounded corners=1mm,draw, label]  at (2,2,0) {$\textcolor{black}{t_{even}}$};
\node[line width=0.1mm,rectangle, minimum height=2mm,minimum width=2mm,fill=white!70,rounded corners=1mm,draw, label]  at (2,0,2) {$\textcolor{black}{t_{even}}$};
\node[line width=0.1mm,rectangle, minimum height=2mm,minimum width=2mm,fill=white!70,rounded corners=1mm,draw, label]  at (0,2,2) {$\textcolor{black}{t_{even}}$};
\node[line width=0.1mm,rectangle, minimum height=2mm,minimum width=2mm,fill=white!70,rounded corners=1mm,draw, label]  at (0,0,0) {$\textcolor{black}{t_{even}}$};

\node[line width=0.1mm,rectangle, minimum height=2mm,minimum width=2mm,fill=black,rounded corners=1mm,draw, label]  at (0,2, 0) {$\textcolor{white}{t_{odd}}$};
\node[line width=0.1mm,rectangle, minimum height=2mm,minimum width=2mm,fill=black,rounded corners=1mm,draw, label]  at (2,0, 0) {$\textcolor{white}{t_{odd}}$};
\node[line width=0.1mm,rectangle, minimum height=2mm,minimum width=2mm,fill=black,rounded corners=1mm,draw, label]  at (2,2, 2) {$\textcolor{white}{t_{odd}}$};
\node[line width=0.1mm,rectangle, minimum height=1mm,minimum width=2mm,fill=black,rounded corners=1mm,draw, label]  at (0,0, 2) {$\textcolor{white}{t_{odd}}$};
\end{tikzpicture}
\label{InnerTiling}
      \caption{Tile coloring of the region graph, and
filling up the regions via inner filling tiles of types $t_{even}$ and $t_{odd}$.}
    \end{subfigure}

\caption{The region graph $G_C$:
$V(G_{S_G})$ are the distinct regions, and
$E(G_{S_G})$ contains the edges between neighboring regions.}
\label{fig:RegionGraph}
\end{figure}
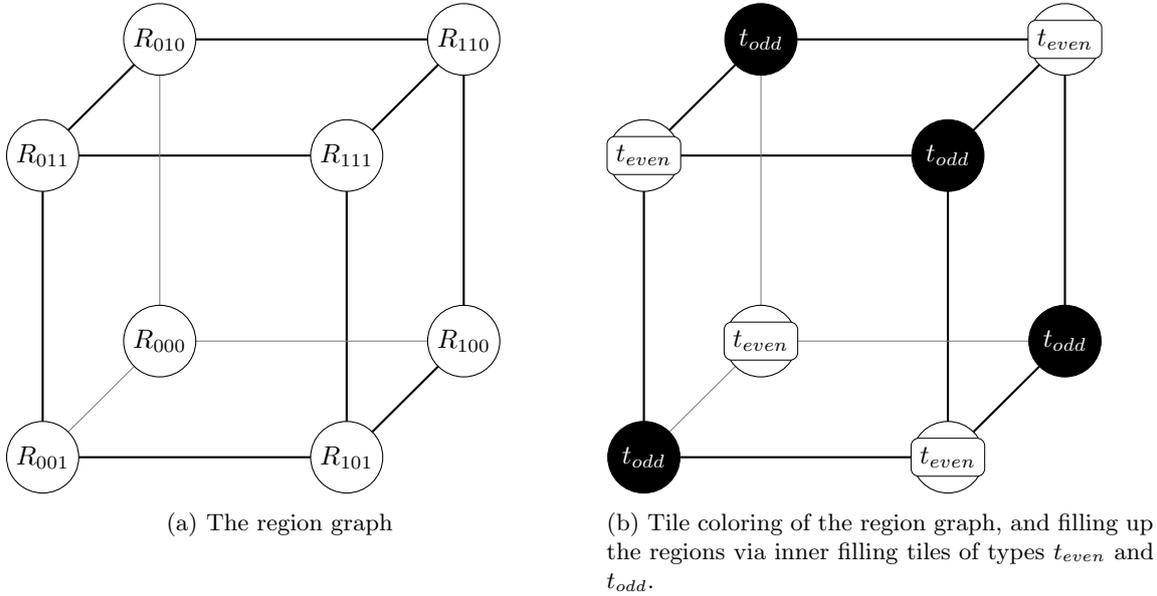

\subsection{Description of the TAS $\mathcal{S_G}$}

We now describe $\mathcal{S_G}$ formally.

\medskip

\noindent\emph{The tile types of $P_X$:}

\begin{itemize}

\item  The seed $\sigma=(\epsilon, \epsilon, x', \epsilon)$
\item  $x'=(x',x''_e,x,x''_w)$ and $x=(x, x_e, x, x_w)$
\item $x'_e=(x'_e, x'_e , x_e , x''_e)$ and  $x'_w=(x'_w, x''_w,  x_w, x'_w)$ 
\item $x_e=(x_e, x_e, x_e, x_e)$ and $x_w=(x_w , x_w , x_w , x_w)$

    \end{itemize}

	\begin{figure}[h!]
		\centering
        \includegraphics[scale=0.5]{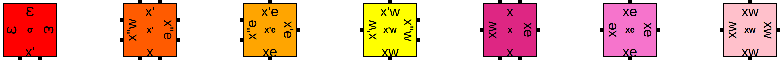}
		\caption{The tile types for the $R_X$.}
		\label{fig:tuiles_PX}
	\end{figure}

    \medskip

\noindent\emph{The tile types for the ribbons framing $P_Y$:}
 
    \begin{itemize}
   
    \item The starter tile types : 
    $y_e1=(x_e, y_e1, x'_e, \epsilon)$ and $y_w1=(x_w,\epsilon, x'_w, y_w1)$, 
    $y_e2=(\epsilon, ,y_e2,\epsilon, y_e1)$ and
    $y_w2=(\epsilon, y_w1,\epsilon, y_w2)$, 
    $y_e=(\epsilon, ,y_e,\epsilon, y_e2)$ and
    $y_w=(\epsilon, y_w2,\epsilon, y_w)$.

    \item The tile types of four middle finding systems (presented in \ref{mfs}) of $MFS^u_e$, $MFS^d_e$ (east) with shared seed $\sigma^+_e=(over,++,over,y_e)$ and support tiles of type $t_{++}=(1,++,1,++)$, and $MFS^u_w$, $MFS^d_w$ (west) with shared seed $\sigma^+_w=(over,y_w,over,++)$ and support $t_{++}$. Apart from that, each system has its own tile types.
        \end{itemize}

    	\begin{figure}[h!]
		\centering
		\begin{subfigure}[t]{0.5\textwidth}
		\includegraphics[scale=0.25]{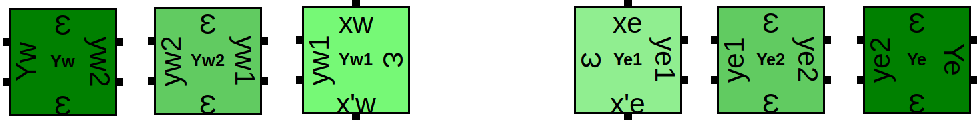}
		\caption{Starter tiles}
		\end{subfigure}\\
		\begin{subfigure}[t]{0.2\textwidth}
		\centering
		\hspace{1cm}\includegraphics[scale=0.5]{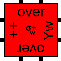}\hspace{1cm}
		\caption{West seed}
		\end{subfigure}\qquad
		\begin{subfigure}[t]{0.2\textwidth}
		\centering\hspace{1cm}\includegraphics[scale=0.5]{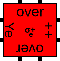}\hspace{1cm}
			\caption{East seed}
		\end{subfigure}\qquad
		\begin{subfigure}[t]{0.2\textwidth}
		\centering\hspace{1cm}\includegraphics[scale=0.5]{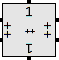}\hspace{1cm}
			\caption{Support tile}
		\end{subfigure}
		\caption{The starter tile types, seeds and support tile type for the ribbon framing $P_Y$.}
		\label{fig:tuiles_PY_starter}
	\end{figure}

        \medskip

\noindent\emph{The tile types of $P_Z$:}
    
    \begin{itemize}
   
   \item $t_{mr}=( 1, 1 ,0' , z"r)$
   , $z_{1r}=( ze , z"r , z"r , zr)$
   , $z_{2r}=(z"r , \epsilon , zo , zo)$
   , $z_{r}=(ze , zr , r , zr)$
   , $z_{or}=( r , zo , zo , zo)$.
   
   \item $t_{ml}=(1 , z"l , o', 1)$
   , $z_{1l}=( z"l , zl , ze , z"l)$
   , $z_{2l}=( zo , zo , zl", \epsilon)$
   , $z_{l}=( l , zl , ze , zl)$
   , $z_{ol}=( zo , zo , l , zo)$.

    \end{itemize}
    \begin{figure}[h!]
  \centering
  \begin{subfigure}[t]{\textwidth}
   \centering
    \includegraphics[scale=0.5]{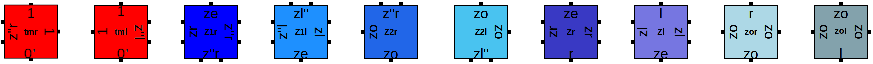}
    \caption{The tile types for the $R_Z$ ribbon.}
  \end{subfigure}\\
  \begin{subfigure}[t]{\textwidth}
   \centering
    \includegraphics[scale=0.5]{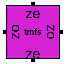}\hfill
    \caption{
    The tile type $t_{mfs}$, which may only appear if $C$ has genus~1, when the $R_Z$ ribbons meet each other.}
  \end{subfigure}
  
  \caption{The tile types for the ribbon framing $R_Z$.}
  \label{fig:tuiles_PZ}
\end{figure}

        \medskip

\noindent\emph{The tile types of inner filling:}
\begin{itemize}
\item  $t_{ee}=(z_{e1},x_{ev},z_e,x_e)$,
 $t_{ee1}=(z_{e2},\epsilon,z_{e1},x_e)$,
 $t_{ee2}=(z_{e3},\epsilon,z_{e2},x_e)$,
 $t_{ee3}=(z_{e4},\epsilon,z_{e3},x_e)$,
 $t_{ee4}=(z_{e},\epsilon,z_{e4},x_e)$.
\item $t_{eo}=(z_o,x_{od},z_{o1},x_e)$,
$t_{eo1}=(z_{o1},\epsilon,z_{o2},x_e)$,
 $t_{eo2}=(z_{o2},\epsilon,z_{o},x_e)$.
\item$t_{we}=(z_e,x_{w},z_{e1},x_{ev})$, 
 $t_{we1}=(z_{e1}, x_w,z_{e2},\epsilon)$,
 $t_{we2}=(z_{e2}, x_w,z_{e3},\epsilon)$,
 $t_{we3}=(z_{e3}, x_w,z_{e4},\epsilon)$,
 $t_{we4}=(z_{e4}, x_w,z_{e},\epsilon)$.
\item $t_{wo}=(z_{o1},x_{w},z_o,x_{od})$,
$t_{wo1}=(z_{o2}, x_w,z_{o1},\epsilon)$,
 $t_{wo2}=(z_{o}, x_w,z_{o2},\epsilon)$.
\item $t_{odd}=(z_o,x_{od},z_o,x_{od})$ and $t_{even}=(z_e,x_{ev},z_e,x_{ev})$.
        
    \end{itemize}
    
     \medskip

\noindent\emph{The tile types which demonstrate a tunnel on the cuboid:}
    \begin{itemize}
        \item $t_{reg}=(z_e,\epsilon,z_o,\epsilon)$
        \item $T_{ibc}$ is not a single tile type, but it is a set of tile types: $T_{ibc}=\{(i_1,i_2,i_3,i_4)$\} where $i_1,i_3\in\{0,1,over\}$, $i_2$ is a label from $IBC1^u_e \cup IBC1^d_e$ and $i_4$ is a label from  $IBC1^u_w \cup IBC1^d_w$ systems of the first part of the middle finding systems. 
        \item $t_{mfs}=(z_e,z_o,z_e,z_o)$
    
    \end{itemize}

\subsection{Overview of the assemblies of $\mathcal{S_G}$ on $O_1$}

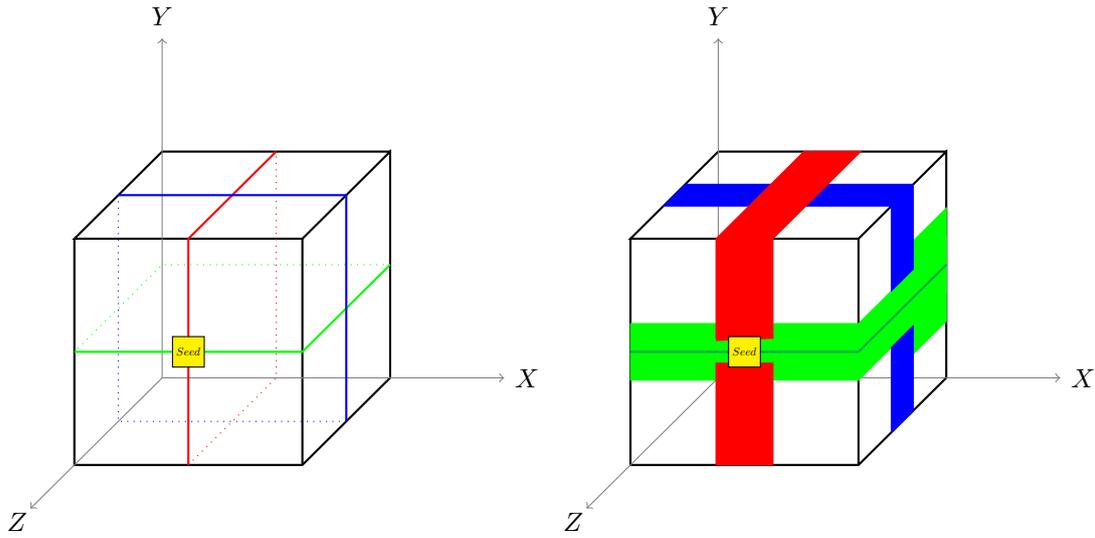
\begin{figure}[h!]
  \centering
  \begin{subfigure}[t]{.45\textwidth}
    \begin{tikzpicture}[scale=1.5]
      \draw[thick](2,2,0)--(0,2,0)--(0,2,2)--(2,2,2)--(2,2,0)--(2,0,0)--(2,0,2)--(0,0,2)--(0,2,2);
      \draw[thick](2,2,2)--(2,0,2);
      \draw[gray,->](0,0,0)--(3,0,0);
      \draw[gray,->](0,0,0)--(0,3,0);
      \draw[gray,->](0,0,0)--(0,0,3);

      \node at (3.2,0,0) {$X$};
      \node at (0,3.2,0) {$Y$};
      \node at (0,0,3.3) {$Z$};

      \draw[red,thick](1,2,0)--(1,2,2)--(1,0,2);
      \draw[red, dotted](1,0,2)--(1,0,0)--(1,2,0);
      
      \draw[green,thick](0,1,2)--(2,1,2)--(2,1,0);
      \draw[green, dotted](2,1,0)--(0,1,0)--(0,1,2);
      
      \draw[blue,thick](0,2,1)--(2,2,1)--(2,0,1);
      \draw[blue, dotted](2,0,1)--(0,0,1)--(0,2,1);
      
      \node[inner ysep=10pt,rectangle,draw,fill=yellow,scale=0.15mm,label]  at (1,1,2) {$\textcolor{black}{Seed}$}; 
      
    \end{tikzpicture}
  \end{subfigure}
  \begin{subfigure}[t]{.45\textwidth}
    \begin{tikzpicture}[scale=1.5]
      \draw[thick](2,2,0)--(0,2,0)--(0,2,2)--(2,2,2)--(2,2,0)--(2,0,0)--(2,0,2)--(0,0,2)--(0,2,2);
      \draw[thick](2,2,2)--(2,0,2);
      \draw[gray,->](0,0,0)--(3,0,0);
      \draw[gray,->](0,0,0)--(0,3,0);
      \draw[gray,->](0,0,0)--(0,0,3);

      \node at (3.2,0,0) {$X$};
      \node at (0,3.2,0) {$Y$};
      \node at (0,0,3.3) {$Z$};

      \filldraw[fill=green,draw=green] (0,1.25,2)--(2,1.25,2)--(2,1.25,0)--(2,0.75,0)--(2,0.75,2)--(0,0.75,2)--(0,1.25,2);
      \filldraw[fill=green,draw=green] (2,1.5,1)--(2,1.5,0)--(2,0.5,0)--(2,0.5,1)--(2,1.5,1);
      \draw[Green,thick](0,1,2)--(2,1,2)--(2,1,0);

      \filldraw[fill=blue,draw=blue] (2,1.25,0.75)--(2,2,0.75)--(0,2,0.75)--(0,2,1.25)--(2,2,1.25)--(2,1.25,1.25)--(2,1.25,0.75);
      \filldraw[fill=blue,draw=blue] (2,0.75,0.75)--(2,0.75,1.25)--(2,0,1.25)--(2,0,0.75)--(2,0.75,0.75);

      \filldraw[fill=red,draw=red] (0.75,0,2)--(1.25, 0,2)--(1.25, 0.90,2)--(0.75, 0.9,2) --(0.75,0,2);
      \filldraw[fill=red,draw=red] (0.75,1.1,2)--(1.25,1.12,2)--(1.25,2,2)--(1.25,2,0)--(0.75,2,0)--(0.75,2,2)--(0.75,1.12,2);

      \node[inner ysep=10pt,rectangle,draw,fill=yellow,scale=0.15mm,label]  at (1,1,2) {$\textcolor{black}{Seed}$}; 
      
    \end{tikzpicture}
  \end{subfigure}

  \caption{
   The skeleton of a terminal  assembly of $\mathcal{S_G} $ on an order-0 cuboid starting from a seed (in yellow) in a normal placement.
     On the left, the traces of the ribbons $R_X$ (in red), $R_Y$ (in green) and $R_Z$ (in blue). On the right, the shape of the skeleton on the cuboid.
 }
  \label{planes}
\end{figure}

Let $C$ be an order-1 cuboid. An assembly of $\mathcal{S_G}$  starts from a seed in an arbitrary normal placement on $C$.
In the TAS $\mathcal{S_G}$, the seed acts like a compass for the assemblies.  Without loss of generality, we assume that the side on which the seed is located is the face parallel to the $XY$-plane and intersects the $Z$ axis, and the north label of the seed's tile points towards the $Y$ axis.
The process of assemblies' growth in $S_G$ consists two phases, a phase for forming a skeleton and a phase for filling up the skeleton:

\begin{enumerate}
    \item  Constructing  the skeleton of the assembly's structures by at most $7$ perpendicular ribbons on $C$.
    Here, the planes  $P_X$, $P_Y$ and $P_Z$ are located from being framed by several ribbons  of tiles ($R_X$, $R_Y$ and $R_Z$ ) during the assembly and each step starts only when the previous step is finished.
    \begin{itemize}
        \item $R_X$ including one ribbon for framing the first plane $P_X$
        \item $R_Y$ including two ribbons for framing the second plane $P_Y$
        \item $R_Z$ including zero or four ribbons constitute the frame of the third plane $P_Z$ (depending on the intersection of the two previous planes, details will be given later)
    \end{itemize}

    \item  Filling the inside of the assembly's skeleton by distinctive tiles. In this step  the interior of the regions is partially filled by their distinctive tiles in a way that no connected component has a neighbor with the same inner filling tile. 
\end{enumerate}

In order to simplify the explanation of the process of the assemblies, first phase one is presented:
\begin{itemize}
    \item how the skeleton grows depends on the placement of the seed
    \item how the skeleton partitions $C$ into distinct connected components
    \item what its assigned region graph is.
\end{itemize}
Next, we study the phase of inner filling. Afterwards, we conclude the proof of the main theorem.

\subsubsection{Terminal assemblies on order-0 cuboids: $ A^{C_0}_\square [\mathcal{S_G}]$ }\label{genus0}

For the study of the shape of the productions in $O_1^t$, the productions on $O_0$ will be useful as a reference. We show that $\mathcal{S}_\mathcal{G}$ partitions cuboids into eight distinct regions as presented in Section~\ref{reg_part}. Later in the next section we study the case of order-1 cuboids. 

\medskip

\noindent\emph{1. The structure of the skeleton.}
\begin{lemma}\label{skeleton}
Let $C \in O_0$ be an order-0 cuboid and assume that the seed $\sigma$ is placed at a normal placement $p \in Pl_N(C)$. Every terminal assembly of $\mathcal{S_G}$ on $C$ includes a "3-step skeleton" noted by $R_X \cup R_Y \cup R_Z$ where each part is located on the corresponding ribbon defined in Section~\ref{reg_part}.
\end{lemma}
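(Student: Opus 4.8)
The plan is to follow the three construction phases in the order $R_X$, then $R_Y$, then $R_Z$, showing for each phase that (i) it is forced to start once the previous one has completed, (ii) the tiles of that phase can only attach along the trace of the corresponding plane, so the ribbon grows in an essentially forced way, and (iii) the trace is a closed finite curve on $\partial C$, so the ribbon closes up. Because tiles are added one at a time, a terminal assembly is one in which no further tile can attach; I would argue that this can only happen once all three ribbons have been completed, which yields the claimed $R_X\cup R_Y\cup R_Z$ skeleton present in \emph{every} terminal assembly. Throughout, the transfer lemma of Section~\ref{sec-lemmas} lets me analyze the growth on a planar unfolding of the relevant portion of $\partial C$ and then carry the conclusion back to the cuboid, while the normal-placement hypothesis $p\in Pl_N(C)$ guarantees margins of at least $3\log(N)+6$ around $p$, which is exactly the room required to run the middle-finding subsystems of Lemma~\ref{mfs} without hitting a border.

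First I would analyze $R_X$. Starting from the seed $\sigma=(\epsilon,\epsilon,x',\epsilon)$, a glue inspection shows that the only tile that can bind is $x'$, after which the tiles $x$, $x'_e$, $x'_w$, $x_e$, $x_w$ propagate the ribbon in the two directions along $P_X$, folding around the edges of $C$ (this is where flexibility of the bonds is used). Since $P_X$ is parallel to the $YZ$-plane and passes through $p$, its intersection $\partial C\cap P_X$ is a single rectangular loop of finite length, and the normal-placement condition ensures that while this loop is traced the ribbon never runs into a cuboid border before closing. I would check that the east/west glues $x_e$, $x_w$ are exposed exactly where needed, and that the loop eventually closes on itself, completing $R_X$.

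Next I would treat $R_Y$ and $R_Z$ together, since $R_Z$ is launched from $R_Y$. Once $R_X$ is complete, the exposed $x_e$ and $x_w$ glues let the starter tiles $y_e1$, $y_w1$, $y_e2$, $y_w2$, $y_e$, $y_w$ attach, seeding two ribbons that frame $P_Y$ and grow around $\partial C\cap P_Y$ until they meet, explaining the ``two ribbons'' of the $R_Y$ phase. The east and west seeds $\sigma^+_e$, $\sigma^+_w$ then launch the four copies $MFS^{u}_e$, $MFS^{d}_e$, $MFS^{u}_w$, $MFS^{d}_w$ of the middle-finding system; by Lemma~\ref{mfs} each locates the midpoint of the corresponding dimension, so the $z$-tiles of the $R_Z$ phase grow along the plane $P_Z$ through the center of $C_0$. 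For an order-0 cuboid, $R_X$ and $R_Y$ cross in two points, so $R_Z$ appears as four ribbons framing $P_Z$, matching Section~\ref{reg_part}. I would conclude by observing that once all three ribbons have closed, no glue required by any skeleton tile type is still exposed, so a terminal assembly must contain the full skeleton.

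The main obstacle is establishing that the phases cannot interleave or stall: I must verify from the explicit glue lists of $\mathcal{S_G}$ that no $R_Y$ (respectively $R_Z$) starter tile can attach before $R_X$ (respectively $R_Y$) has fully closed, so that ``each step starts only when the previous step is finished'', and symmetrically that $R_X$ growth cannot be blocked before its loop closes. This reduces to a careful case analysis of which glues are simultaneously available during each phase, combined with the guarantee—coming from $p\in Pl_N(C)$ and the $3\log(N)+6$ margin—that the middle-finding ribbons of $R_Z$ always have sufficient width (at least $3\log N$) to terminate correctly via Lemma~\ref{mfs}. Confirming that the ribbon traces actually close into loops on the folded surface, rather than spiraling or terminating early, is the other delicate point, and is where the transfer lemma together with the finiteness of $\partial C\cap P_i$ does the essential work.
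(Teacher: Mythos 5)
Your proposal follows essentially the same route as the paper's proof: a phase-by-phase forced-growth argument ($R_X$, then $R_Y$ via the two-sided middle-finding systems, then $R_Z$ from the $t_{ml}/t_{mr}$ midpoint tiles), relying on the glue structure to sequence the phases, the normal-placement margin of $3\log(N)+6$ to give the counters room, and the transfer lemma to fold the planar analysis onto $\partial C$. One small point of care: on an order-0 cuboid the two $R_Y$ ribbons do not meet \emph{each other} but each terminates against $R_X$ (whose $x_e$, $x_w$ tiles serve as the ``Finish'' block of the middle-finding system) --- meeting each other is precisely the genus-1 signature that triggers $T_{ibc}$ --- so your phrasing ``grow \ldots until they meet'' should be read as ``until they reach $R_X$''.
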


\begin{proof}

 The assembly start from the seed $\sigma=(\epsilon, \epsilon, x', \epsilon)$ has label $x'$ with strength~2 at the south, and its other labels are $\epsilon$ with strength~0. In the first step tiles make a vertical segment ribbon of tiles around $C$ as the $R_X$, starting from the south of the seed and finishing at its north. More precisely, a tile of type $x'=(x', x''_e , x , x''_w)$ sits at $\sigma$'s south and the label $x$ gives rise to a vertical ribbon of tiles of type $x=(x, x_e, x, x_w)$ such that it grows around $C$ and returns to the seed from the north. 
Moreover, two tiles of types $x'_e=(x'_e, x'_e , x_e , x''_e)$ and  $x'_w=(x'_w, x''_w,  x_w, x'_w)$ come at the east and the west of $x'$. By supporting the tile of type $x$, ribbons of tiles of types $x_e$and $x_w$ form respectively in the south of $x'_e$ and $x'_w$. 
Now, there is a ribbon made of three columns of tiles
that build the  $R_X$. In Fig.~\ref{fig:PX-ribbons} the seed in red is in the center of $R_X$.
The  $R_X$ divides $C$ into two regions, the \emph{right side} and the \emph{left side} of $C$ with respect to the $\sigma$. (We always assume that we view the cuboid from the point of view of the $Z$-axis, as in Fig.~\ref{planes}, and thus \emph{left}, \emph{right}, \emph{up}, \emph{down}, \emph{back}, \emph{front} refer to this point of view.)

\begin{figure}[h!]
    \centering
    \includegraphics[scale=0.4]{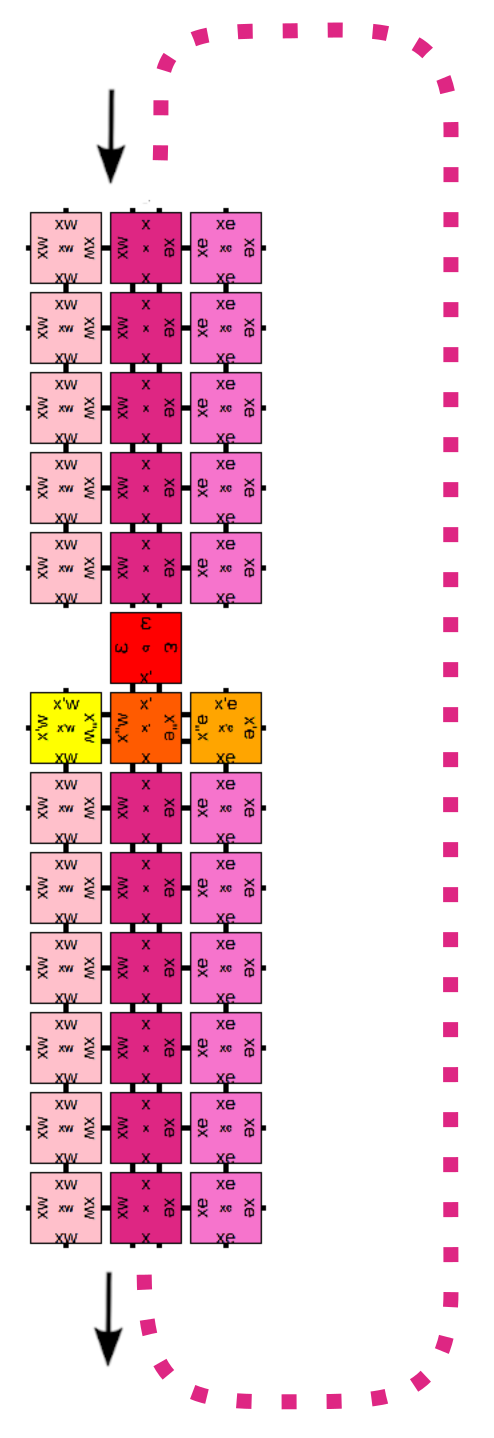}
    \caption{The ribbon $R_X$. The assembly starts from the south of the seed (in red at the center) and rounds the order-1 cuboid.}
    \label{fig:PX-ribbons}
\end{figure}

Therefore, the  $R_Y$ start to form only when $R_X$ rebounds to the north of the seed using tiles of type  $y_e1$ and $y_w1$.
Two segment ribbons starting from both right and left sides of $\sigma$ develop perpendicular to $R_X$ by using two middle finding systems (Definition~\ref{mfs}), one on each side.  
To elaborate, the tiles of type $y_e1=(x_e, y_e1, x'_e, \epsilon)$ and $y_w1=(x_w,\epsilon, x'_w, y_w1)$ sit in the assembly respectively between tiles of type $x'_e$ and $x_e$ at the east of $R_X$, and between tiles of type $x'_w$ and $x_w$ at the west of $R_X$. 
The tiles of type $y_e2=(\epsilon,y_e2,\epsilon, y_e1)$ and $y_w2=(\epsilon, y_w1,\epsilon, y_w2)$ bind to $y_e1$ and $y_w1$; and the tiles of type $y_e=(\epsilon, ,y_e,\epsilon, y_e2)$ and $y_w=(\epsilon, y_w2,\epsilon, y_w)$ bind to the tiles of type $y_e2$ and $y_w2$, respectively.
The tile types $y_e$ and $y_w$ are starting points for the eastern and western two-side IBC systems (Definition~\ref{IBC}), that is,  the seeds $\sigma^+_e=(over,++,over,y_e) $ and $\sigma^+_w=(over,y_w,over,++)$ make four IBC systems $IBC1^u_e$, $IBC1^d_e$ (east), and $IBC1^u_w$, $IBC1^d_w$ (west) with the support $(1, ++, 1, ++)$. See Fig.~\ref{fig:PX-PY} for an illustration of growing of $R_Y$ after finishing $R_X$. Note that each system has its own distinguished tile types.
\begin{figure}[h!]
    \centering
    \includegraphics[width=\textwidth]{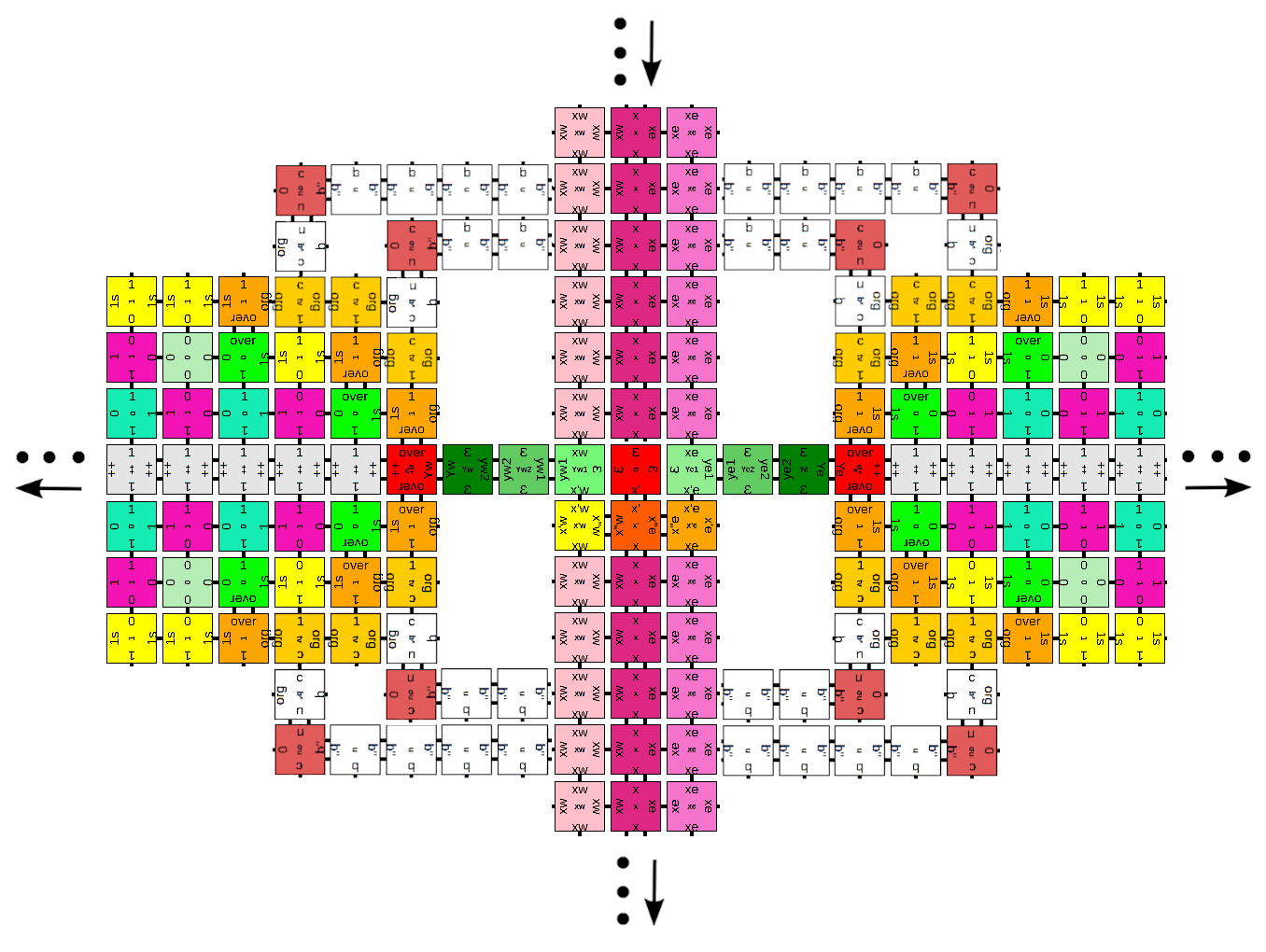}
    \caption{The formation of $R_Y$ (horizontal), out of $R_X$ (vertical). The initial seed of the assembly is in the center(in red). When $R_X$ is finished, from the est and the west of the seed the  two-side middle finding  systems grow to form $R_Y$.}
    \label{fig:PX-PY}
\end{figure}
After that, when the two-side $IBC1$ ribbons of the plane $P_Y$ meet the $P_X$ ribbon, the tiles of types $x_e$ and $x_w$ of $P_X$ act as finish block tiles for the middle finding system.

Then, the four tiles of types 
$t_{eu}=(1_s,\epsilon, x_e, 1)$, 
$t_{ed}=(1_s, 1 , x_e,\epsilon)$, $t_{wu}=(1_s,1, x_w,\epsilon)$ and  $t_{wd}=(1_s, \epsilon, x_w,1)$ respectively, appear as the row tile number of $1$ of the second IBC systems $IBC2^u_e$, $IBC2^d_e$ (east), and $IBC2^u_e$, $IBC2^d_e$ (west) of the middle finding systems. Note that the tiles of the second $IBC2$ systems have different labels from the ones of the $IBC1$ system (except the exterior labels of the most significant bit tiles in the $IBC1$ systems).
On the same underlying rectangle as the underlying rectangle of $R_Y$, the ribbons grow and they stop in the middle of $R_Y$'s ribbons.

Once the $R_Y$ ribbons form, they separate $C$ into an \emph{up side} and a \emph{down side}. Thus, $C$ is now partitioned into four separate regions due to the first and second step ribbons.

Next, by finding the middle of each of $R_Y$'s ribbons on the right and left faces, four new perpendicular ribbons are generated from the tiles of type $t_{ml}=(1 ,z''_l,0',1)$ at the left-up and right-down sides, and $t_{mr}=(1 ,1,0',z''_r)$ at the right-up an left-down sides. They are the replacements of the tile of type $t_m$ of the middle finding systems. See Fig.~\ref{fig:PZZ-2ribbons} for an illustration.
They go on, until they reach $R_X$ on the upper and down faces (see Fig.~\ref{fig:PZ-PX}). The union of these four ribbon forms a frame for the plane $R_Z$. 
This step creates a separation between the \emph{front side} and the \emph{back side} of the cuboid $C$ with respect to $\sigma$.
Note that all the ribbons are able to form completely because by our hypothesis, the seed is located on a normal placement. This is an essential condition since we need enough space for using counters in the assembly.
We show the detailed assembly of $R_X$, $R_Y$ and $R_Z$ in Fig.~\ref{fig:PX-PY-PZ} that starts from the seed $\sigma$ (in red) at the left of the Figure where $R_X$ rebounds on $C$. The western two-side middle finding system and its assign parts of $R_Z$ is omitted for the sake of brevity, however they are the mirror image of the eastern ones. 

\begin{figure}[h!]
    \centering
    \includegraphics[scale=0.5]{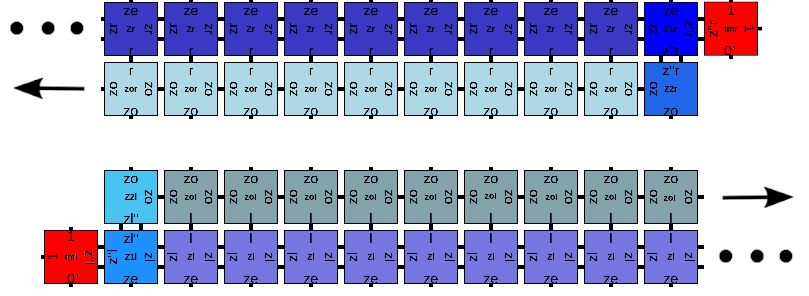}
    \caption{Two ribbons of $R_Z$.}
    \label{fig:PZZ-2ribbons}
\end{figure}

\begin{figure}[h!]
    \centering
    \includegraphics[width=\textwidth]{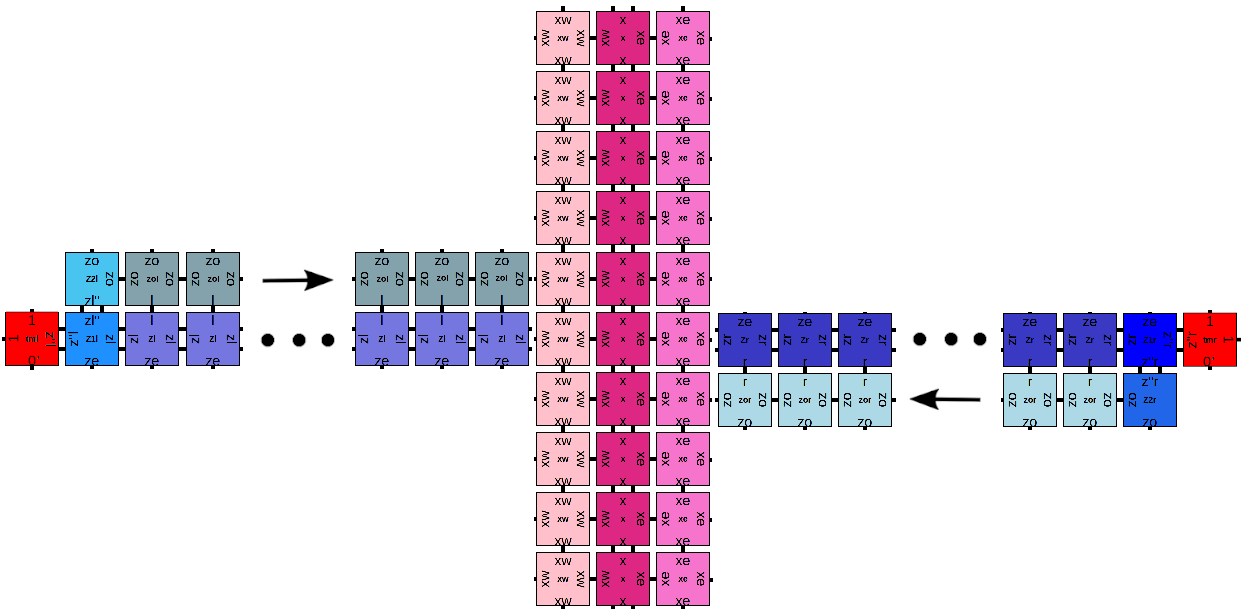}
    \caption{Two ribbons of $R_Z$ meeting the ribbons of $R_X$.}
    \label{fig:PZ-PX}
\end{figure}

\begin{figure}[h!]
    \centering
    \includegraphics[width=\textwidth]{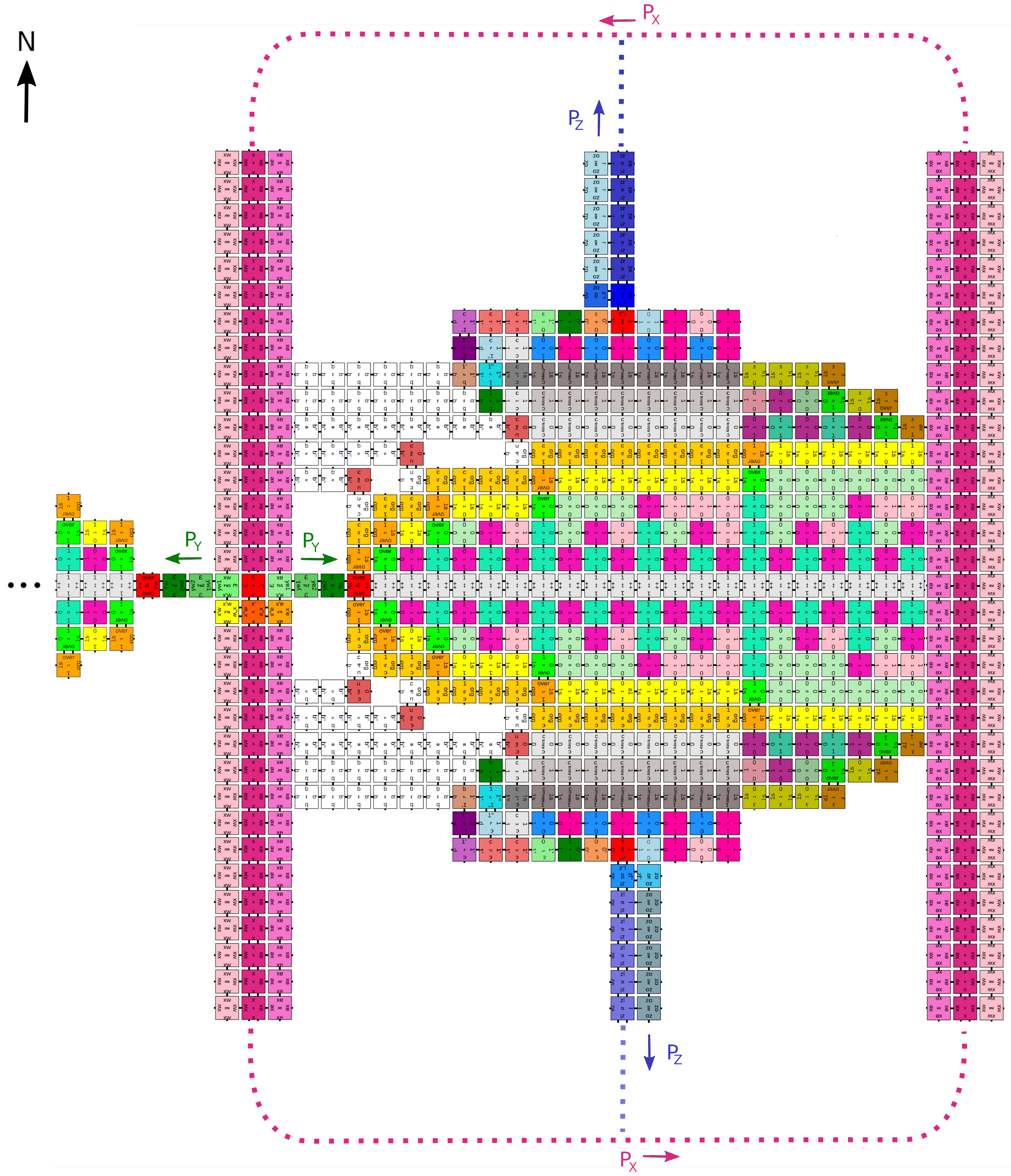}
    \caption{The assembly of $R_X$, $R_Y$ and $R_Z$ on an order-0 cuboid. The seed is located in the middle of $R_X$ at left. $R_X$ grows from the south of the seed and finishes at its north. Then, $P_Y$ starts growing by two two-side eastern and western middle finding systems. At the end, $P_Z$ starts to assemble from the found middle tile of $R_Y$ (in red) and finishes by arriving at $P_X$. Note that The western two-side middle finding system and its assigned parts of $R_Z$ are omitted for the sake of brevity, however they are the mirror image of the eastern ones.}
    \label{fig:PX-PY-PZ}
\end{figure}

Similar to Section~\ref{reg_part}, these three steps partition $C$ into $8$ distinct regions. 
\end{proof}

\medskip

\noindent\emph{2. Inner filling of the skeleton.}

After the formation of the skeleton, the second phase is to fill the eight regions by lines of interior tiles, once the $R_Z$ ribbons reach $R_X$. 
Since the region graph of an order-0 cuboid $C$ is a  $2$-colorable graph, we use two tile types to distinctly tile $C$. 
\begin{figure}[h!]
    \centering
    \includegraphics[scale=0.4]{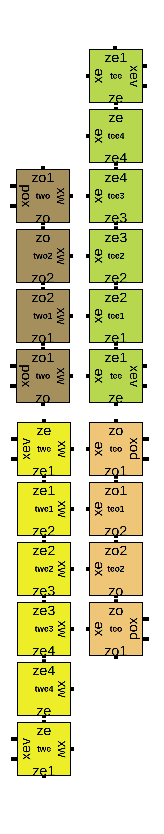}
    \caption{The tile types of four inner ribbons at the intersection of $R_X $ and the $R_Z$. }
    \label{fig:innerfillingtiles}
\end{figure}

\begin{lemma}
 Let $C\in O_0$ be an order-0 cuboid. For all the terminal assemblies $\alpha \in  A^{C_0}_\square [\mathcal{S_G}]$ started from a seed $\sigma \in Pl_N(C)$, the tile $t_{even}=(z_e,x_{ev},z_e,x_{ev})$  appears in the even regions and the tile $t_{odd}=(z_o,x_{od},z_o,x_{od})$ appears in the odd regions.
\end{lemma}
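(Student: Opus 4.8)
The plan is to show that the inner-filling phase reproduces exactly a proper $2$-coloring of the region graph. By Lemma~\ref{skeleton}, the skeleton $R_X\cup R_Y\cup R_Z$ partitions $C$ into the eight regions $R_{XYZ}$ of Section~\ref{reg_part}, and by Fig.~\ref{fig:RegionGraph} the region graph $G_C$ is bipartite, hence properly $2$-colorable. I call the color classes \emph{even}, namely the regions $R_{000},R_{011},R_{101},R_{110}$ with an even number of $1$'s among $X,Y,Z$, and \emph{odd}, the remaining four. The first step is to record that two regions are adjacent in $G_C$ precisely when they differ in a single coordinate, so that crossing any single ribbon flips the color.

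Next I would describe where the filling starts and how it carries a parity. Once the $R_Z$ ribbons reach $R_X$ (Fig.~\ref{fig:innerfillingtiles}), the starter tiles attach along $R_X$: on its east side the chains $t_{ee},t_{ee1},\ldots,t_{ee4}$ and $t_{eo},t_{eo1},t_{eo2}$ (west label $x_e$), and symmetrically on its west side the chains $t_{we},\ldots,t_{we4}$ and $t_{wo},t_{wo1},t_{wo2}$ (east label $x_w$). I would show that these chains build vertical columns hugging $R_X$ that cycle through the auxiliary labels ($z_e\to z_{e1}\to\cdots\to z_{e4}\to z_e$, respectively $z_o\to z_{o1}\to z_{o2}\to z_o$) and, at each row carrying the label $z_e$ (respectively $z_o$), emit the horizontal signal $x_{ev}$ (respectively $x_{od}$) into the adjacent region. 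The parity emitted into a given region is fixed by the starter tile installed at the corresponding $R_X$--$R_Z$ intersection.

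I would then treat the horizontal propagation. The tile $t_{even}=(z_e,x_{ev},z_e,x_{ev})$ copies both its vertical glue $z_e$ and its horizontal glue $x_{ev}$ in all directions, so once a $z_e$-column emits $x_{ev}$ the tile $t_{even}$ tiles across the interior of the region until it meets a bounding ribbon; symmetrically $t_{odd}$ fills a region entered by an $x_{od}$ signal. Since each region is a connected disk bounded by ribbons, a single emitted signal of the right parity fills it, and the transfer lemma of Section~\ref{sec-lemmas} guarantees that the folding of these tiles across cube edges leaves the local adjacency rules, and hence the combinatorics, unchanged.

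The main obstacle is global consistency: I must rule out that two fillings of opposite parity ever meet, which would block the assembly or force the collision tile $t_{reg}=(z_e,\epsilon,z_o,\epsilon)$, whose two non-null glues $z_e$ and $z_o$ encode exactly a vertical collision between an even and an odd filling. The plan is to define a parity on the filled facets, prove from the tile rules that it is preserved inside each region and flips by one across each ribbon, and then invoke bipartiteness: on a genus-$0$ cuboid $G_C$ has no odd cycle, so the alternating assignment closes up and the parity is a well-defined function of the region. This forces every even region to be filled by $t_{even}$ and every odd region by $t_{odd}$, and simultaneously shows $t_{even}$ cannot appear in an odd region (and vice versa), since that would produce an odd closed walk in $G_C$. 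The delicate bookkeeping will be checking that the periods of the starter columns stay synchronized with the ribbon structure all the way around the folded surface, so the alternation never slips by one step and the colors close up consistently.
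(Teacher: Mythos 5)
Your proposal is correct and follows essentially the same route as the paper: the starter chains attached at the $R_X$--$R_Z$ intersections cycle modulo $5$ (even) and modulo $3$ (odd) and emit strength-$2$ horizontal lines of $t_{even}$ and $t_{odd}$, and the assignment of parities to regions is exactly the $2$-coloring of the bipartite region graph that the paper invokes. The only real divergence is your closing global-consistency argument, which is not needed in the order-0 case: each of the eight regions is a disk bounded by ribbons and receives its filling from a single, explicitly wired starter quadrant, so fillings of opposite parity can never meet and no closing-up of the alternation around the surface has to be checked.
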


\begin{proof}

For an assembly that is started from a seed in a normal placement, by Lemma~\ref{skeleton} the $C$ is portioned by skeleton to odd and even regions.  
First, four ribbons of tiles types appear at the intersection of $R_X$ and the $R_Z$ ribbons. These ribbons form by tile types of Fig \ref{fig:innerfillingtiles}.
From the parts along $R_X$, straight lines of tiles start growing parallel to the $x$ axis using strength  $2$  glues $x_{ev}$ and $x_{od}$.
Thanks to modulo 5 (resp. 3) counters on the even (resp. odd) $R_X$ border tiles, there is one such line every other $5$ (resp. 3) position along that part of the border with tiles of type $t_{even}=(z_e,x_{ev},z_e,x_{ev})$  (resp. $t_{odd}=(z_o,x_{od},z_o,x_{od})$). These lines form the even (resp. odd) filling tiles and fill the partitioned regions. See Fig.~\ref{fig:inner-tile-types} for the tile types of partially inner filling of the regions and Fig.~\ref{fig:inner-filling} for an illustration of the assembly.

\end{proof}

\begin{figure}[h!]
\centering

\includegraphics[scale=0.5]{t_mfs.png}
\includegraphics[scale=0.5]{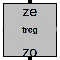}
\caption{The tile type $t_{mfs}$ and $t_{reg}$(which may only appear if $C$ has genus~1).}
\label{fig:inner-tile-types}
\end{figure}

The 2-coloring of $G_C$ indicates also where the tiles of type $t_{even}$ and of type $t_{odd}$ can be placed.
The region $R_{000}$ and the regions with even distance to it are tiled by tiles of type $t_{even}$, and the ones with odd distance to it, by $t_{odd}$ tiles. 
Therefore, the regions $R_{XYZ}$ ($X,Y,Z\in\{0,1\}$) are tiled with $t_{even}$ if and only if the sum $X + Y + Z $ is even, and the regions with even sum are covered by $t_{odd}$.
For more clarity, see Fig.~\ref{fig:RegionGraph} where the regions corresponding to $t_{even}$ are colored white and the ones with $t_{odd}$ are colored black in the region graph $G_C$ of the order-0 cuboid $C$.

\begin{figure}[h!]
    \centering
    \begin{subfigure}[t]{\textwidth}
    \includegraphics[width=\textwidth]{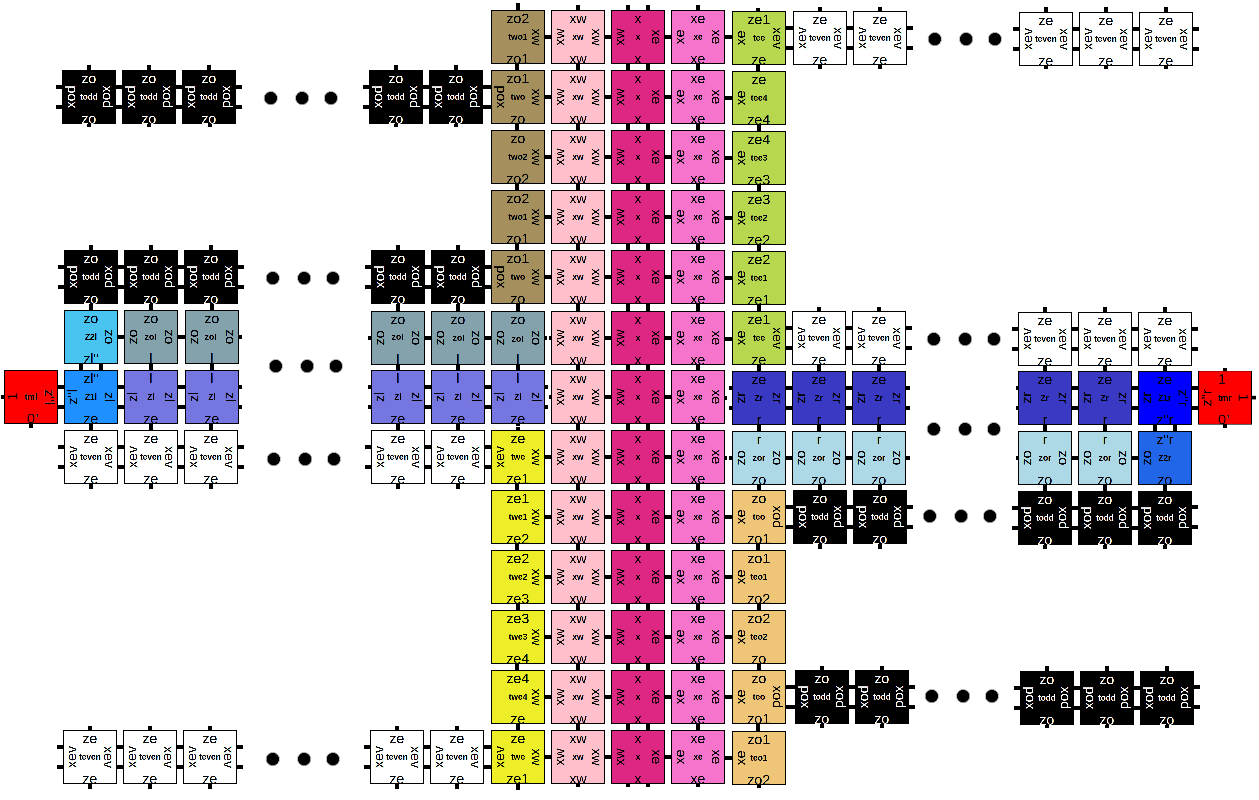}
    \caption{Up side collision of $P_X$ and $P_Z$.}
    \end{subfigure}\\
    \begin{subfigure}[t]{\textwidth}
    \includegraphics[width=\textwidth]{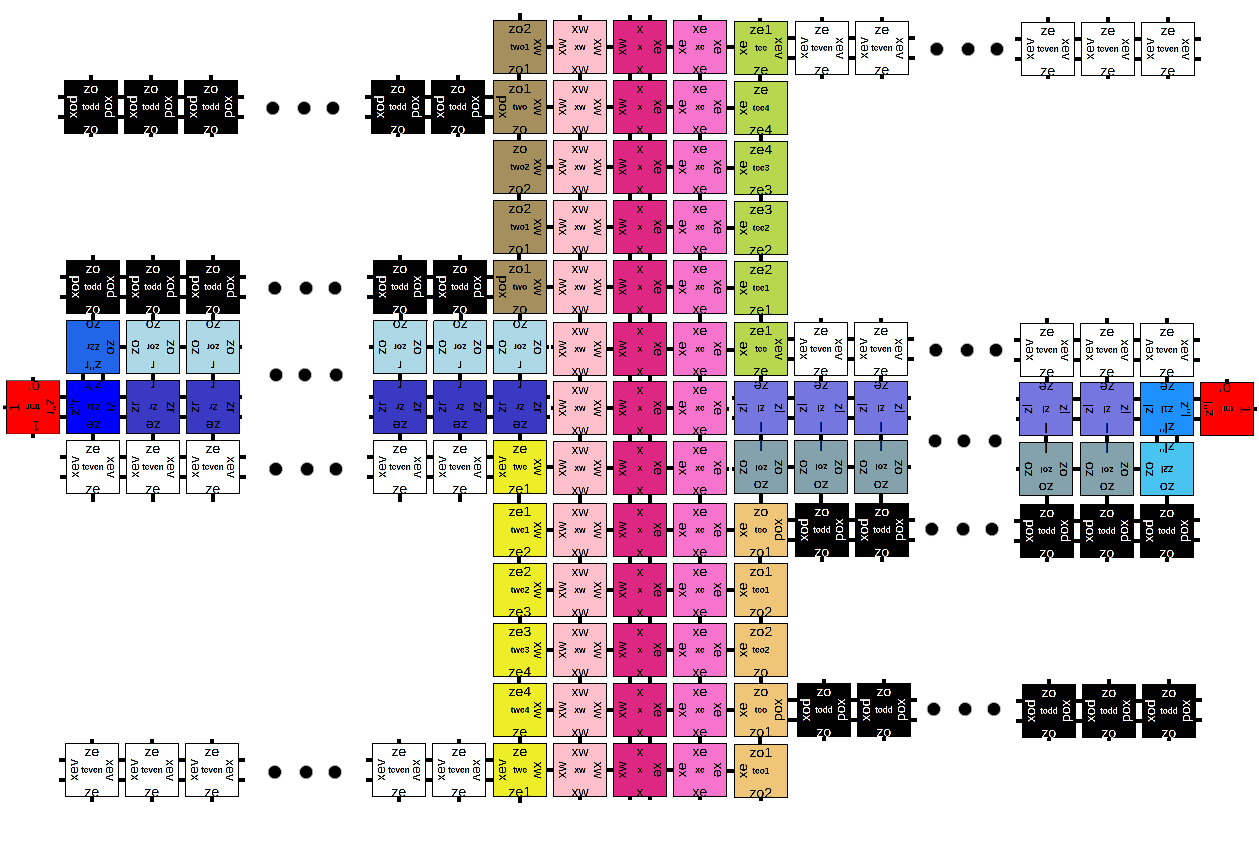}
    \caption{Down side collision of $P_X$ and $P_Z$.}
    \end{subfigure}
    \caption{The inner filling with tiles of types $t_{even}$ (white) and $t_{odd}$ (black) at the two places where the $P_Z$ ribbons meet $P_X$.}
    \label{fig:inner-filling}
\end{figure}

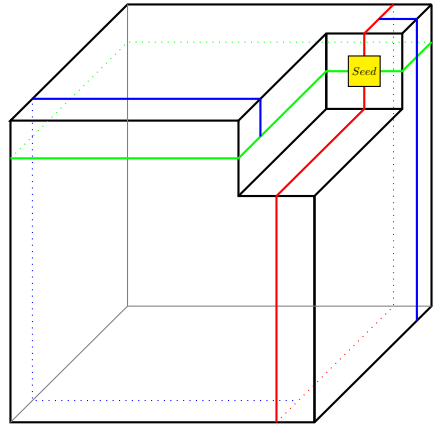
\begin{figure}[h!]
    \centering
  \begin{tikzpicture}[scale=2]
  \draw[thick](2,2,0)--(0,2,0)--(0,2,2)--(1.5,2,2)--(1.5,2,0.5)--(2,2,0.5)--(2,2,0);
  \draw[thick](2,1.5,2)--(2,0,2);
  \draw[thick](1.5,2,2)--(1.5,1.5,2)--(2,1.5,2)--(2,0,2);
    \draw[thick](1.5,1.5,2)--(1.5,1.5, 0.5)--(1.5,2,0.5);
    \draw[thick](2,1.5,0.5)--(1.5,1.5, 0.5)--(1.5, 2, 0.5);
    \draw[thick](0,2,2)--(0,0,2)--(2,0,2)--(2,0,0)--(2,2,0)(2,2,0.5)--(2,1.5,0.5)--(2,1.5,2);

  \draw[gray](2,0,0)--(0,0,0)--(0,2,0);
  \draw[gray](0,0,0)--(0,0,2);
  
   \draw[red,thick](1.75,2,0)--(1.75,2,0.5)--(1.75,1.75,0.5)--(1.75,1.5,0.5)--(1.75,1.5,2)--(1.75,0,2);
     \draw[red, dotted](1.75,0,2)--(1.75,0,0)--(1.75,2,0);
       \draw[green,thick](1.75,1.75,0.5)--(2,1.75,0.5)--(2,1.75,0);
        \draw[green, dotted](2,1.75,0)--(0,1.75,0)--(0,1.75,2);
        \draw[green,thick](0,1.75,2)--(1.5,1.75, 2)--(1.5,1.75, 0.5)--(1.75,1.75, 0.5) ;
         \draw[blue,thick](1.75,2,0.25)--(2,2,0.25)--(2,0,0.25);

    \draw[blue,thick](0,2,1.625)--(1.5,2,1.625);
    \draw[blue, dotted](0,1.75,1.625)--(0,2,1.625);
    \draw[blue, thick](1.5,2,1.625)--(1.5,1.75,1.625);
    
    \draw[blue, dotted](0,1.75,1.625)--(0,0,1.625)--(1.75,0,1.625);
  
    \node[inner ysep=10pt,rectangle,draw,fill=yellow,scale=0.15mm,label]  at (1.75,1.75,0.5) {$\textcolor{black}{Seed}$};

\end{tikzpicture}
    \caption{
    Cuboid with concavity: The three planes $P_X$ (red), $P_Y$ (green) and $P_Z$ (blue), which consists of two semi-planes.)
    }
    \label{concavity}
\end{figure}

\subsubsection{Terminal assemblies on order-1 cuboids with genus 1 : $ A^{C_t}_\square [\mathcal{S_G}]$}

We consider now in detail  the process of the assembly of $\mathcal{S_G}$ for order-1 cuboids with a tunnel. Let $C$ be an order-1 cuboid. 
The key element of the proof is the appearance of
some specific tile in each assembly when it has less than 8 regions.
The assemblies on $C$ have a skeleton with a different shape depending on the region graph associated with the placement of the seed.
Let $P_i$ and $R_i$ for $i \in \{ X, Y, Z \}$ be defined
as presented in Section~\ref{reg_part}. If a plane $P_i$ intersects along the width of the tunnel, it acts like a separator between the two parallel faces where the tunnel's entrances are located.  If a plane $P_i$ intersects along the length of the tunnel, the  tiles of $R_i$ enter and pass inside the tunnel. 
 Moreover, three types of partitions into regions are possible and the possible numbers of regions are: 7 regions when one plane intersects along the width of the tunnel, 5 regions when one plane intersects along the length of the tunnel and one along the width, and 1 region when  three perpendicular planes intersect along the tunnel, one along the width and the others along the length. Each case needs to be studied separately, we give the proof of the case with 5 regions and omit the proof of the other cases as they are analogous.

Note that in following $G_C(\sigma)$ refers to the region graph $G_C(p)$ such that $p$ is the position of the seed $\sigma$ on $C$.

\medskip

\noindent\emph{Case 1 (7 regions):  one plane intersects along the width of the tunnel.}

In this case, tiles of types $t_{odd}$ and $t_{even}$ touch, which enforces the attachment of $t_{reg}$ or $t_{mfs}$.

At least one of the planes $P_X$, $P_Y$ and $P_Z$ introduced in Section~\ref{reg_part} intersects with the tunnel of $C$, since its entrances are on parallel faces of the cuboid, and these planes are located between parallel faces.
When the tunnel has an intersection with only one of the three planes, the plane intersects along the width of the tunnel. For example, in Fig.~\ref{fig:7region}, the tunnel has an intersection with the plane $P_X$ only.

\begin{lemma}\label{tunnel1plane}
Let $C=C_0 \setminus C'_0 \in O_1^t$ be an order-1 cuboid  with the dimensions
at least 10 for $C'_0$. Assume that the seed $\sigma$ is placed in a normal placement $p \in Pl(C)$. In a terminal assembly of the system $S_{\mathcal{G}}$, if only one of the planes defined in Section~\ref{reg_part} intersect with the tunnel, $G_C(\sigma)$ has $7$ regions and a tile of type  $t_{reg}$ or $t_{mfs}$  appears in the assembly.
\end{lemma}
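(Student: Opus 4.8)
The plan is to combine the order-0 skeleton analysis of Lemma~\ref{skeleton} with a topological count of the regions and then a forced-collision argument for the inner filling. First I would fix coordinates so that the tunnel's axis is the direction perpendicular to the unique intersecting plane; call it the $X$-direction, so that the intersecting plane is $P_X$ and the two entrances of the tunnel lie on the two faces of $C$ orthogonal to $X$. The hypothesis that only $P_X$ meets the tunnel means that $P_Y$ and $P_Z$ both miss it, so the tunnel is confined to a single quadrant in the $(Y,Z)$-coordinates while spanning $P_X$ in the $X$-direction. Because the entrances lie on the two faces orthogonal to $X$ whereas $R_X$ runs only along the four faces parallel to $X$, and because $R_Y$ and $R_Z$ cross the entrance-faces at coordinates separated from the tunnel, the normal-placement hypothesis together with the bound of at least $10$ on the dimensions of $C'_0$ leave enough room for the skeleton $R_X\cup R_Y\cup R_Z$ to form on the outer surface exactly as in Lemma~\ref{skeleton}.

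Next I would count the regions. The intersection $\partial C\cap P_X$ now has two components: the outer loop, which is $R_X$ and contains the seed, and an inner loop running around the cross-section of the tunnel, which belongs to no ribbon and is therefore never removed. Let $R^-$ and $R^+$ be the two octant-regions on the sides $X<x_p$ and $X>x_p$ of $P_X$ inside the tunnel's quadrant, where $x_p$ is the $X$-coordinate of the seed; thus $R^-$ and $R^+$ carry the same $Y$- and $Z$-labels and differ only in their $X$-label. A path leaving $R^-$ can enter the near entrance, travel along the inner walls of the tunnel (crossing, but not meeting, the inner loop), and leave by the far entrance into $R^+$; hence $R^-$ and $R^+$ form a single connected component. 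Every other pair of octant-regions stays separated exactly as in the order-0 case, so $C\setminus(R_X\cup R_Y\cup R_Z)$ has $8-1=7$ components and $G_C(\sigma)$ has $7$ vertices.

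To produce a tile of $Y$ I would turn to the inner filling. Since $R^-$ and $R^+$ are adjacent across $P_X$ in the order-0 picture, their labels have opposite parity of $X+Y+Z$, so one of them is filled by $t_{even}$-lines and the other by $t_{odd}$-lines (see Fig.~\ref{fig:RegionGraph}). As they are now one region, both families of lines — the even ones spaced every $5$ positions and the odd ones spaced every $3$ positions — propagate into the common interior of the tunnel. I would then show that the two families cannot be placed disjointly: because $\gcd(5,3)=1$ and the tunnel has cross-section of side at least $10$, some even line and some odd line are forced into positions separated by a single empty facet. There the tile $t_{reg}=(z_e,\epsilon,z_o,\epsilon)$, whose north glue $z_e$ bonds to the even line and whose south glue $z_o$ bonds to the odd line, attaches $\tau$-stably; should the meeting instead coincide with the place where the $R_Z$ ribbons close up, the corner tile $t_{mfs}=(z_e,z_o,z_e,z_o)$ is forced instead (compare Fig.~\ref{plan2}). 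Because the assembly is terminal, this facet cannot stay empty, so at least one tile of $Y$ occurs.

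The main obstacle is the collision argument: one must prove rigorously that the even and odd filling lines cannot be arranged to avoid one another inside the tunnel. This is exactly where the coprimality of the two spacings and the lower bound of $10$ on the dimensions of $C'_0$ enter, through a Chinese-Remainder/pigeonhole count over the tunnel's cross-section that places an even line and an odd line in a configuration bridged by $t_{reg}$ (or by $t_{mfs}$) regardless of the relative offset of the two fillings. The remaining points — that the outer skeleton is undisturbed and that the bridging tile has total binding strength at least $\tau=2$ — are routine from the definitions of the tile types and the normal-placement hypothesis.
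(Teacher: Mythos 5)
Your overall architecture matches the paper's: identify the intersecting plane as the separator of the two regions containing the tunnel entrances, conclude that these two regions (which have opposite parity of $X+Y+Z$) merge into one so that only $7$ regions remain, observe that both $t_{even}$-lines (spacing $5$) and $t_{odd}$-lines (spacing $3$) therefore propagate into a single region, and then force a collision that only $t_{reg}$ or $t_{mfs}$ can resolve. The region count and the parity observation are exactly the paper's argument.

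The one place you diverge is the crux you yourself flag as ``the main obstacle,'' and there your sketched mechanism is not the one the paper uses and is not obviously sufficient as stated. You propose a Chinese-Remainder/pigeonhole count showing that some even line and some odd line land at transverse positions separated by a single empty facet; but a full residue cycle modulo $\mathrm{lcm}(3,5)=15$ is what such a count naturally requires, and the hypothesis only guarantees a cross-section of width $10$, so the claim needs a finer case check on offsets rather than a bare CRT appeal (and you would still have to argue the two anti-parallel lines overlap longitudinally before either is blocked). The paper argues differently: it defines a \emph{good} empty facet as one adjacent through strength-$1$ sides to both an even and an odd tile, takes two same-family lines crossing the same tunnel edge at minimal (hence positive and at most $10$) distance, and classifies how each line can terminate --- reaching the opposite component of the border (yielding $t_{reg}$), being stopped orthogonally by a line from the other side (yielding $t_{mfs}$), or colliding head-first with an opposite line. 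The last case is excluded for both lines of the pair at once because that would force their distance to be divisible by both $5$ and $3$, hence by $15$, contradicting $0<d\le 10$. If you replace your CRT step with this termination case analysis, your proof coincides with the paper's.
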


\begin{proof}
Let the seed be placed in a manner that only one of the planes $P_X$, $P_Y$ or $P_Z$ intersects along the width of the tunnel. The plane that intersects the tunnel is the separating buffer of two regions $R_{xyz}$ and $R_{x'y'z'}$ containing the two tunnel's entrances. 
In this case, the two regions $R_{xyz}$ and $R_{x'y'z'}$  get combined into a single region via the tunnel. Therefore, the number of distinct regions decreases to $7$ regions.
See~Fig.~\ref{fig:7region} for an illustration.

Without loss of generality, assume that $x+y+z$ is an odd number and $x'+y'+z'$ is an even number. When two regions $R_{xyz}$ and $R_{x'y'z'}$ are joined by the tunnel, tiles of type $t_{odd}=(z_o,x_{od},z_o,x_{od})$ from $R_{xyz}$ and of type $t_{even}=(z_e,x_{ev},z_e,x_{ev})$ from $R_{x'y'z'}$ both exist in the new unique region. it will be shown that the tile type $t_{reg}=(z_e,\epsilon,z_o,\epsilon)$ 
or $t_{mfs}$ must then occur in the assembly. The tile types $t_{reg}$ and $t_{mfs}$ are the only tile type of $\mathcal{S_G}$ with  labels $z_o$ and $z_e$ of  inner filling tiles $t_{odd}$ and $t_{even}$. 
In Fig.~\ref{fig:treg}, the places on $C$ that reveal a tunnel by $t_{reg}$ or $t_{mfs}$ are shown.
To conclude the proof, one needs to show that in a region with a disconnected border, there is a \emph{good} empty space, that is an empty space which sees both an even tile and an odd tile through strength 1 sides. Then, this space can be filled by neither type of filling tiles, but it must eventually be filled by a tile of type $t_{reg}=(z_e,\epsilon,z_o,\epsilon)$
or $t_{mfs}=(z_e,z_o,z_e,z_o)$.
In a region with a tunnel, on each side of the tunnel, the border of every $10 \times 10$ square must be crossed by either
\begin{itemize}
    \item at least two of the lines of tiles starting from $R_X$ on that side of the tunnel, or
    \item at least two of the lines exiting the tunnel.
\end{itemize}
In particular, because $C'_0 $ is at least $10 \times 10$ units wide, there are at least two lines crossing one of the edges the tunnel in the same direction. Each such line must either reach the opposite connected component of the border, be stopped orthogonally by a line from the opposite side of the tunnel, or run head-first into an opposite line.
Consider such a pair of lines, with minimal distance between them. In particular, that distance must be at most $10$.
\begin{itemize}
    \item If one of the lines reaches the opposite connected component of the border, either of the spaces next to its end is \emph{good} and in this case the tile of type $t_{reg}$ appears in the assembly;
    \item likewise, if one of them is stopped orthogonally by a line from the opposite side of the tunnel, one of the spaces next to the intersections is \emph{good} and a tile of type $t_{mfs}$ appears.
\end{itemize}
Moreover, if one of them runs head-first into an opposite line, the other cannot, because their distance cannot be at the same time divisible by 15, positive and less than 10. Hence the pair satisfies one the previous cases.
This concludes the proof of that case of our construction.
\end{proof}

\begin{figure}[h!]
    \centering
    \begin{subfigure}[t]{.45\textwidth}
     \centering
\begin{tikzpicture}[scale=2]
  \draw[thick](2,2,0)--(0,2,0)--(1,2,1)--(2,2,0)--(2,0,0)--(2,0,2)--(0,0,2)--(1,2,1);
  
  \draw[thick](1,2,1)--(2,0,2);
  \draw[gray](2,0,0)--(0,0,0)--(0,2,0);
  \draw[gray](0,0,0)--(0,0,2);
  \node[inner sep=2pt,circle,draw,fill=white, label]  at (0,0,0) {$\textcolor{black}{R_{000}}$};
   \node[inner sep=2pt,circle,draw,fill=white,label]  at (2,0,2) {$\textcolor{black}{R_{101}}$};

   \node[inner sep=2pt,circle,draw,fill=white,label]  at (2,2,0) {$\textcolor{black}{R_{110}}$};
  
\node[inner sep=2pt,circle,draw,fill=white, label]  at (0,2,0){$\textcolor{black}{R_{010}}$};
\node[inner sep=2pt,circle,draw,fill=white, label]  at (0,0,2){$\textcolor{black}{R_{001}}$};
\node[inner sep=2pt,circle,draw,fill=white, label]  at (2,0,0){$\textcolor{black}{R_{100}}$};

 \node[inner sep=0.1pt,circle,draw,fill=white,label]  at (1,1.75,1) {$\textcolor{black}{R_{011} = R_{111}}$};
\end{tikzpicture}
      \caption{The region graph}
    \end{subfigure}\qquad
    \begin{subfigure}[t]{.45\textwidth}
     \centering
      \begin{tikzpicture}[scale=2]
  \draw[thick](2,2,0)--(0,2,0)--(0,2,2)--(2,2,2)--(2,2,0)--(2,0,0)--(2,0,2)--(0,0,2)--(0,2,2);
  \draw[thick](2,2,2)--(2,0,2);
  \draw[gray](2,0,0)--(0,0,0)--(0,2,0);
  \draw[gray](0,0,0)--(0,0,2);

  \draw[red,thick](1,2,0)--(1,2,2)--(1,0,2);
  \draw[red, dotted, thick](1,0,2)--(1,0,0)--(1,2,0);
  
   \draw[green,thick](0,1,2)--(2,1,2)--(2,1,0);
  \draw[green, dotted, thick](2,1,0)--(0,1,0)--(0,1,2);

  \draw[thick](2,1.25,1.25)--(2,1.75, 1.25)--(2,1.75,1.75)--(2,1.25,1.75)--(2,1.25,1.25);
 \draw[gray](0,1.25,1.25)--(0,1.75, 1.25)--(0,1.75,1.75)--(0,1.25,1.75)--(0,1.25,1.25);
  
    \draw[blue,thick](0,2,1)--(2,2,1)--(2,0,1);
  \draw[blue,dotted,thick](2,0,1)--(0,0,1)--(0,2,1);

   \node[inner ysep=10pt,rectangle,draw,fill=yellow,scale=0.15mm,label]  at (1,1,2) {$\textcolor{black}{Seed}$};
 
\end{tikzpicture}
      \caption{The cuboid and the three planes.}
    \end{subfigure}

\caption{The case where $C\in O_1^t$ is partitioned into $7$ distinct regions. If there is a tunnel between two distinct regions, a tile of type $t_{reg}$ or $t_{mfs}$, which have common labels with both $t_{even}$ and $t_{odd}$, must appear in the assembly.}
    \label{fig:7region}
  \end{figure}
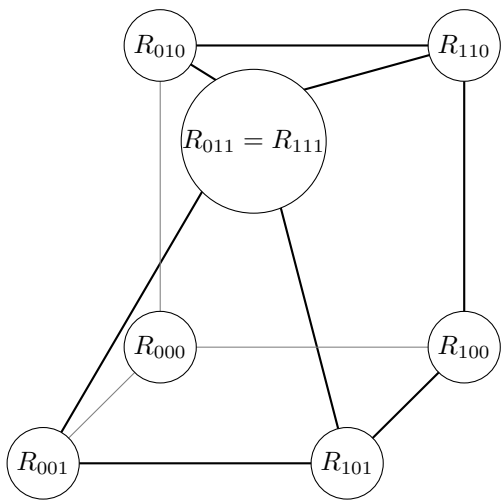
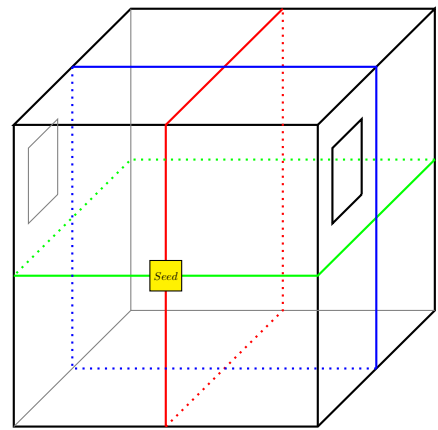

\medskip

\noindent\emph{Case 2 (5 regions): the tunnel intersects with $P_Z$, and exactly one of $P_X$ and $P_Y$.}

\begin{lemma}\label{tunnel2plane}
Let $C \in O_1^t$ be an order-1 cuboid and assume that the seed $\sigma$ is placed in a normal placement $p \in Pl(C)$. In a terminal assembly of the system $S_{\mathcal{G}}$, if the plane $P_Z$ and exactly one of the planes $P_X$ and $P_Y$ defined in Section~\ref{reg_part} have an intersection with the tunnel, there exist $5$ regions on the cuboid and a tile of type $t_{mfs}$ appears in the assembly.
\end{lemma}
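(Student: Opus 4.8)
The plan is to mirror the strategy of Lemma~\ref{tunnel1plane}, while additionally tracking the structural anomaly created by the plane that runs \emph{along} the tunnel. First I would fix coordinates via the seed-compass. Since the conclusion singles out $t_{mfs}$, which by construction can only be produced where two $R_Z$ ribbons collide, the hypothesis must be read as forcing $P_Z$ to be the plane meeting the tunnel along its \emph{length}, and the distinguished member of $\{P_X,P_Y\}$ to be the plane meeting it along its \emph{width} (the separator perpendicular to the tunnel axis); the remaining member of $\{P_X,P_Y\}$ then misses the tunnel entirely. Up to relabelling $X$ and $Y$, I assume the tunnel axis is the $X$-direction, so $P_X$ is the width plane, $P_Z$ runs lengthwise through the tunnel, and the tunnel lies wholly on one side of $P_Y$ (say the ``up'' side). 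Up to the moment when the growing skeleton first reaches an entrance of the tunnel, the assembly develops exactly as in Lemma~\ref{skeleton}: the normal-placement hypothesis guarantees enough room for $R_X$, for the two middle-finding systems of $R_Y$ (Lemma~\ref{mfs}), and for the four $R_Z$ ribbons to form.

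Next I would count the regions. The width plane $P_X$ would ordinarily separate the left and right halves, but the tunnel bridges its two entrance regions; simultaneously, because $P_Z$ passes lengthwise through the tunnel, the two $R_Z$ ribbons entering from the two entrances do not close up into a clean separating wall inside the tube (this failure is precisely what produces $t_{mfs}$), so the front and back halves are \emph{also} joined through the tunnel. Since $P_Y$ misses the tunnel, both entrances lie on the same (``up'') side of $P_Y$, so only the four ``up'' regions $R_{010},R_{110},R_{011},R_{111}$ are affected: the bridging across $P_X$ together with the failure of separation across $P_Z$ identifies all four into a single region, while the four ``down'' regions remain distinct. This yields $4+1=5$ regions, so $G_C(\sigma)$ has five vertices as claimed. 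I would justify this either by the explicit merging bookkeeping just sketched or, more robustly, by an Euler-characteristic computation on the torus $\partial C$ (where $\chi=0$), reading off the number of faces from the vertices and edges of the skeleton arrangement.

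Finally I would establish the appearance of $t_{mfs}$. The single merged region inherits filling lines of both parities: sub-regions that were filled by $t_{even}$ and by $t_{odd}$ before merging now coexist in one connected region, exactly as in Lemma~\ref{tunnel1plane}. The only tile types of $\mathcal{S_G}$ carrying both an even glue $z_e$ and an odd glue $z_o$ are $t_{reg}=(z_e,\epsilon,z_o,\epsilon)$ and $t_{mfs}=(z_e,z_o,z_e,z_o)$; the former fits a \emph{collinear} head-on meeting of an even and an odd line, the latter a \emph{perpendicular} crossing. Here the collision is perpendicular, because the two $R_Z$ ribbons enter the tunnel and travel toward each other while the even and odd filling lines cross the opposing family at a right angle; hence the forced ``good'' empty space is bounded by even tiles to its north and south and odd tiles to its east and west, which only $t_{mfs}$ can fill. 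To guarantee that such a crossing is actually forced rather than avoided, I would reuse the counting argument of Lemma~\ref{tunnel1plane}: since $C'_0$ is at least $10\times 10$ wide (the standing hypothesis), the border of each $10\times 10$ square on either side of the tunnel is crossed by at least two filling lines, and a closest opposing pair must cross rather than run head-first into one another, because their separation cannot simultaneously be positive, at most $10$, and a common multiple of the line spacings.

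The main obstacle I expect is the topological bookkeeping of the region count on the torus, combined with the claim that the two $R_Z$ ribbons necessarily \emph{reach} the interior of the tunnel and meet there. The skeleton was designed on genus-$0$ cuboids, so I must verify that its construction (middle-finding for $R_Y$, then the four $t_{mr}/t_{ml}$-initiated ribbons of $R_Z$) still propagates into and through the tunnel without being prematurely stopped; this is exactly where the normal-placement and the $\ge 10$ dimension hypotheses do the real work. The parity/divisibility step ruling out two opposing lines running head-first into each other is routine but must be stated as carefully as in Lemma~\ref{tunnel1plane}, and the remaining sub-configuration (tunnel axis along $Y$) follows by the symmetric argument.
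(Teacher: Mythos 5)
There is a genuine gap, on two counts. First, you restrict the hypothesis to the sub-case where $P_Z$ is the plane running along the \emph{length} of the tunnel and the distinguished plane among $\{P_X,P_Y\}$ crosses its width. The lemma (and the paper's proof, see Fig.~\ref{fig:5region}) covers two sub-cases: either $P_Z$ intersects the tunnel along its length, in which case the two $R_Z$ ribbons meet each other \emph{inside} the tunnel, or $P_Z$ intersects the tunnel along its width while the other plane runs along its length, in which case the two $R_Z$ ribbons meet each other \emph{outside} the tunnel. Your reading that ``the hypothesis must be read as forcing $P_Z$ to be the plane meeting the tunnel along its length'' discards the second configuration, which is a legitimate instance of the hypothesis and must also be shown to produce $t_{mfs}$.

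Second, your mechanism for producing $t_{mfs}$ is not the paper's and is doubtful on its own terms. The paper obtains $t_{mfs}=(z_e,z_o,z_e,z_o)$ directly from the collision of the two frame ribbons of $R_Z$ with one another: the $R_Z$ ribbon tiles are the ones carrying the $z_e$ and $z_o$ labels that match $t_{mfs}$, and on a genus-1 cuboid these ribbons, which should each terminate on $R_X$, instead run into each other. You instead invoke the filling-phase argument of Lemma~\ref{tunnel1plane} (a perpendicular crossing of $t_{even}$ and $t_{odd}$ lines in the merged region). But the inner filling lines are launched only at the intersections of $R_Z$ with $R_X$, and in the genus-1 component it is precisely those intersections that fail to form (the $R_Z$ ribbons meet each other instead of reaching $R_X$). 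So the even/odd filling lines you rely on need not be present in that component, and the ``good empty space'' argument does not get off the ground there. Your region count (four unaffected regions plus one merged region, total five) agrees with the paper's decomposition into a genus-0 component with four regions and a genus-1 component with one region, but the existence claim for $t_{mfs}$ needs to be argued through the skeleton collision, in both sub-cases, rather than through the filling tiles.
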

\begin{proof}
If the seed is placed where the tunnel has intersection with two perpendicular planes, one of them intersects the tunnel along its width and the other one along its length.
If $P_Z$ intersects with the tunnel along the length, 
the ribbons of $R_Z$  meet each other inside the tunnel. However, if $P_Z$ intersects the tunnel along its width, they meet outside the tunnel.

In both cases, the tile $t_{mfs}=(z_e, z_o, z_e, z_o)$ appears in the assembly when two frame ribbons of $P_Z$ meet each other.
Note that when the tunnel has intersection with $P_Z$ and one of the planes $P_X$ or $P_Y$,  the cuboid is separated into two connected components such that one of them is a cuboid with genus $0$ and the other one is a cuboid with genus $1$.
The part with genus $0$ has $4$ distinct regions, and the part with genus $1$ (containing a tile of type $t_{mfs}$) has one single region. In total, there exist $5$ distinct regions on the cuboid $C$. For an illustration of the skeleton and its graph in this case, see Fig.~\ref{fig:5region}.
\end{proof}

\medskip

\noindent\emph{Case 3 (1 region): the tunnel intersects with $P_X$ and $P_Y$.} 

\begin{lemma}\label{tunnel3plane}
Let $C \in O_1^t$ be an order-1 cuboid and assume that the seed $\sigma$ is placed in a normal placement $p \in Pl(C)$. In a terminal assembly of the system $S_{\mathcal{G}}$, if two planes $P_X$ and $P_Y$ defined in Section~\ref{reg_part} intersect the tunnel, there exists $1$ regions on the cuboid and a tile of one of the $T_{ibc}$ types appears in the assembly.
\end{lemma}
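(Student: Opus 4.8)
The plan is to mirror the two-part template of Lemmas~\ref{tunnel1plane} and~\ref{tunnel2plane}: first pin down the number of regions, then exhibit the forced tile of $Y$, which in this configuration is a tile of $T_{ibc}$. I begin by fixing the geometry. Since $P_X$ and $P_Y$ are perpendicular and their common line is parallel to the $Z$-axis, the only way both can meet the tunnel is for the tunnel's axis to be parallel to $Z$; its entrances then lie on the two faces orthogonal to $Z$, the plane $P_Z$ (which passes through the center of $C_0$) cuts the tunnel along its width, and $P_X$ and $P_Y$ both cut it along its length. This is exactly the configuration of Section~\ref{reg_part} in which all three planes meet the tunnel.

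For the region count, I would argue by comparison with Lemma~\ref{tunnel2plane}. In that lemma exactly one of $P_X,P_Y$ threads the tunnel, splitting $C$ into a genus-$0$ part carrying its usual four regions and a genus-$1$ part forming a single region, for five in total. Passing to the present case, the second of $P_X,P_Y$ now also cuts the tunnel along its length, so its ribbon enters the tunnel (as described in Section~\ref{reg_part}) and ceases to separate the surface; the four regions of the former genus-$0$ part therefore merge with the single tunnel region. Equivalently, one checks directly that $\partial C\setminus(R_X\cup R_Y\cup R_Z)$ is connected, since all three ribbons are threaded through the tunnel. Hence there is exactly one region.

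The substance of the lemma is to force a tile of $T_{ibc}$. Recall that $R_Y$ is produced by the eastern and western middle-finding systems, whose first phase (Lemma~\ref{mfs}) runs the increasing binary counters $IBC1^u_e,IBC1^d_e$ and $IBC1^u_w,IBC1^d_w$ (Lemma~\ref{IBC}), and that in the order-$0$ analysis (Lemma~\ref{skeleton}) each such counter ribbon is halted when it reaches $R_X$, whose $x_e$ and $x_w$ tiles act as finish blocks. Here $P_X$ cuts the tunnel along its length, so $R_X$ enters the tunnel instead of closing into the separating loop it forms on an order-$0$ cuboid, and it no longer supplies a finish block on the relevant side; and since $P_Y$ also cuts along the length, the eastern and western $IBC1$ ribbons are routed around through the tunnel and keep growing until they approach one another head-on. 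When the eastern ribbon lies to the east of an empty facet and the western ribbon to its west, the facet must be matched by a tile whose eastern glue $i_2\in IBC1^u_e\cup IBC1^d_e$ and whose western glue $i_4\in IBC1^u_w\cup IBC1^d_w$ --- and by construction the only tile types of $\mathcal{S_G}$ presenting such a pair are those of $T_{ibc}=\{(i_1,i_2,i_3,i_4)\}$. At temperature $\tau=2$ the two colliding strength-$1$ IBC glues supply total strength $2$, so such a tile is $\tau$-stably attachable, and in a terminal assembly the facet cannot be left empty. The hypothesis that $C'_0$ is at least $10$ in each direction is used exactly as the $10\times10$ bound in Lemma~\ref{tunnel1plane}, to guarantee the counters have room to develop and that the two ribbons genuinely reach adjacency rather than being blocked beforehand.

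The step I expect to be the main obstacle is the geometric bookkeeping behind the collision: proving that with both $P_X$ and $P_Y$ threading the tunnel the eastern and western $IBC1$ ribbons are deflected into a head-on meeting, rather than one of them terminating inside the tunnel or being intercepted by $R_Z$ first, and checking this over the possible placements of the seed and tunnel entrances. As in Lemmas~\ref{tunnel1plane} and~\ref{tunnel2plane}, I would carry out one representative placement in full and note that the remaining ones follow by the symmetry of the construction. A secondary point demanding care is the strength accounting, namely verifying that the pair $(i_2,i_4)$ of colliding IBC labels always contributes total strength at least $2$ so that the $T_{ibc}$ tile attaches.
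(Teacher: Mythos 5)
Your overall mechanism is the right one --- $R_Z$ never forms, the eastern and western $IBC1$ ribbons of $R_Y$ meet head-on, and the only tile types presenting an eastern glue from $IBC1^u_e\cup IBC1^d_e$ against a western glue from $IBC1^u_w\cup IBC1^d_w$ are those of $T_{ibc}$ --- but your opening geometric reduction is wrong, and it invalidates the case analysis that the rest of the argument rests on. You claim that $P_X$ and $P_Y$ can both meet the tunnel only if the tunnel's axis is parallel to $Z$, with both planes cutting along the tunnel's length. This is false: take the tunnel's axis along $X$ (entrances on the left and right faces), with the seed's $y$-coordinate inside the tunnel's $y$-range. Then $P_X$, being orthogonal to the axis, necessarily crosses the tunnel along its \emph{width}, while $P_Y$ threads it along its \emph{length}; symmetrically for an axis along $Y$. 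These are exactly the two configurations the paper treats (Fig.~\ref{fig:PXPY}), and they are the generic ones. The configuration you single out --- axis along $Z$ with both planes cutting along the length --- would force the line $x=p_x,\,y=p_y$ to pass through the tunnel's cross-section, i.e.\ the seed to sit at a tunnel entrance or inside the tunnel, which the paper handles only as a degenerate variant ``up to topological isomorphism.''

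Because of this, the step you yourself flag as the main obstacle --- showing the two $R_Y$ ribbons really do meet head-on --- is argued for the wrong picture. In the actual configurations the reason the second $R_X$--$R_Y$ collision fails is not that ``$R_X$ no longer supplies a finish block'': $R_X$ always closes into a loop. Rather, exactly one of the two loops is rerouted through the tunnel while the other stays on the outer surface (if $P_X$ crosses the width, $R_X$ stays outside and $R_Y$ threads the tunnel, so the two halves of $R_Y$ collide \emph{inside} the tunnel; if $P_X$ crosses the length, $R_X$ threads the tunnel and never reaches the opposite face, so the two halves of $R_Y$ collide \emph{outside}, on that face). Your proof never examines either of these, so the collision claim --- and with it the forced $T_{ibc}$ attachment and the one-region count --- is not established for any configuration that actually satisfies the lemma's hypothesis. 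The fix is to replace your axis-parallel-to-$Z$ reduction by the width/length dichotomy for $P_X$ versus $P_Y$ and run your (otherwise sound) collision-and-glue argument in each of the two resulting cases, as the paper does.
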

\begin{proof}
In this case, the skeleton of the assembly is not the same as before. 
Recall the process of the assembly's skeleton:
The  $R_X$ are generated independently from $\sigma$. 
Two segment ribbons of $R_Y$ begin to grow after rebounding on the  $R_X$, regardless of passing through a tunnel or not. 
However, the ribbons of $R_Z$ start to grow only after finding the middle of $R_Y$ and they end by reaching the ribbon of $R_X$.
Considering this process, when the two planes $P_X$ and $P_Y$ intersect with the tunnel, the plane $P_Z$ is not able to form since there is a tunnel that does not permit to have the second collision of $R_Y$ and $R_X$. Therefore, the process of finding the middle of $R_Y$ is not able to continue and the ribbons of $R_Z$ is not able to form.

Moreover, two ribbons of $R_Y$ must meet each other at a tile of one of the $T_{ibc}$ types that comes between their $IBC1$ systems. This happens inside the tunnel if $P_Y$ intersects the tunnel along its length, and outside the tunnel if it intersects the tunnel along its width. 
In either cases a tile of one of the $T_{ibc}$ types appears.
See Fig.~\ref{fig:tibc} for an illustration of the places where the presence of a tunnel entrance implies that we have a tile of one of the $T_{ibc}$ types. 

Note that the skeleton consists of two closed loops of $R_X$ ribbons and $R_Y$ ribbons. 
 This phenomenon demonstrates that the genus of $C$ is 1.
In order to have a better overview, see Fig.~\ref{fig:PXPY}.
Furthermore, there is only one single region throughout the whole surface of $C$.
\end{proof}

 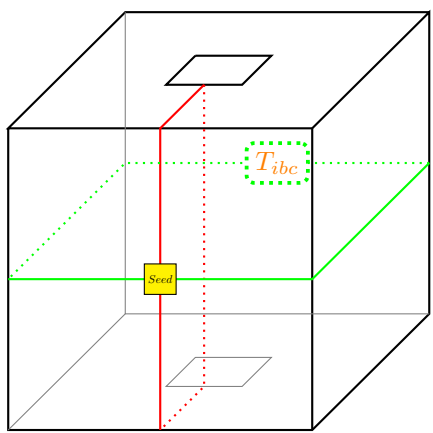
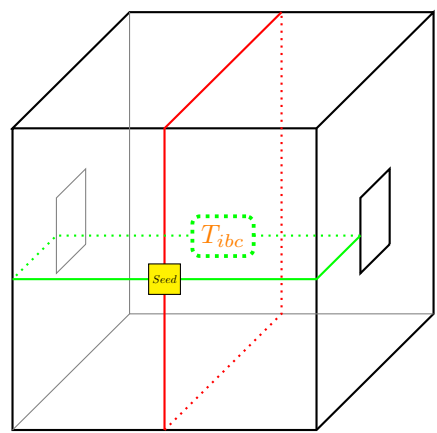
\begin{figure}[h!]
  \centering
  \begin{subfigure}[t]{.4\textwidth}
    \begin{tikzpicture}[scale=2]
      \draw[thick](2,2,0)--(0,2,0)--(0,2,2)--(2,2,2)--(2,2,0)--(2,0,0)--(2,0,2)--(0,0,2)--(0,2,2);
      \draw[thick](2,2,2)--(2,0,2);
      \draw[gray](2,0,0)--(0,0,0)--(0,2,0);
      \draw[gray](0,0,0)--(0,0,2);
      \draw[thick](0.75,2,0.75)--(1.25,2,0.75)--(1.25,2,1.25)--(0.75,2,1.25)--(0.75,2,0.75);
      \draw[gray](0.75,0,0.75)--(1.25,0,0.75)--(1.25,0,1.25)--(0.75,0,1.25)--(0.75,0,0.75);

      \draw[red,thick](1,2,1.25)--(1,2,2)--(1,0,2);
      \draw[red, dotted,thick](1,2,1.25)--(1,0,1.25)--(1,0,2);
      
      \draw[green,thick](0,1,2)--(2,1,2)--(2,1,0);
      \draw[green, dotted,thick](2,1,0)--(0,1,0)--(0,1,2);
      \node[line width=0.5mm,rectangle, minimum height=0.5cm,minimum width=0.5cm,fill=white!70,rounded corners=1mm,draw=green, label,dotted]  at (1,1,0) {$\textcolor{orange}{T_{ibc} }$};

      \node[inner ysep=10pt,rectangle,draw,fill=yellow,scale=0.15mm,label]  at (1,1,2) {$\textcolor{black}{Seed}$};
      
    \end{tikzpicture}
    
    \caption{The tunnel intersects along the length of plane $P_X$ and width of plane $P_Y$.}
  \end{subfigure}
  \begin{subfigure}[t]{.4\textwidth}

    \begin{tikzpicture}[scale=2]
      \draw[thick](2,2,0)--(0,2,0)--(0,2,2)--(2,2,2)--(2,2,0)--(2,0,0)--(2,0,2)--(0,0,2)--(0,2,2);
      \draw[thick](2,2,2)--(2,0,2);
      \draw[gray](2,0,0)--(0,0,0)--(0,2,0);
      \draw[gray](0,0,0)--(0,0,2);

      \draw[thick](2,0.75,0.75)--(2,1.25, 0.75)--(2,1.25,1.25)--(2,0.75,1.25)--(2,0.75,0.75);
      \draw[gray](0,0.75,0.75)--(0,1.25, 0.75)--(0,1.25,1.25)--(0,0.75,1.25)--(0,0.75,0.75);

      \draw[red,thick](1,2,0)--(1,2,2)--(1,0,2);
      \draw[red, dotted,thick](1,0,2)--(1,0,0)--(1,2,0);
      
      \draw[green,thick](0,1,2)--(2,1,2)--(2,1,1.25);
      \draw[green, thick, dotted](2,1,1.25)--(0,1,1.25)--(0,1,2);

      \node[line width=0.5mm,rectangle, minimum height=0.5cm,minimum width=0.5cm,fill=white!70,rounded corners=1mm,draw=green, label,dotted]  at (1,0.90,1) {$\textcolor{orange}{T_{ibc} }$};

      \node[inner ysep=10pt,rectangle,draw,fill=yellow,scale=0.15mm,label]  at (1,1,2) {$\textcolor{black}{Seed}$};
      
    \end{tikzpicture}
    \caption{The tunnel intersects along the width of plane $P_X$ and length of plane $P_Y$.}
  \end{subfigure}

  \caption{Intersection of tunnel with two planes $P_X$ (red) and $P_Y$ (green).}
  \label{fig:PXPY}
\end{figure}

Note that the situation when the seed is located inside the tunnel is similar to Case~3, up to topological isomorphism.  

 From Lemmas \ref{tunnel1plane}, \ref{tunnel2plane} and \ref{tunnel3plane}, the following corollary is obtained:
 
\input{FigureLatex/5region}

\begin{corollary}\label{thm1}

Let $C=C_0 \setminus C'_0 \in O_1$ be an order-1 cuboid  with the dimensions
at least 10 for $C'_0$ and  $\alpha$ be an assembly of the TAS $\mathcal {S_G} = (\Sigma, T, \sigma, str, \tau)$ such that its seed is placed at a normal placement.
 If  there is tunnel on $C$ (i.e. its genus is $1$), at least a tile type from $Y=\{t_{reg}\} \cup \{ t_{mfs} \}\cup T_{ibc} \subseteq T$ exists in all terminal assemblies of $\mathcal{S_G}$ on $C$ 

\end{corollary}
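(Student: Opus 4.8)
The plan is to obtain the corollary as an immediate assembling of the three case lemmas, Lemmas~\ref{tunnel1plane}, \ref{tunnel2plane} and \ref{tunnel3plane}, once I check that their hypotheses exhaust every genus-1 situation. First I would record the equivalence: $C$ has genus~$1$ exactly when $C \in O_1^t$, i.e.\ $C$ contains a tunnel, whose two entrances lie on a pair of parallel faces of $C$. As observed in Section~\ref{reg_part} and used in the proof of Lemma~\ref{tunnel1plane}, since each of $P_X$, $P_Y$, $P_Z$ is placed between one pair of parallel faces of $C$, the plane separating the two entrance-bearing faces must cross the tunnel; hence the set $S \subseteq \{P_X, P_Y, P_Z\}$ of planes meeting the tunnel is non-empty.

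The heart of the argument is a case distinction on $S$, organized so that each non-empty subset is sent to exactly one lemma. If $|S| = 1$ (the three singletons), Lemma~\ref{tunnel1plane} applies and yields a tile of type $t_{reg}$ or $t_{mfs}$. If $|S| = 2$ and $P_Z \in S$, that is $S = \{P_X, P_Z\}$ or $S = \{P_Y, P_Z\}$, Lemma~\ref{tunnel2plane} applies and yields a tile of type $t_{mfs}$. Finally, whenever $\{P_X, P_Y\} \subseteq S$, that is $S = \{P_X, P_Y\}$ or $S = \{P_X, P_Y, P_Z\}$, Lemma~\ref{tunnel3plane} applies and yields a tile of one of the $T_{ibc}$ types (here the informal ``three planes intersect'' configuration is covered, since the proof of that lemma shows that $R_Z$ fails to form). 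These three cases are mutually exclusive and cover all seven non-empty subsets of $\{P_X, P_Y, P_Z\}$, so every terminal assembly falls under exactly one of them.

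In each branch the produced tile belongs to $Y = \{t_{reg}\} \cup \{t_{mfs}\} \cup T_{ibc}$, so every terminal assembly of $\mathcal{S_G}$ on a genus-1 cuboid $C$ contains a tile of $Y$, which is precisely the claim. The remaining special position, where the seed itself lies inside the tunnel, reduces to the branch $\{P_X, P_Y\} \subseteq S$ by the topological isomorphism already noted after Lemma~\ref{tunnel3plane}.

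The step I expect to be most delicate is \emph{verifying the exhaustiveness and the case-to-lemma matching}: one must confirm that the ``intersects along its width / along its length'' description of Section~\ref{reg_part} translates into precisely the subset condition used by each lemma (in particular that $\{P_X, P_Y\} \subseteq S$ arises only in the one-region configuration governed by Lemma~\ref{tunnel3plane}), and that no configuration with $S = \emptyset$ can occur. Since the three lemmas already carry the analytic content---producing the witnessing tile in each geometric case---the corollary itself is then just the bookkeeping that these cases are complete.
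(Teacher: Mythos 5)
Your proposal matches the paper's own proof: both argue that the set of planes meeting the tunnel is non-empty (since the tunnel's entrances lie on parallel faces and each plane separates one such pair), and then dispatch the three mutually exclusive configurations to Lemmas~\ref{tunnel1plane}, \ref{tunnel2plane} and \ref{tunnel3plane} respectively, concluding that a tile of $Y$ appears in every terminal assembly. Your extra care in folding the ``all three planes'' configuration into the case $\{P_X,P_Y\}\subseteq S$ handled by Lemma~\ref{tunnel3plane} is consistent with the paper's intent (its citation of Lemma~\ref{tunnel2plane} for that last case is evidently a typo for Lemma~\ref{tunnel3plane}).
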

\begin{proof}

If there is a tunnel on $C$, at least one of the planes $P_X$, $P_Y$ and $P_Z$ defined in Section~\ref{reg_part} intersects with the tunnel since its entrances are on parallel faces of the cuboid, and these planes are located between parallel faces.

Firstly, if the tunnel of $C$ intersects with only one of the planes, due to Lemma~\ref{tunnel1plane},
a tile of type $t_{reg}$ or $t_{mfs}$, which are the only tile types of $\mathcal{S_G}$ with labels in common with both inner filling tile types $t_{odd}$ and $t_{even}$, appears in the assembly. In Fig.~\ref{fig:treg}, the places where the presence of these tile types displays the presence of the tunnel is shown.

Nextly, if two planes (among them $P_Z$) intersect with the tunnel on $C$, a tile of type $t_{mfs}$ appears in all terminal assemblies on $C$ by Lemma~\ref{tunnel2plane}. See Fig.~\ref{fig:tmfs} for the places where the presence of a tile of type $t_{mfs}$ displays the presence of the tunnel.

At the end, if two planes $P_X$ and $P_Y$ intersect with the tunnel, Lemma~\ref{tunnel2plane} implies that a tile of one of the $T_{ibc}$ types is present in the assembly. See Fig.~\ref{fig:tibc} for the places where the presence of a tunnel implies the presence of a tile of one of the $T_{ibc}$ types.

The places where a tunnel implies the presence of a tile of $Y$ are shown in Fig.~\ref{fig:treg-tmfs-tibc}.
\end{proof}

\subsection{Detecting the genus of the order-1 cuboids via $\mathcal {S_G}$ }

Before proving Theorem~\ref{maintheorem}, we need to prove following lemma:

\begin{figure}[h!]
   
    \begin{center}
\begin{tikzpicture}[scale=4.2]
  \draw[thick](2,2,0)--(0,2,0)--(0,2,2)--(2,2,2)--(2,2,0)--(2,0,0)--(2,0,2)--(0,0,2)--(0,2,2);
  \draw[thick](2,2,2)--(2,0,2);
 
  \draw[red,thick](1,2,0)--(1,2,2)--(1,0,2);
 
    \draw[blue,thick](0,2,1)--(2,2,1)--(2,0,1);

    \draw[green,thick](0,1,2)--(2,1,2)--(2,1,0);

  \node[line width=0.5mm,rectangle, minimum height=0.5cm,minimum width=0.5cm,fill=yellow,rounded corners=1mm,draw=red, label]  at (1,1,2) {$\textcolor{black}{Seed }$};
  
              \node[line width=0.5mm,rectangle, minimum height=0.5cm,minimum width=0.5cm,fill=white!70,rounded corners=1mm,draw=black, label]  at (0.5,2,0.40) {$\textcolor{black}{t_{reg/ mfs} }$};
              
               \node[line width=0.5mm,rectangle, minimum height=0.5cm,minimum width=0.5cm,fill=white!70,rounded corners=1mm,draw=black, label]  at (1.5,2,0.40) {$\textcolor{black}{t_{reg/ mfs} }$};
               
               \node[line width=0.5mm,rectangle, minimum height=0.5cm,minimum width=0.5cm,fill=white!70,rounded corners=1mm,draw=black, label]  at (1.5,2,1.60) {$\textcolor{black}{t_{reg/ mfs} }$};
               
                 \node[line width=0.5mm,rectangle, minimum height=0.5cm,minimum width=0.5cm,fill=white!70,rounded corners=1mm,draw=black, label]  at (0.5,2,1.60) {$\textcolor{black}{t_{reg/ mfs} }$};

          \node[line width=0.5mm,rectangle, minimum height=0.5mm,minimum width=0.5cm,fill=white!70,rounded corners=1mm,draw=black, label]  at (2,0.5,0.5) {$\textcolor{black}{t_{reg / mfs}}$};
          
          \node[line width=0.5mm,rectangle, minimum height=0.5mm,minimum width=0.5cm,fill=white!70,rounded corners=1mm,draw=black, label]  at (2,1.5,0.5) {$\textcolor{black}{t_{reg / mfs}}$};
          
            \node[line width=0.5mm,rectangle, minimum height=0.5mm,minimum width=0.5cm,fill=white!70,rounded corners=1mm,draw=black, label]  at (2,1.5,1.5) {$\textcolor{black}{t_{reg / mfs}}$};
            
             \node[line width=0.5mm,rectangle, minimum height=0.5mm,minimum width=0.5cm,fill=white!70,rounded corners=1mm,draw=black, label]  at (2,0.5,1.5) {$\textcolor{black}{t_{reg / mfs}}$};

         \node[line width=0.5mm,rectangle, minimum height=0.5,minimum width=0.5cm,fill=white!70,rounded corners=1mm,draw=black, label]  at (0.5,1.5,2) {$\textcolor{black}{t_{reg/ mfs} }$};
         
           \node[line width=0.5mm,rectangle, minimum height=0.5,minimum width=0.5cm,fill=white!70,rounded corners=1mm,draw=black, label]  at (1.5,1.5,2) {$\textcolor{black}{t_{reg/ mfs} }$};
         
          \node[line width=0.5mm,rectangle, minimum height=0.5,minimum width=0.5cm,fill=white!70,rounded corners=1mm,draw=black, label]  at (1.5,0.5,2) {$\textcolor{black}{t_{reg/ mfs} }$};
          
          \node[line width=0.5mm,rectangle, minimum height=0.5,minimum width=0.5cm,fill=white!70,rounded corners=1mm,draw=black, label]  at (0.5,0.5,2) {$\textcolor{black}{t_{reg/ mfs} }$};

    \end{tikzpicture}
    \end{center}
    \caption{The places on an order-1 cuboid that, if a tunnel is placed there, a tile of type $t_{reg}$ ou $t_{mfs}$ appears.}
    \label{fig:treg}
\end{figure}
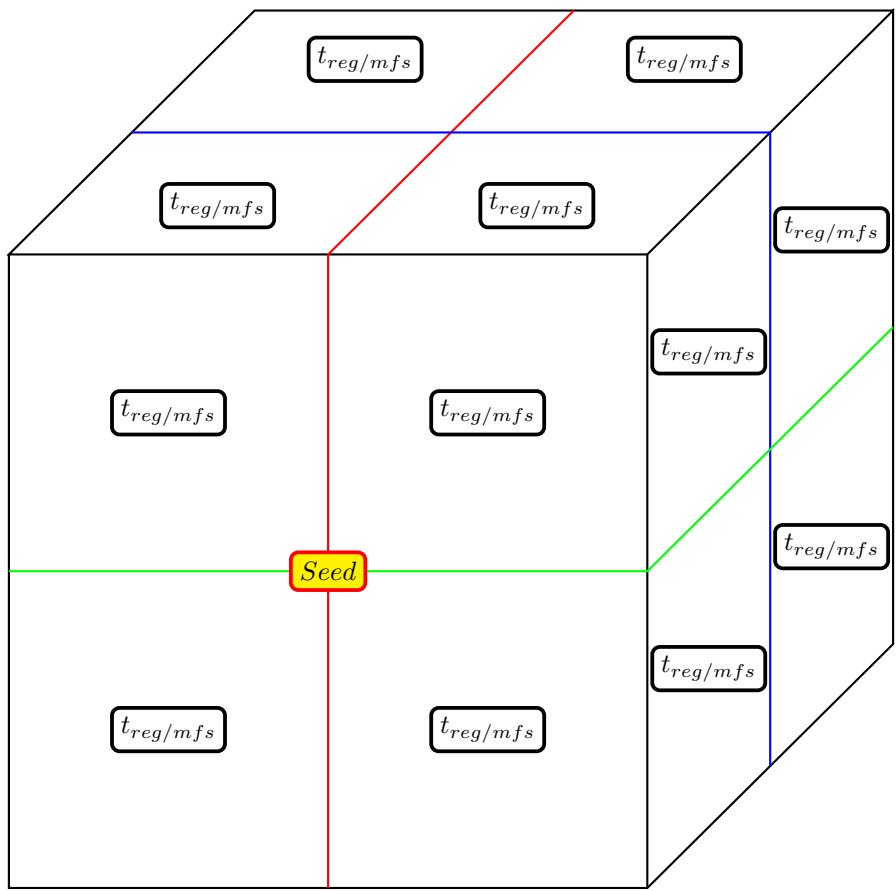

\begin{figure}[h!]
 
\begin{center}
\begin{tikzpicture}[scale=2.5]
  \draw[thick](2,2,0)--(0,2,0)--(0,2,2)--(2,2,2)--(2,2,0)--(2,0,0)--(2,0,2)--(0,0,2)--(0,2,2);
  \draw[thick](2,2,2)--(2,0,2);

  \draw[red,thick](1,2,0)--(1,2,2)--(1,0,2);
 
    \draw[blue,thick](0,2,1)--(2,2,1)--(2,0,1);

    \draw[green,thick](0,1,2)--(2,1,2)--(2,1,0);

  \node[line width=0.5mm,rectangle, minimum height=0.5cm,minimum width=0.5cm,fill=white!70,rounded corners=1mm,draw=red, label]  at (1,0.5,2) {$\textcolor{cyan}{t_{mfs} }$};
  
  \node[line width=0.5mm,rectangle, minimum height=0.5cm,minimum width=0.5cm,fill=white!70,rounded corners=1mm,draw=red, label]  at (1,1.5,2) {$\textcolor{cyan}{t_{mfs} }$};

  \node[line width=0.5mm,rectangle, minimum height=0.5cm,minimum width=0.5cm,fill=yellow,rounded corners=1mm,draw=red, label]  at (1,1,2) {$\textcolor{black}{Seed }$};

           \node[line width=0.5mm,rectangle, minimum height=0.5cm,minimum width=0.5cm,fill=white!70,rounded corners=1mm,draw=blue, label]  at (1.5,2,1) {$\textcolor{cyan}{t_{mfs} }$};
           
             \node[line width=0.5mm,rectangle, minimum height=0.5cm,minimum width=0.5cm,fill=white!70,rounded corners=1mm,draw=blue, label]  at (0.5,2,1) {$\textcolor{cyan}{t_{mfs} }$};

    \node[line width=0.5mm,rectangle, minimum height=0.5cm,minimum width=0.5cm,fill=white!70,rounded corners=1mm,draw=green, label]  at (1.5,1,2) {$\textcolor{cyan}{t_{mfs}}$};

    \node[line width=0.5mm,rectangle, minimum height=0.5cm,minimum width=0.5cm,fill=white!70,rounded corners=1mm,draw=green, label]  at (0.5,1,2) {$\textcolor{cyan}{t_{mfs} }$};

           \node[line width=0.5mm,rectangle, minimum height=0.5,minimum width=0.5cm,fill=white!70,rounded corners=1mm,draw=blue, label]  at (2,0.5,1) {$\textcolor{cyan}{t_{mfs} }$};
           
             \node[line width=0.5mm,rectangle, minimum height=0.5,minimum width=0.5cm,fill=white!70,rounded corners=1mm,draw=blue, label]  at (2,1.5,1) {$\textcolor{cyan}{t_{mfs} }$};

    \end{tikzpicture}

\end{center}
    \caption{The places on an order-1 cuboid where, if a tunnel is placed there, a tile of type $t_{mfs}$ appears.}
    \label{fig:tmfs}
\end{figure}
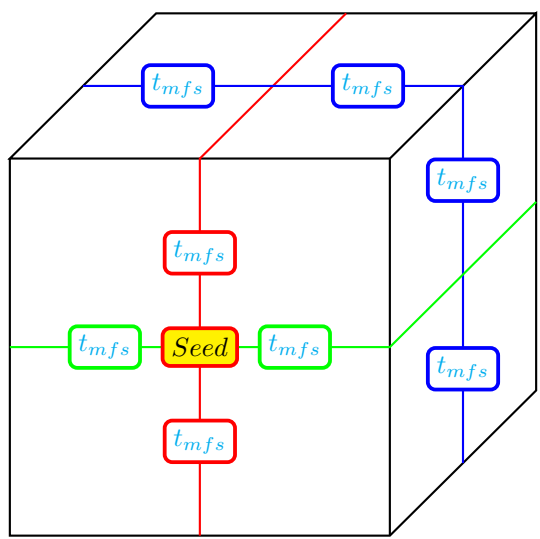

\begin{figure}[h!]
 
\begin{center}
\begin{tikzpicture}[scale=2.5]
  \draw[thick](2,2,0)--(0,2,0)--(0,2,2)--(2,2,2)--(2,2,0)--(2,0,0)--(2,0,2)--(0,0,2)--(0,2,2);
  \draw[thick](2,2,2)--(2,0,2);

  \draw[red,thick](1,2,0)--(1,2,2)--(1,0,2);
    \draw[green,thick](0,1,2)--(2,1,2)--(2,1,0);
 
  \node[line width=0.5mm,rectangle, minimum height=0.5cm,minimum width=0.5cm,fill=yellow,rounded corners=1mm,draw=red, label]  at (1,1,2) {$\textcolor{black}{Seed }$};
  
    \node[line width=0.5mm,rectangle, minimum height=0.5cm,minimum width=0.5cm,fill=white!70,rounded corners=1mm,draw=red, label]  at (1,2,1.55) {$\textcolor{orange}{t_{ibc} }$};
    
      \node[line width=0.5mm,rectangle, minimum height=0.5,minimum width=0.5cm,fill=white!70,rounded corners=1mm,draw=red, label]  at (1,2,0.45) {$\textcolor{orange}{t_{ibc} }$};

        \node[line width=0.5mm,rectangle, minimum height=0.5cm,minimum width=0.5cm,fill=white!70,rounded corners=1mm,draw=red, label]  at (1,2,1) {$\textcolor{orange}{T_{ibc} }$};
    
      \node[line width=0.5mm,rectangle, minimum height=0.5cm,minimum width=0.5cm,fill=white!70,rounded corners=1mm,draw=green, label]  at (2,1,1.5) {$\textcolor{orange}{t_{ibc} }$};
      
       \node[line width=0.5mm,rectangle, minimum height=0.5cm,minimum width=0.5cm,fill=white!70,rounded corners=1mm,draw=green, label]  at (2,1,0.5) {$\textcolor{orange}{t_{ibc} }$};
       
        \node[line width=0.5mm,rectangle, minimum height=0.5mm,minimum width=0.5cm,fill=white!70,rounded corners=1mm,draw=green, label]  at (2,1,0.95) {$\textcolor{orange}{t_{ibc} }$};
    
    \end{tikzpicture}

\end{center}
    \caption{The places on $C$ where $t_{ibc}$ displays the presence of a tunnel on $C$. 
  Note that the tunnel appears by $t_{ibc}$ also when the seed is inside the tunnel, since up to topological isomorphism, it is the same case}
    \label{fig:tibc}
\end{figure}
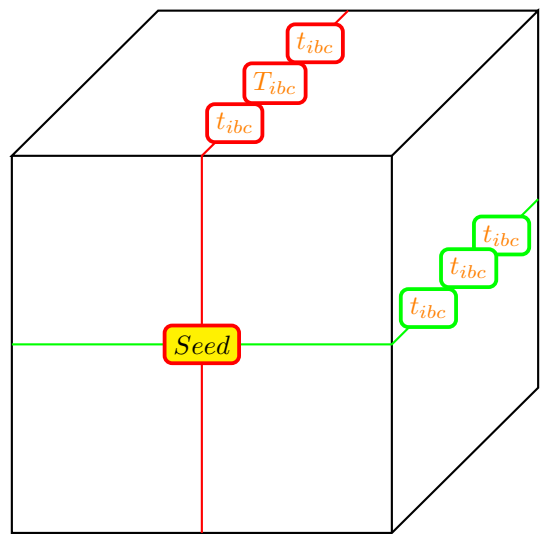

\begin{figure}[h!]
 
\begin{center}
\begin{tikzpicture}[scale=4.2]
  \draw[thick](2,2,0)--(0,2,0)--(0,2,2)--(2,2,2)--(2,2,0)--(2,0,0)--(2,0,2)--(0,0,2)--(0,2,2);
  \draw[thick](2,2,2)--(2,0,2);

  \draw[red,thick](1,2,0)--(1,2,2)--(1,0,2);
 
    \draw[blue,thick](0,2,1)--(2,2,1)--(2,0,1);

    \draw[green,thick](0,1,2)--(2,1,2)--(2,1,0);

  \node[line width=0.5mm,rectangle, minimum height=0.5cm,minimum width=0.5cm,fill=white!70,rounded corners=1mm,draw=red, label]  at (1,0.5,2) {$\textcolor{cyan}{t_{mfs} }$};
  
  \node[line width=0.5mm,rectangle, minimum height=0.5cm,minimum width=0.5cm,fill=white!70,rounded corners=1mm,draw=red, label]  at (1,1.5,2) {$\textcolor{cyan}{t_{mfs} }$};

  \node[line width=0.5mm,rectangle, minimum height=0.5cm,minimum width=0.5cm,fill=yellow,rounded corners=1mm,draw=red, label]  at (1,1,2) {$\textcolor{black}{Seed }$};
  
    \node[line width=0.5mm,rectangle, minimum height=0.5cm,minimum width=0.5cm,fill=white!70,rounded corners=1mm,draw=red, label]  at (1,2,1.55) {$\textcolor{orange}{T_{ibc} }$};
    
      \node[line width=0.5mm,rectangle, minimum height=0.5,minimum width=0.5cm,fill=white!70,rounded corners=1mm,draw=red, label]  at (1,2,0.45) {$\textcolor{orange}{T_{ibc} }$};

        \node[line width=0.5mm,rectangle, minimum height=0.5cm,minimum width=0.5cm,fill=white!70,rounded corners=1mm,draw=red, label]  at (1,2,1) {$\textcolor{orange}{T_{ibc} }$};
        
           \node[line width=0.5mm,rectangle, minimum height=0.5cm,minimum width=0.5cm,fill=white!70,rounded corners=1mm,draw=blue, label]  at (1.5,2,1) {$\textcolor{cyan}{t_{mfs} }$};
           
             \node[line width=0.5mm,rectangle, minimum height=0.5cm,minimum width=0.5cm,fill=white!70,rounded corners=1mm,draw=blue, label]  at (0.5,2,1) {$\textcolor{cyan}{t_{mfs} }$};
             
              \node[line width=0.5mm,rectangle, minimum height=0.5cm,minimum width=0.5cm,fill=white!70,rounded corners=1mm,draw=black, label]  at (0.5,2,0.40) {$\textcolor{black}{t_{reg/ mfs} }$};
              
               \node[line width=0.5mm,rectangle, minimum height=0.5cm,minimum width=0.5cm,fill=white!70,rounded corners=1mm,draw=black, label]  at (1.5,2,0.40) {$\textcolor{black}{t_{reg/ mfs} }$};
               
               \node[line width=0.5mm,rectangle, minimum height=0.5cm,minimum width=0.5cm,fill=white!70,rounded corners=1mm,draw=black, label]  at (1.5,2,1.60) {$\textcolor{black}{t_{reg/ mfs} }$};
               
                 \node[line width=0.5mm,rectangle, minimum height=0.5cm,minimum width=0.5cm,fill=white!70,rounded corners=1mm,draw=black, label]  at (0.5,2,1.60) {$\textcolor{black}{t_{reg/ mfs} }$};
  
    \node[line width=0.5mm,rectangle, minimum height=0.5cm,minimum width=0.5cm,fill=white!70,rounded corners=1mm,draw=green, label]  at (1.5,1,2) {$\textcolor{cyan}{t_{mfs}}$};

    \node[line width=0.5mm,rectangle, minimum height=0.5cm,minimum width=0.5cm,fill=white!70,rounded corners=1mm,draw=green, label]  at (0.5,1,2) {$\textcolor{cyan}{t_{mfs} }$};

      \node[line width=0.5mm,rectangle, minimum height=0.5cm,minimum width=0.5cm,fill=white!70,rounded corners=1mm,draw=green, label]  at (2,1,1.5) {$\textcolor{orange}{T_{ibc} }$};
      
       \node[line width=0.5mm,rectangle, minimum height=0.5cm,minimum width=0.5cm,fill=white!70,rounded corners=1mm,draw=green, label]  at (2,1,0.5) {$\textcolor{orange}{T_{ibc} }$};
       
        \node[line width=0.5mm,rectangle, minimum height=0.5mm,minimum width=0.5cm,fill=white!70,rounded corners=1mm,draw=green, label]  at (2,1,0.95) {$\textcolor{orange}{T_{ibc} }$};

          \node[line width=0.5mm,rectangle, minimum height=0.5mm,minimum width=0.5cm,fill=white!70,rounded corners=1mm,draw=black, label]  at (2,0.5,0.5) {$\textcolor{black}{t_{reg/mfs }}$};
          
          \node[line width=0.5mm,rectangle, minimum height=0.5mm,minimum width=0.5cm,fill=white!70,rounded corners=1mm,draw=black, label]  at (2,1.5,0.5) {$\textcolor{black}{t_{reg/mfs }}$};
          
            \node[line width=0.5mm,rectangle, minimum height=0.5mm,minimum width=0.5cm,fill=white!70,rounded corners=1mm,draw=black, label]  at (2,1.5,1.5) {$\textcolor{black}{t_{reg/mfs }}$};
            
             \node[line width=0.5mm,rectangle, minimum height=0.5mm,minimum width=0.5cm,fill=white!70,rounded corners=1mm,draw=black, label]  at (2,0.5,1.5) {$\textcolor{black}{t_{reg/mfs }}$};

         \node[line width=0.5mm,rectangle, minimum height=0.5,minimum width=0.5cm,fill=white!70,rounded corners=1mm,draw=black, label]  at (0.5,1.5,2) {$\textcolor{black}{t_{reg/ mfs} }$};
         
           \node[line width=0.5mm,rectangle, minimum height=0.5,minimum width=0.5cm,fill=white!70,rounded corners=1mm,draw=black, label]  at (1.5,1.5,2) {$\textcolor{black}{t_{reg/ mfs} }$};
         
          \node[line width=0.5mm,rectangle, minimum height=0.5,minimum width=0.5cm,fill=white!70,rounded corners=1mm,draw=black, label]  at (1.5,0.5,2) {$\textcolor{black}{t_{reg/ mfs} }$};
          
          \node[line width=0.5mm,rectangle, minimum height=0.5,minimum width=0.5cm,fill=white!70,rounded corners=1mm,draw=black, label]  at (0.5,0.5,2) {$\textcolor{black}{t_{reg/ mfs} }$};

           \node[line width=0.5mm,rectangle, minimum height=0.5,minimum width=0.5cm,fill=white!70,rounded corners=1mm,draw=blue, label]  at (2,0.5,1) {$\textcolor{cyan}{t_{mfs} }$};
           
             \node[line width=0.5mm,rectangle, minimum height=0.5,minimum width=0.5cm,fill=white!70,rounded corners=1mm,draw=blue, label]  at (2,1.5,1) {$\textcolor{cyan}{t_{mfs} }$};

    \end{tikzpicture}

\end{center}
    \caption{The places on a cuboid where, if there is a tunnel, a tile of $Y$ must appear in the assembly.}
    \label{fig:treg-tmfs-tibc}
\end{figure}
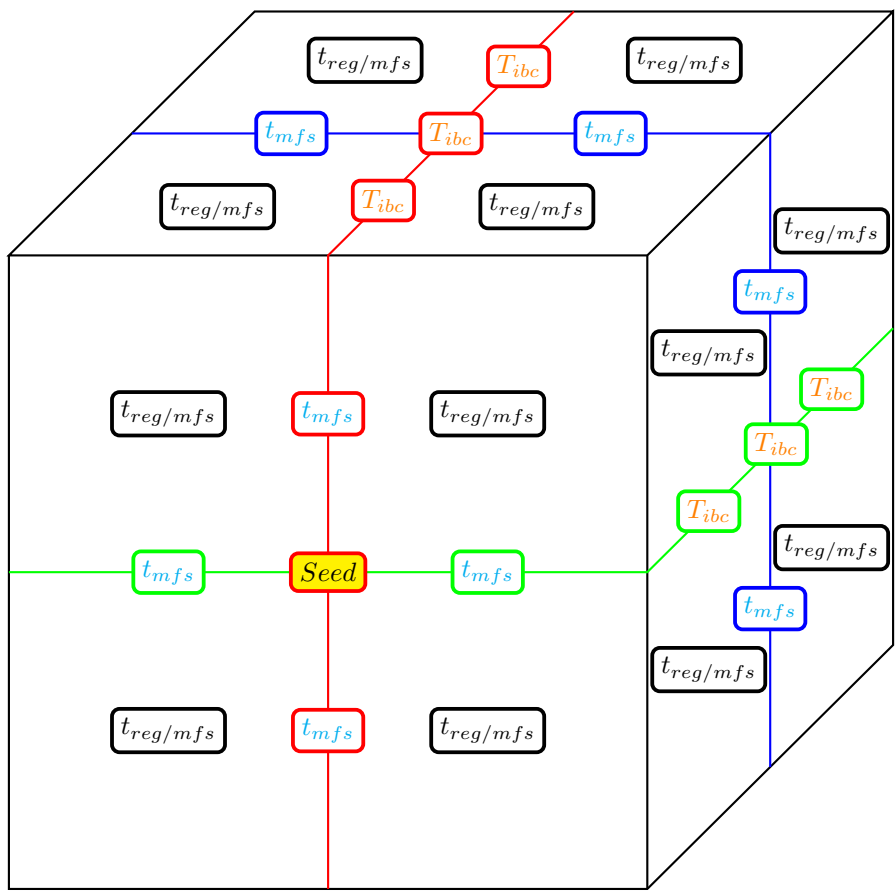

\medskip

\begin{lemma}\label{thm2}
Let $C$ be an order-1 cuboid. If one tile of Y=$\{t_{reg}\} \cup \{ t_{mfs} \}\cup T_{ibc} \subseteq T$ exists in a terminal assembly of $\mathcal{S_G}$ on $C$ starting from a seed in a normal placement, there is a tunnel on $C$.
\end{lemma}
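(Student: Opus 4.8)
The plan is to prove the contrapositive: if $C$ has genus $0$ then no tile of $Y$ occurs in any producible (hence in any terminal) assembly of $\mathcal{S_G}$. A genus-$0$ order-$1$ cuboid is either an order-$0$ cuboid, a cuboid with a pit, or one with a concavity, and in each case $\partial C$ is homeomorphic to the sphere $S^2$. The key structural input is Lemma~\ref{skeleton} together with the inner-filling lemmas, which show that on an order-$0$ cuboid the skeleton $R_X\cup R_Y\cup R_Z$ is an arrangement of simple closed ribbons that cut $\partial C$ into regions admitting the $2$-colouring realised by $t_{even}$ and $t_{odd}$.

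First I would observe that each of the three families of $Y$-tiles can only be placed at a specific \emph{collision}, reading this directly off the glue labels. The tile $t_{reg}=(z_e,\epsilon,z_o,\epsilon)$ carries exactly the filling glues $z_e$ and $z_o$ of $t_{even}$ and $t_{odd}$ and no other non-null glue, so it can bind only at an empty facet flanked simultaneously by an even filling tile and an odd filling tile. The tile $t_{mfs}=(z_e,z_o,z_e,z_o)$ can bind only where two $R_Z$ frame ribbons meet head-on, equivalently where a filling line is blocked orthogonally by a line from the opposite side. And by definition every tile of $T_{ibc}$ pairs a label $i_2$ from an eastern $IBC1$ system with a label $i_4$ from a western $IBC1$ system, so it can bind only where the eastern and western $R_Y$ ribbons meet. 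Thus it suffices to show that on a genus-$0$ surface none of these three collisions can occur.

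Second, I would rule out each collision using the fact that on $S^2$ every simple closed curve separates, whereas each collision would force a non-separating configuration. For $T_{ibc}$: $R_X$ is a simple closed loop, so it separates $\partial C$ into a left and a right part; the curve $P_Y\cap\partial C$ meets $R_X$ in exactly two points and is thereby split into an eastern and a western arc, along which the two $R_Y$ middle-finding ribbons travel. Each ribbon therefore reaches $R_X$ (its ``finish'' block) before it can reach the other, so the eastern and western $IBC1$ systems never become adjacent and no $T_{ibc}$ tile is placed; the two ribbons could join without an intervening crossing of $R_X$ only if $P_Y$ ran through a tunnel, i.e.\ if the genus were $1$. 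The same separation argument, applied to $P_Z\cap\partial C$ and the four $R_Z$ ribbons emanating from the middles of $R_Y$, shows that each $R_Z$ ribbon terminates on $R_X$ rather than colliding with another $R_Z$ ribbon, so $t_{mfs}$ cannot appear. Finally, $t_{reg}$ would require an even and an odd filling tile to be adjacent across a single facet; but the regions cut out by the skeleton on $S^2$ form a $2$-colourable adjacency graph (the classical two-colourability of an arrangement of closed curves on the sphere), which is exactly the colouring carried by $t_{even}$ and $t_{odd}$, so two opposite-parity fillings are always separated by a ribbon and never touch. Combining these, on a genus-$0$ $C$ no collision of the three required kinds occurs, hence no tile of $Y$ is ever placed, and the contrapositive yields the lemma.

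The main obstacle I anticipate is the pit and concavity cases, where the plane sections $P_i\cap\partial C$ and the resulting ribbon dynamics are no longer the clean single loops of the order-$0$ case (compare Fig.~\ref{concavity}), so I would need to verify that the extra facets created by a pit or a concavity neither break the global $2$-colouring nor let two ribbons meet. I expect to handle this by noting that a pit or a concavity is a simply-connected indentation, so collapsing it is a homeomorphism of $\partial C$ onto the order-$0$ sphere that carries the skeleton to a separating arrangement; the separation and two-colourability arguments then transfer verbatim, and only an order-$1$ tunnel, whose complement in the relevant plane section is not simply connected, can produce a non-separating section and hence a tile of $Y$.
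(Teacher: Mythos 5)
Your proposal is correct and follows essentially the same route as the paper: the same case analysis over the three tile families of $Y$ (each can attach only at a specific collision of the $R_Y$ ribbons, the $R_Z$ ribbons, or the even/odd filling lines, read off from the glue labels), and the same underlying topological fact that closed curves separate a genus-$0$ surface, so none of these collisions can occur without a tunnel. The only difference is expository rather than mathematical: you argue the contrapositive (genus $0$ $\Rightarrow$ separation $\Rightarrow$ no collision $\Rightarrow$ no tile of $Y$), whereas the paper argues directly (tile of $Y$ $\Rightarrow$ closed ribbons crossing $R_X$ an odd number of times, or fewer than $8$ regions $\Rightarrow$ not homeomorphic to a sphere).
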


\begin{proof}

Firstly, if a tile of type $t_{reg}$ is in the terminal assembly, its common labels with both inner filling tile types $t_{odd}$ and $t_{even}$ shows that at least two regions are connected i.e. there are at most $7$ distinct regions on $C$. Recall that by Section~\ref{genus0},  a terminal assembly of $\mathcal{S_G}$ on an order-1 cuboid with genus~0 partitions the cuboid into 8 distinct regions. Therefore, $C$ cannot have genus 0 and  there is a tunnel between these two regions. 

Secondly, if a tile of type $t_{mfs}$ exists in a terminal assembly on $C$,  two cases are possible. In one case there is a tunnel that intersect only $P_X$ with width and $t_{even}$ and $t_{odd}$ intersect perpendicularly each other and as a result $t_{mfs}$ appears in the assembly.
In the other case, two  ribbons of $R_Z$ must meet each other since the tiles whose labels correspond to the labels of $t_{mfs}$ are those of the $R_Z$ ribbons. Recall that the $R_X$ and $R_Y$ ribbons intersect at two places : one at the seed (since $R_Y$ grows out of $R_X$) and a second time, where the tiles of type
$t_{eu}$, 
$t_{ed}$, $t_{wu}$ or  $t_{wd}$
 appear in the assembly as the row tile number 1, in the second IBC system of the middle finding systems.
The two ribbons of $R_Z$, together with the parts of the middle finding system located between the second intersection of $R_X$ and $R_Y$ on the one hand, 
and $R_Z$ on the other hand, form a closed ribbon on the surface of $C$ (highlighted in green and blue Fig.~\ref{fig:planePZ}). This ribbon and $R_X$ pass through each other perpendicularly at only one place.  Since they pass through each other perpendicularly, it can be concluded that the cuboid $C$ cannot be topologically homeomorphic to the sphere, or in other words, be a genus 0 cuboid and a tunnel must exist.

Lastly, assume that a tile of one of the $T_{ibc}$ types appears in the assembly. Note that two labels $i_2$ and $i_4$ of every tiles of one of the $T_{ibc}$ types are the same as those of the  tiles of the $IBC1$ systems in $R_Y$ ribbons that are located opposite of each other (and no other tiles). Thus, the ribbons of $R_Y$ must collide, so that there is a tile of one of the $T_{ibc}$ types in the assembly and they do not reach the  $R_X$. 
Therefore, the ribbons $R_X$ and $R_Y$ have no intersection except at the seed. Since they pass through each other perpendicularly, as in the previous case, $C$ cannot be topologically homeomorphic to a sphere. Therefore, a tunnel must exist so that $R_X$ and $R_Y$ do not intersect in two places. 
\end{proof}

Furthermore, notice to the cases that  $\mathcal{S_G}$ assembles on an order-1 cuboid   $C\in O_1 ^c$ (the order-1 cuboids with concavity whose genus is $0$), or $C \in O_1^p$ (the order-1 cuboids with a pit whose genus is $0$). 
In these cases, the assembly's process is similar to the assembly on order-0 cuboids. The frame ribbons form completely by the assumption that the seed is located on a normal placement of $C$, the assembly's skeleton is formed completely and separates $C$ into $8$ distinct regions, and
the insides of the regions are tiled independently by inner filling lines of tiles of types $t_{odd}$ and $t_{even}$. 
However, in the case of $O_1 ^c$, the regions do not necessarily meet edge to edge, see Fig.~\ref{concavity} for an illustration.

After all that we have been through we are now ready to prove Theorem~\ref{maintheorem}. 
\begin{proof}
[Proof of Theorem~\ref{maintheorem}]
Let $C=C_0 \setminus C'_0 \in O_1$ be an order-1 cuboid  with the dimensions
at least 10 for $C'_0$ and  $\alpha$ be an assembly of the TAS $\mathcal {S_G} = (\Sigma, T, \sigma, str, \tau)$ such that its seed is placed at a normal placement.
Note that if $C_0$ is too small there is no normal placement. 
According to Corollary~\ref{thm1} and Lemma~\ref{thm2}, there is a tile type from $Y=\{t_{reg}\} \cup \{ t_{mfs} \}\cup T_{ibc} \subseteq T$ in all terminal assemblies of $\mathcal{S_G}$ on $C$ if and only if there is tunnel on $C$ (i.e. its genus is $1$).
\end{proof}

\section{Conclusion }\label{future}
We have introduced our new model, SFTAM, to perform tile self-assembly on 3D surfaces. We have shown that we can use self-assembly to determine the genus of a given surface. For this, we have worked on a simple and special family $O_1$ of polycubes, the order-1 cuboids.
		
It would be interesting to extend our results to a larger family of polycubes. Here, the middle finding system was used to detect the tunnel on the order-1 cuboids. However, for more complicated surfaces one needs to ensure that some part of the construction does go through the tunnel, and that it can be differentiated from the tiles it meets on the other side. The idea of having regions distinct identities can be reused in this context, but the Middle Finding System needs to be supplemented or replaced.


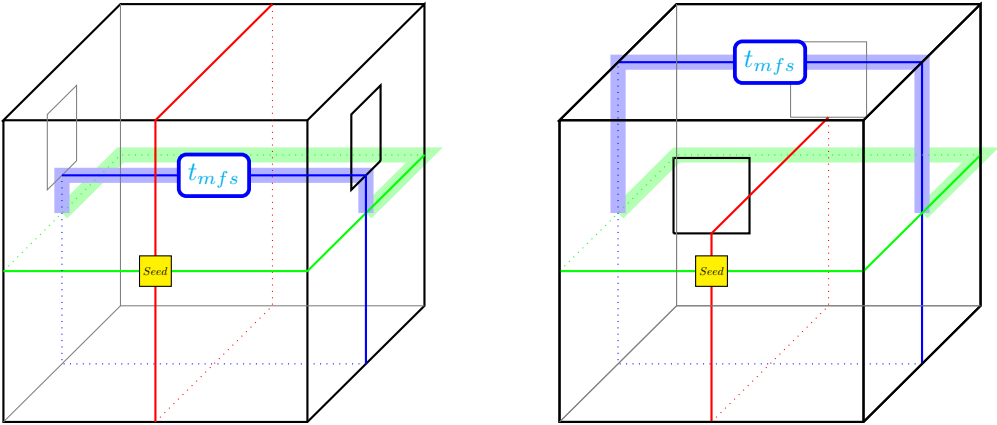
\begin{figure}[h!]
    \centering
    
\begin{subfigure}[t]{.45\textwidth}
\begin{tikzpicture}[scale=2]
 \draw[line width=2mm,green!30](0,1,1)--(0,1,0)--(2,1,0)--(2,1,1);
 \draw[line width=2mm,blue!30](0,1,1)--(0,1.25,1)--(2,1.25,1)--(2,1,1);

  \draw[thick](2,2,0)--(0,2,0)--(0,2,2)--(2,2,2)--(2,2,0)--(2,0,0)--(2,0,2)--(0,0,2)--(0,2,2);
  \draw[thick](2,2,2)--(2,0,2);
  \draw[gray](2,0,0)--(0,0,0)--(0,2,0);
  \draw[gray](0,0,0)--(0,0,2);

  \draw[red, dotted](1,0,2)--(1,0,0)--(1,2,0);

   \draw[green,thick](0,1,2)--(2,1,2)--(2,1,0);
  \draw[green, dotted](2,1,0)--(0,1,0)--(0,1,2);

  \draw[thick](2,1.25,0.75)--(2,1.75, 0.75)--(2,1.75,1.25)--(2,1.25,1.25)--(2,1.25,0.75);
 \draw[gray](0,1.25,0.75)--(0,1.75, 0.75)--(0,1.75,1.25)--(0,1.25,1.25)--(0,1.25,0.75);
  
    \draw[blue,thick](0,1.25,1)--(2,1.25,1)--(2,0,1);
  \draw[blue,dotted](2,0,1)--(0,0,1)--(0,1.25,1);

    \draw[red,thick](1,2,0)--(1,2,2)--(1,0,2);
  
       \node[line width=0.5mm,rectangle, minimum height=0.5,minimum width=0.5cm,fill=white!70,rounded corners=1mm,draw=blue, label]  at (1,1.25,1){$\textcolor{cyan}{t_{mfs} }$};

   \node[inner ysep=10pt,rectangle,draw,fill=yellow,scale=0.15mm,label]  at (1,1,2) {$\textcolor{black}{Seed}$};
 \end{tikzpicture}
 \end{subfigure}
 \begin{subfigure}[t]{.45\textwidth}
   \begin{tikzpicture}[scale=2]
  \draw[line width=2mm,green!30](0,1,1)--(0,1,0)--(2,1,0)--(2,1,1);
   \draw[line width=2mm,blue!30](0,1,1)--(0,2,1)--(2,2,1)--(2,1,1);
  \draw[thick](2,2,0)--(0,2,0)--(0,2,2)--(2,2,2)--(2,2,0)--(2,0,0)--(2,0,2)--(0,0,2)--(0,2,2);
  \draw[thick](2,2,2)--(2,0,2);
  \draw[gray](2,0,0)--(0,0,0)--(0,2,0);
  \draw[gray](0,0,0)--(0,0,2);
  
  \draw[thick](0.75, 1.25, 2)--(0.75, 1.75,2)--(1.25, 1.75, 2)--(1.25, 1.25, 2)--(0.75, 1.25, 2);
  \draw[gray](0.75, 1.25, 0)--(0.75, 1.75,0)--(1.25, 1.75, 0)--(1.25, 1.25, 0)--(0.75, 1.25, 0);

  \draw[red,thick](1,1.25,0)--(1,1.25,2)--(1,0,2);
  
  \draw[red, dotted](1,0,2)--(1,0,0)--(1,1.25,0);
  
    \draw[blue,thick](0,2,1)--(2,2,1)--(2,0,1);
  \draw[blue, dotted](2,0,1)--(0,0,1)--(0,2,1);
  
       \node[line width=0.5mm,rectangle, minimum height=0.5,minimum width=0.5cm,fill=white!70,rounded corners=1mm,draw=blue, label]  at (1,2,1){$\textcolor{cyan}{t_{mfs} }$};

    \draw[green,thick](0,1,2)--(2,1,2)--(2,1,0);
  \draw[green, dotted](2,1,0)--(0,1,0)--(0,1,2);

\node[inner ysep=10pt,rectangle,draw,fill=yellow,scale=0.15mm,label]  at (1,1,2) {$\textcolor{black}{Seed}$};

\draw[thick](2,2,0)--(0,2,0)--(0,2,2)--(2,2,2)--(2,2,0)--(2,0,0)--(2,0,2)--(0,0,2)--(0,2,2);
  \draw[thick](2,2,2)--(2,0,2);
  \draw[gray](2,0,0)--(0,0,0)--(0,2,0);
  \draw[gray](0,0,0)--(0,0,2);
    \end{tikzpicture}
    \end{subfigure}

    \caption{The closed ribbon formed by parts of the middle finding system (green) and the two ribbons of $R_Z$ (blue), when two $R_Z$ ribbons meet each other instead of reaching $R_X$. They meet the red ribbon $R_X$ only once.}
    \label{fig:planePZ}
\end{figure}


\begin{thebibliography}{000}
		

\bibitem{Abel} Z.Abel, N. Benbernou, M. Damian, E. D. Demaine, M. L. Demaine, R.Flatland, S. Kominers, and Robert Schweller. Shape replication through self-assembly and RNase enzymes. Proceedings of the \emph{21st Annual ACM-
SIAM Symposium on Discrete Algorithms (SODA
2010)}, pages 1045–1064, Austin, Texas, January
17–19 2010.
		
		\bibitem{polycubes} G. Aloupis, P. Bose, S. Collette, E. D. Demaine, M. L. Demaine, L. Dou\"{i}eb, V. Dujmovi\'{c}, J. Iacono, S. Langerman and P. Morin. Common unfoldings of polyominoes and polycubes. Proceedings of the  \emph{International Conference on Computational Geometry, Graphs and Applications}, CGGA'10. \emph{Lecture Notes in Computer Science} 7033:44--54, 2010.
		
		\bibitem{crystals} R. D. Barish, R. Schulman, P. W. K. Rothemund and E. Winfree. An information-bearing seed for nucleating algorithmic self-assembly. \emph{Proceedings of the National Academy of Sciences} 106(15):6054--6059, 2009.
		
		\bibitem{maze} M. Cook, T. St\'erin and D. Woods. Small tile sets that compute while solving mazes. Proceedings of the \emph{27th International Conference on DNA Computing and Molecular Programming (DNA 27)}, \emph{LIPIcs} 205, 8:1--8:20, 2021.
		
		\bibitem{FTAM} J. Durand-Lose, J. Hendricks, M. J. Patitz, I. Perkins and M. Sharp. Self-assembly of 3-D structures using 2-D folding tiles. \emph{Natural Computing} 19, 337--355, 2020.
		
		
		\bibitem{seeman} M. B. Jones, N. C. Seeman and C. A. Mirkin. Programmable Materials and the Nature of the DNA Bond. \emph{Science} 347:840--840, 2015.

        \bibitem{3D2001} M. Y. Kao and V. Ramachandran. DNA self-assembly for constructing 3D boxes. Proceedings of the \emph{12th International Symposium on Algorithms and Computation}, ISAAC'01, Lecture Notes in Computer Science 2223:429--441, 2001.
		
		\bibitem{crystals2} W. Liu, H., Zhong, R., Wang and N. C. Seeman. Crystalline two-dimensional DNA-origami arrays. \emph{Angewandte Chemie International Edition} 50(1):264--267, 2011.

		\bibitem{Survey} M. J. Patitz. An introduction to tile-based self-assembly and a survey of recent results.\emph{Natural Computing} 13, 195--224, 2014.

        \bibitem{2D} P. W. K. Rothemund. Folding DNA to create nanoscale shapes and patterns. \emph{Nature} 440(7082):297--302, 2006.

        \bibitem{BinaryCounter} P. W. K. Rothemund and E. Winfree. The Program-Size Complexity of Self-Assembled Squares. Proceedings of the \emph{Thirty-Second Annual ACM Symposium on Theory of Computing}, STOC 2000, 459--468, 2000.

   \bibitem{Hydrogel} V. A. Sontakke and Y. Yokobayashi. Programmable Macroscopic Self-Assembly of DNA-Decorated Hydrogels. \emph{Journal of the American Chemical Society} 144 (5), 2149-2155,  2022.
   
    	\bibitem{SeemanBIB}	DNA Nanotechnology: Bibliography from Ned Seeman's Laboratory. \url{http://seemanlab4.chem.nyu.edu/nanobib.html}


        \bibitem{wang} H. Wang. Proving theorems by pattern recognition -- II. \emph{Bell System Technical Journal} 40 (1):1--41, 1961.

        \bibitem{WinfreeThesis} E. Winfree. \emph{Algorithmic Self-Assembly of DNA.} Ph.D.thesis, California Institute of Technology, 1998.
        
        


         
        \bibitem{Shape-assisted} J.F. Woods,  L. Gallego, P. Pfister,  M. Maaloum,   A.V.  Jentzsch, and  M. Rickhaus. \emph{Shape-assisted self-assembly.} Nat. Commun.  13, 368 , 2022.
        
           \bibitem{Fabrication} S.Zhung,  \emph{ Fabrication of novel biomaterials through molecular self-assembly.} Nat Biotechnol 21, 1171–1178 , 2003.
        
        
        \bibitem{origami} R. Zhuo, F. Zhou, X. He, R. Sha, N. C. Seeman and P. M. Chaikin. Litters of self-replicating origami cross-tiles. \emph{Biophysics and Computational Biology} 116 (6) 1952--1957, 2019.
        
	\end{thebibliography}
\end{document}